\documentclass[sigconf,nonacm]{acmart}

\AtBeginMaketitle{%

\def\EDBTISSN{xxxx-xxxx}
\def\EDBTISBN{xxx-x-xxxxx-xxx-x}
\setcopyright{none}
\copyrightyear{\copyright\ 2027 Copyright held by the owner/author(s).
  Published on xxxxx under ISBN \EDBTISBN, series ISSN \EDBTISSN.
  Distribution of this paper is permitted under the terms of the
  Creative Commons license CC-by-nc-nd 4.0}
\acmYear{2027}
\acmDOI{}
\acmISBN{}

\acmConference[EDBT '27]{Extending Database Technology}{6-9 April 2027}{Lille (France)}

\settopmatter{printacmref=false, printccs=false, printfolios=false}

\newsavebox{\ximagebox}
\newlength{\ximageheight}
\newsavebox{\xglyphbox}
\newlength{\xglyphheight}
\newcommand{\xbox}[1]%
  {\savebox{\ximagebox}{#1}%
  \settoheight{\ximageheight}{\usebox{\ximagebox}}%
  \savebox{\xglyphbox}{\color{white}\char32}%
  \settoheight{\xglyphheight}{\usebox{\xglyphbox}}%
  \raisebox{\ximageheight}[0pt][0pt]{\raisebox{-\xglyphheight}[0pt][0pt]{%
    \makebox[0pt][l]{\usebox{\xglyphbox}}}}%
    \usebox{\ximagebox}%
    \raisebox{0pt}[0pt][0pt]{\makebox[0pt][r]{\usebox{\xglyphbox}}}}

\newsavebox{\LogoBox}
\sbox{\LogoBox}{}

\fancypagestyle{firstpagestyle}{%
  \fancyhf{}%
  \fancyfoot{}%
  \fancyhead[L,C]{}
  \fancyhead[R]{}{\raisebox{15pt}[0pt][0pt]{\xbox{\usebox{\LogoBox}}}}
  }

\hypersetup{%
  pdfcopyright={CC-by-nc-nd},
  pdfcreator={LaTeX with acmart, hyperref, and EDBT modifications},
  pdfvolumenum={XX},
  pdfissuenum={X},
  pdfisbn={\EDBTISBN},
  pdfissn={\EDBTISSN}
}

}

\usepackage{enumitem}
\usepackage{tikz}
\usepackage{pgfplots}
\pgfplotsset{compat=1.18}
\usetikzlibrary{arrows.meta}
\usepackage{subcaption}
\usepackage{multirow}
\usepackage{xcolor} 
\usepackage{xspace}
\newcommand{\mypar}[1]{\vspace{2mm}\noindent\emph{#1}}
\newcommand{\Adult}{\texttt{Adult}\xspace}

\newcommand{\XORDS}{\texttt{XOR}\xspace}
\newcommand{\PrivPGM}{\texttt{Private-PGM}\xspace}

\newcommand{\PE}{\texttt{PE}\xspace}
\newcommand{\TabPE}{\texttt{Tab-PE}\xspace}
\newcommand{\BFEvo}{\texttt{Hard}\xspace}
\newcommand{\GraphSmooth}{\texttt{BF-Soft}\xspace}

\newcommand{\Hard}{\texttt{Hard}\xspace}

\newcommand{\HighOrderErr}{\texttt{HOErr}\xspace}
\newcommand{\PairwiseErr}{\texttt{PWErr}\xspace}

\newcommand{\Bank}{\texttt{Bank}\xspace}

\usepackage{algorithm}
\usepackage{algpseudocode}
\algnewcommand{\Input}[1]{\Statex \textbf{Input:} #1}
\algnewcommand{\Output}[1]{\Statex \textbf{Output:} #1}

\usepackage{dsfont}

\newcommand{\Wasp}{\texttt{WASP}\xspace}
\newcommand{\Secpe}{\texttt{SecPE}\xspace}
\newcommand{\Balldp}{\texttt{Ball-DP}\xspace}

\usepackage{placeins}
\usepackage{etoolbox}

\newif\ifextended 
\extendedtrue 
\newcommand{\appletter}[1]{%
  \ifstrequal{#1}{app:related}{A}{%
  \ifstrequal{#1}{app:pe-threshold}{B}{%
  \ifstrequal{#1}{app:algorithm}{C}{%
  \ifstrequal{#1}{app:categorical}{D}{%
  \ifstrequal{#1}{app:proofs}{E}{%
  \ifstrequal{#1}{app:experiments}{F}{%
  \ifstrequal{#1}{app:exp-policy}{F.1}{%
  \ifstrequal{#1}{app:exp-protocol}{F.2}{%
  \ifstrequal{#1}{app:exp-crossings}{F.3}{%
  \ifstrequal{#1}{app:exp-equal-noise}{F.4}{%
  \ifstrequal{#1}{app:exp-interior}{F.5}{%
  \ifstrequal{#1}{app:exp-xor-interaction}{F.6}{%
  \ifstrequal{#1}{app:exp-forecast}{F.7}{%
  \ifstrequal{#1}{app:exp-broader}{F.8}{%
  \ifstrequal{#1}{app:exp-trainsize}{F.9}{??}}}}}}}}}}}}}}}}
\newcommand{\appref}[1]{%
  \ifextended Appendix~\ref{#1}%
  \else Appendix~\appletter{#1} of~\cite{extendedversion}\fi}

\theoremstyle{acmdefinition}
\newtheorem{remark}{Remark}
\theoremstyle{acmplain}

\begin{document}

\title{Soft Voting for Policy-Aware Private Data Synthesis}


\author{Yingge Hu}
\orcid{XXXX-XXXX-XXXX-XXXX}
\email{yhu893@uwo.ca}
\affiliation{%
  \institution{Department of Computer Science, Western University}
  \city{London}
  \state{Ontario}
  \country{Canada}
}

\author{Gautham Ramesh Babu}
\orcid{XXXX-XXXX-XXXX-XXXX}
\email{grg2@uwo.ca}
\affiliation{%
  \institution{Department of Computer Science, Western University}
  \city{London}
  \state{Ontario}
  \country{Canada}
}

\author{Mostafa Milani}
\orcid{XXXX-XXXX-XXXX-XXXX}
\email{mostafa.milani@uwo.ca}
\affiliation{%
  \institution{Department of Computer Science, Western University}
  \city{London}
  \state{Ontario}
  \country{Canada}
}

\renewcommand{\shortauthors}{Hu et al.}

\begin{abstract}
Blowfish privacy relaxes differential privacy (DP) by protecting only
the attribute-value substitutions a data owner specifies as edges of a
policy graph. A sparser policy can reduce the noise required by a
mechanism, but only when the released statistic changes less across
protected substitutions than across arbitrary DP neighbors. We study
this question for evolutionary, nearest-neighbor DP synthesizers such
as Private Evolution (\PE) and its tabular instantiation \TabPE, which
score private records against a candidate population and release a
noisy vote histogram. Their \emph{hard} vote is constant inside each
candidate's decision region and jumps at its boundary. Its policy-specific sensitivity therefore equals the full worst-case
value whenever at least one protected substitution crosses a boundary,
regardless of how short that substitution is. Because every round we
examined contained such a substitution, the policy graph gave no
reduction in noise.

We propose \GraphSmooth, a temperature-smoothed \emph{soft} vote whose
response changes gradually with distance. Its sensitivity has a tight
closed-form bound in the policy graph's reach and the temperature,
independent of the number of candidates, and the bound can be computed
once before synthesis. It also predicts from the policy alone when
policy-aware smoothing cannot substantially reduce noise: protecting a
flat categorical or binary attribute drives the reach to its maximum.
On real and synthetic datasets under narrow numeric policies, BF-Soft reduces error relative to hard voting at strong privacy budgets, while the advantage reverses at weaker budgets. A public-data pilot predicts when soft voting is beneficial without spending private budget.
\end{abstract}

\keywords{Differential privacy, Blowfish privacy, Policy-aware privacy,
  Synthetic data generation, Private Evolution, Sensitivity analysis}

\maketitle

\section{Introduction}

Differentially private (DP) synthetic data lets a data owner share a
sensitive table while limiting how much the released output can change
when one individual's record changes~\cite{dwork2006calibrating}.
Under bounded substitution adjacency, every allowed replacement of one
record is protected by the same worst-case privacy guarantee. On large
or structured domains such as age, income, or ZIP codes, this
worst-case symmetry can contribute substantially to the utility gap
between private and non-private synthetic data.

Policy-based definitions relax this symmetry by allowing the data
owner to specify which distinctions require protection.
Pufferfish~\cite{kifer2014pufferfish} makes the protected secrets
explicit, and Blowfish privacy~\cite{he2014blowfish} instantiates this
idea for tabular data using a \emph{policy graph}: an edge between two
values marks a substitution that the mechanism must protect. A policy
might, for example, protect a change from age 34 to 35 while not
requiring the same protection between ages 34 and 70. A policy graph
therefore restricts the neighbor relation used to compute sensitivity,
so a release's policy-specific sensitivity never exceeds its standard
global sensitivity~\cite{he2014blowfish}. The inequality can,
however, be tight. A sparser graph improves utility only when the
released statistic changes less across protected edges than across
arbitrary DP neighbors. This occurs for structured linear workloads
such as range and marginal queries~\cite{haney2015design}, but not for
a function whose response remains at its worst case even on short
protected substitutions.

We study when this distinction matters for evolutionary,
nearest-neighbor DP synthesis. Rather than privately measuring a fixed
set of statistics, this paradigm reshapes a population of candidate
synthetic records round by round. Each real record \emph{votes} for a
candidate; the votes are aggregated into a histogram, noise is added,
and the most-supported candidates survive while the rest are replaced.
This voting step, introduced in Private
Evolution~\cite{lin2024privateevolution} (\PE) and extended to tabular
data by \TabPE~\cite{tran2026tabpe}, is where privacy noise enters.
Recent extensions modify the vote or the protection target along other
axes (Section~\ref{sec:related}), but they do not ask whether the
vote's response can reflect how close a protected substitution is
under a policy graph.

This question is important because a sparse policy graph is useful
only if the mechanism can exploit it. We therefore ask:
\emph{when does a Blowfish policy reduce the sensitivity of the
nearest-neighbor vote, and can the vote be redesigned so that its
sensitivity changes gradually with the policy?}

The answer depends on the vote itself, not on the mutation rule, the
number of rounds, or the domain. The standard \emph{hard} vote assigns
a record to its single nearest candidate. A protected substitution
therefore has only two effects: either the nearest candidate remains
the same, and the vote does not change, or the assignment crosses a
decision boundary, and the vote moves entirely to another candidate.
In the latter case the $\ell_2$ change is the full worst-case value
$\sqrt2$, regardless of how short the protected substitution is.
Thus the hard vote cannot exploit the \emph{degree} to which a policy
restricts substitutions: its policy-specific sensitivity is either
$0$ or $\sqrt2$. In every round we checked in our experiments, some
protected edge crossed a decision boundary, so the policy graph
reduced no noise.


We replace the hard vote with \GraphSmooth, a temperature-smoothed \emph{soft} vote that splits each record's vote across candidates according to a softmax of negative distance. A short substitution then shifts only a small part of the vote, while larger substitutions can shift more. We show that its policy-specific sensitivity is at most $\sqrt2\tanh(r/2\tau)$, where $r$ is the policy graph's reach, the largest distance between two protected values, and $\tau$ is the temperature. The bound grows with the reach but never exceeds the hard-vote value $\sqrt2$, so a policy protecting only short substitutions yields less noise. It is tight, independent of the number of candidates, and computed once before synthesis. Our proof builds on a total-variation inequality for exponential tilts~\cite{cohen2023hyperbolic}: softmax ignores a common shift in its scores, so only their relative change matters, and the reach bounds it for all candidates at once.

The result also clarifies when policy-aware smoothing cannot help. Reach is a maximum over protected attributes rather than a sum, so
adding another narrowly protected numeric attribute need not increase
the bound. In contrast, protecting a flat categorical or binary attribute contributes that attribute's maximum distance and can make
the bound essentially equal to the hard vote's worst case at useful
temperatures. Participation (add/remove) edges similarly impose a floor that substitution policies cannot remove. Temperature creates a
second trade-off: increasing $\tau$ reduces sensitivity but also flattens the vote and weakens the evolutionary selection signal. The
bounds therefore do not determine an optimal temperature, and we use
a public-data pilot to select it without spending private budget.

Empirically, \GraphSmooth reduces error relative to hard voting at
strong privacy budgets under narrow numeric policies. On the UCI
\Adult and \Bank benchmarks, the crossover occurs at approximately $\epsilon \approx 0.16$--$0.46$ at $\delta = 10^{-5}$ for the fixed temperatures
we study; beyond this region, the cost of smoothing can dominate the
saved noise. Our experiments also isolate these two effects, test
policies protecting multiple attributes, and compare the analytical
sensitivity bound with exact sensitivity.

Our contributions are as follows.

\begin{itemize}[leftmargin=1.2em, itemsep=2pt, topsep=2pt]

\item \emph{A diagnosis of hard voting under policies.} For a linear
query under unconstrained bounded Blowfish adjacency, policy-specific
sensitivity reduces to a maximum over the policy graph's edges. For
the hard vote this maximum is all-or-nothing: it jumps from $0$ to
$\sqrt2$ at the first protected boundary crossing
(Section~\ref{sec:theory}).

\item \emph{\GraphSmooth,} a soft vote whose policy-specific
sensitivity has a tight closed-form bound in the policy's reach and
the temperature. The bound is independent of the candidate count,
can be computed once before synthesis, and identifies policies for
which smoothing cannot substantially reduce noise
(Sections~\ref{sec:method} and~\ref{sec:theory}).

\item \emph{An analysis of the temperature trade-off.} We bound both
its effect on sensitivity and its effect on the vote's ability to
discriminate among candidates, and use a public-data pilot to select
the temperature without spending private budget.

\item \emph{An evaluation} on the UCI \Adult and \Bank benchmarks and
\XORDS, a synthetic generator with planted higher-order (XOR)
dependencies, that locates where policy-aware soft voting helps and
where the advantage reverses, across privacy budgets and single- and
multi-attribute policies. The experiments separate the benefit of
lower noise from the cost of changing the vote and evaluate the
tightness and cost of the calibration bound
(Section~\ref{sec:experiments}).

\end{itemize}

Sections~\ref{sec:prelim} and~\ref{sec:related} review background and
related work, respectively, Section~\ref{sec:method} presents the
construction, Section~\ref{sec:theory} its analysis, and
Section~\ref{sec:experiments} the evaluation. The extended version
with all appendices is available in~\cite{extendedversion}, and our
code in~\cite{blowfish-evo-synth-repo}.
\section{Preliminaries}
\label{sec:prelim}

We review the privacy definitions and synthesis paradigm we use.

\subsection{Differential Privacy and zCDP}

A database $D$ is a multiset of $n$ records from a discrete domain
$\mathcal{T}$, with histogram $h(D) \in \mathbb{N}^{|\mathcal{T}|}$.
Databases $D, D'$ are \emph{neighbors}, written $D \sim D'$, if they
differ in one record. Under \emph{bounded} adjacency, $n$ is public
and one record's value changes. Under \emph{unbounded} adjacency, one
record is added or removed.

\begin{definition}[Differential Privacy~\cite{dwork2006calibrating}]
\label{def:dp}
A randomized mechanism $\mathcal{M}$ satisfies $\epsilon$-DP if for
all neighboring $D \sim D'$ and all measurable $S \subseteq
\mathrm{Range}(\mathcal{M})$,
$\Pr[\mathcal{M}(D) \in S] \le e^{\epsilon}\Pr[\mathcal{M}(D') \in S]$.
\end{definition}

For $f : \mathcal{T}^n \to \mathbb{R}^d$, the \emph{global
sensitivity} $\Delta_p f = \max_{D \sim D'} \|f(D) - f(D')\|_p$
calibrates the noise added to $f(D)$. Laplace noise scaled to
$\Delta_1 f/\epsilon$ gives $\epsilon$-DP, and Gaussian noise with
$\sigma^2 = (\Delta_2 f)^2/(2\rho)$ gives $\rho$-zCDP. We account for
privacy with zero-concentrated DP~\cite{bun2016concentrated}.

\begin{definition}[zCDP~\cite{bun2016concentrated}]
\label{def:zcdp}
A mechanism $\mathcal{M}$ satisfies $\rho$-zCDP if for all neighboring
$D \sim D'$ and all $\alpha \in (1,\infty)$, the R\'{e}nyi divergence
of order $\alpha$ between $\mathcal{M}(D)$ and $\mathcal{M}(D')$ is
at most $\rho\alpha$.
\end{definition}

zCDP is closed under post-processing and composes additively: an
adaptively chosen sequence of mechanisms with budgets
$\rho_1,\dots,\rho_k$ satisfies $\sum_i \rho_i$-zCDP. A $\rho$-zCDP
mechanism satisfies $(\epsilon,\delta)$-DP for every $\delta > 0$
with~\cite{bun2016concentrated}
\begin{equation}
\label{eq:zcdp-conversion}
\epsilon = \rho + 2\sqrt{\rho \ln(1/\delta)}.
\end{equation}
The synthesizer we study computes each round's candidates from
earlier noisy outputs. We therefore analyze each round conditional on
its candidate population and compose adaptively
(Section~\ref{sec:theory-setup}).

\subsection{Pufferfish and Blowfish Privacy}

Pufferfish~\cite{kifer2014pufferfish} replaces DP's fixed neighbor
relation with an explicit set of protected secret pairs.
Blowfish~\cite{he2014blowfish} instantiates this idea for tabular data
with a policy graph.

\begin{definition}[Policy graph~\cite{he2014blowfish}]
\label{def:policy-graph}
A policy is a triple $P = (\mathcal{T}, G, \mathcal{I}_Q)$, where
$G = (V,E)$ with $V \subseteq \mathcal{T} \cup \{\bot\}$ is the
\emph{policy graph}. An edge $(u,v) \in E$ marks values $u,v$ that
must remain indistinguishable. An edge $(u,\bot)$ to a dummy vertex
marks the presence of a record with value $u$ as protected.
$\mathcal{I}_Q$ is the set of databases consistent with public
deterministic constraints $Q$.
\end{definition}

Throughout, $\mathcal{I}_Q$ is unconstrained, and every result in
Section~\ref{sec:theory} assumes this.

\begin{definition}[Policy neighbors and policy-specific sensitivity~\cite{he2014blowfish}]
\label{def:policy-sens}
$D, D' \in \mathcal{I}_Q$ are \emph{policy neighbors},
$D \sim_P D'$, if they are minimally different and their difference
corresponds to an edge of $G$. An edge $(u,v)$ replaces one record's
value $u$ by $v$, and an edge $(u,\bot)$ adds or removes a record
with value $u$. Let $N(P)$ be the set of all such pairs. For
$f : \mathcal{I}_Q \to \mathbb{R}^d$ and $p \ge 1$, the
policy-specific sensitivity is
\[
S_p(f,P) = \max_{(D,D') \in N(P)} \|f(D) - f(D')\|_p.
\]
\end{definition}

\begin{definition}[Blowfish privacy~\cite{he2014blowfish} and its zCDP form]
\label{def:blowfish}
A mechanism satisfies $(\epsilon,P)$-Blowfish privacy if the
condition of Definition~\ref{def:dp} holds for all $D \sim_P D'$. It
satisfies $(\rho,P)$-zCDP Blowfish privacy if the condition of
Definition~\ref{def:zcdp} holds for all $D \sim_P D'$.
\end{definition}

The proofs of post-processing, composition, and
Eq.~\eqref{eq:zcdp-conversion} use only the divergence bound for a
fixed neighboring pair, never the structure of the neighbor relation.
All three therefore carry over unchanged to $(\rho,P)$-zCDP Blowfish
privacy. In particular, $T$ rounds with budgets $\rho_t$ give
$(\sum_t \rho_t, P)$-zCDP Blowfish privacy, which implies
$(\epsilon,\delta,P)$-Blowfish privacy with $\epsilon$ as in
Eq.~\eqref{eq:zcdp-conversion}.

\begin{theorem}[\cite{he2014blowfish,bun2016concentrated,dwork2006calibrating}]
\label{thm:blowfish-laplace}
Releasing $f(D) + \eta$ with
$\eta_i \sim \mathrm{Lap}(S_1(f,P)/\epsilon)$ i.i.d.\ satisfies
$(\epsilon,P)$-Blowfish privacy. Releasing
$f(D) + \mathcal{N}(0,\sigma^2 I)$ with
$\sigma^2 = S_2(f,P)^2/(2\rho)$ satisfies $(\rho,P)$-zCDP Blowfish
privacy. When $G$ is the complete graph on $\mathcal{T}$ and
$\mathcal{I}_Q$ is unconstrained, both coincide with their
standard bounded-DP counterparts.
\end{theorem}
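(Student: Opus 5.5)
The proof works pair by pair, reusing the standard Laplace and Gaussian mechanism arguments but restricting them to the neighbor relation $N(P)$ of Definition~\ref{def:policy-sens}. Fix any policy-neighboring pair $(D,D') \in N(P)$. By definition of $S_p(f,P)$ we have $\|f(D)-f(D')\|_p \le S_p(f,P)$ for $p\in\{1,2\}$. The standard proofs never use anything else about the neighbor relation. They use only this per-pair bound on the shift between the two means, and this observation is the one the paper already made before the theorem for post-processing and composition.

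\emph{Laplace case.} For each output $z$, we bound the density ratio coordinatewise using the triangle inequality:
\[
\frac{p_D(z)}{p_{D'}(z)} = \prod_i \exp\!\Bigl(\tfrac{\epsilon}{S_1}\bigl(|z_i - f(D')_i| - |z_i - f(D)_i|\bigr)\Bigr) \le \exp\!\Bigl(\tfrac{\epsilon}{S_1}\|f(D)-f(D')\|_1\Bigr) \le e^{\epsilon}.
\]
Integrating this bound over any measurable $S$ gives the condition of Definition~\ref{def:dp} for this pair. Since the pair was an arbitrary element of $N(P)$, Definition~\ref{def:blowfish} yields $(\epsilon,P)$-Blowfish privacy. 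If $S_1(f,P)=0$, then $f$ is constant on every policy-neighboring pair and the statement holds trivially, with degenerate or arbitrarily small noise.

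\emph{Gaussian case.} We use the closed form of the Rényi divergence between two isotropic Gaussians with a common covariance, as in~\cite{bun2016concentrated}:
\[
D_\alpha\bigl(\mathcal N(\mu,\sigma^2 I)\,\|\,\mathcal N(\mu',\sigma^2 I)\bigr) = \frac{\alpha\|\mu-\mu'\|_2^2}{2\sigma^2}.
\]
Substituting $\sigma^2 = S_2^2/(2\rho)$ and $\|\mu-\mu'\|_2\le S_2$ gives a divergence of at most $\rho\alpha$ for every $\alpha>1$. This is exactly the condition of Definition~\ref{def:zcdp} restricted to $D\sim_P D'$, which is $(\rho,P)$-zCDP Blowfish privacy.

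\emph{Coincidence with bounded DP.} When $G$ is the complete graph on $\mathcal T$, with no $\bot$ vertex, and $\mathcal I_Q$ is unconstrained, every pair of values $u\neq v$ is an edge. The pairs in $N(P)$ are then exactly the databases of the same size $n$ that differ in one record's value, which is the bounded neighbor relation. Hence $S_p(f,P)=\Delta_p f$ and $\sim_P$ coincides with $\sim$, so both the mechanisms and the guarantees reduce to the standard bounded-DP ones.

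The only step that needs care is this last identification of $N(P)$ with bounded adjacency. It requires the graph to contain no $\bot$ edges and the constraint set to be trivial. Otherwise, minimally different pairs could involve add/remove operations or multi-record changes forced by $Q$. With the stated hypotheses this verification is direct.
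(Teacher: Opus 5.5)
Your proposal is correct and is essentially the argument the paper relies on. The paper gives no separate proof; it cites the standard Laplace and Gaussian mechanism analyses, which, as you observe, use only the per-pair bound $\|f(D)-f(D')\|_p \le S_p(f,P)$ for $(D,D')\in N(P)$, together with the identification of $N(P)$ with bounded adjacency when $G$ is complete on $\mathcal{T}$ (no $\bot$ vertex) and $\mathcal{I}_Q$ is unconstrained (your only cosmetic omission is the degenerate case $S_2(f,P)=0$, which is handled exactly like your $S_1(f,P)=0$ remark).
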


Since $N(P)$ contains only standard neighbor pairs,
$S_p(f,P) \le \Delta_p f$ for every $G$. For the
sensitivity-calibrated mechanisms we use, a Blowfish policy reduces
noise only when this inequality is strict. A sparse policy graph is
therefore not sufficient by itself: the released function must also
respond less to the substitutions represented by its edges than to
arbitrary DP neighbors. Section~\ref{sec:theory} characterizes this
condition for the voting functions we study.

Values not joined by an edge are still protected along paths of $G$.
A shortest path of length $\ell$ gives $\ell\epsilon$ under
$(\epsilon,P)$-Blowfish privacy and, by group privacy for
zCDP~\cite{bun2016concentrated}, $\ell^2\rho$ under
$(\rho,P)$-zCDP Blowfish privacy. Values in different connected
components are not protected from each other~\cite{he2014blowfish}.

\paragraph{Specifying policies.}
A data owner typically specifies one graph per attribute of a product
domain $\mathcal{T} = \mathcal{T}_1 \times \cdots \times
\mathcal{T}_p$. These combine into a record-level graph $G$ that
links two records agreeing on every attribute but one, on which they
differ by a protected pair. Our experiments
(Section~\ref{sec:experiments}) use three common forms:

\begin{itemize}
\item \emph{Threshold graphs} protect nearby values under a numeric,
ordinal, or hierarchical distance: values within distance $\theta$
are adjacent, while more distant values are not directly protected
(Definition~\ref{def:threshold-graph}; the hierarchical case is in
\appref{app:categorical}).

\item \emph{Block-partition graphs} protect substitutions within a
coarse category but not across categories. They can, for example,
hide which value inside a category an individual has while allowing
the category itself to remain distinguishable.

\item \emph{Complete graphs} connect every pair of values of an
attribute and therefore protect that attribute without distinguishing
among its values.
\end{itemize}

Edges to $\bot$ express unbounded adjacency in the same formalism;
which values receive them is a modeling choice
(Section~\ref{sec:bounded-unbounded}). Our experiments use bounded
adjacency. Results in Section~\ref{sec:theory} hold for any $G$
unless stated otherwise.

\begin{definition}[Threshold policy graph and graph reach]
\label{def:threshold-graph}
Let $d$ be a distance on $\mathcal{T}$. For $\theta \ge 0$, the
\emph{threshold policy graph} $G_\theta = (\mathcal{T}, E_\theta)$
has edge set
$E_\theta = \{(u,v) : 0 < d(u,v) \le \theta\}$. For any policy graph
$G = (\mathcal{T}, E)$, the \emph{graph reach} with respect to $d$ is
$r_d(G) = \max_{(u,v) \in E} d(u,v)$, so $r_d(G_\theta) = \theta$
whenever some pair attains the threshold.
\end{definition}

Graph reach summarizes the longest substitution that the policy
protects under the same distance used by the vote. It does not
describe the entire topology of $G$, but Section~\ref{sec:theory-softmax}
shows that it is sufficient for a population-independent sensitivity
bound for our soft vote.

\subsection{Evolutionary, Nearest-Neighbor Synthesis}
\label{sec:prelim-evo}

Private Evolution~\cite{lin2024privateevolution} (\PE) and its tabular
instantiation \TabPE~\cite{tran2026tabpe} evolve a population of
candidate records rather than measuring marginals of the data. The
initial population is generated without private data. At round $t$,
with population $C_t = \{c_1,\dots,c_m\}$, each private record
$x \in D$ votes for its nearest candidate under a base distance $d$ (e.g., Gower distance~\cite{gower1971general} for mixed attributes):
\[
v_{C_t}(x) = e_{\mathrm{nn}(x)} \in \{0,1\}^m,
\qquad
\mathrm{nn}(x) = \arg\min_j d(x,c_j),
\]
where $e_j$ is the $j$th standard basis vector. This is the
\texttt{DP\_NN\_HISTOGRAM} step of \PE and \TabPE in vector form. The
released statistic is the histogram
$H_t(D;C_t) = \sum_{x \in D} v_{C_t}(x)$, noised as
$\tilde H_t = H_t(D;C_t) + \mathcal{N}(0,\sigma_t^2 I)$ under a
per-round budget $\rho_t$. A resampling and mutation step,
\texttt{ResampleMutate}, reads only $\tilde H_t$ and $C_t$ to form
$C_{t+1}$, and the loop repeats for $T$ rounds.

Our analysis (Sections~\ref{sec:method} and~\ref{sec:theory}) depends
only on the form of $v_{C_t}$, not on $d$, the mutation rule, or $T$.
We evaluate it in a \TabPE-style instantiation
(Section~\ref{sec:experiments}). Because $v_{C_t}$ is one-hot, its
response to a substitution takes only two values.
Section~\ref{sec:theory-hardvote} shows that this prevents sensitivity
from varying gradually with the policy's notion of closeness.
\section{Related Work}
\label{sec:related}

We position our work against DP synthesizers, policy-aware mechanisms,
metric relaxations of DP, and smoothed assignment rules.

\mypar{Evolutionary synthesis}
\PE~\cite{lin2024privateevolution} evolves a candidate population
through private nearest-neighbor voting against the real data,
without training a model. Its tabular instantiation
\TabPE~\cite{tran2026tabpe} reports gains over marginal-based
baselines on data with high-order correlations that low-order
marginals miss. Both use a one-hot vote each round. Under a policy
graph, this vote remains constant while the nearest candidate is
unchanged and jumps when a substitution crosses a decision boundary,
which makes its policy-specific sensitivity a step function
(Section~\ref{sec:theory-hardvote}).

Two extensions relax \PE along axes orthogonal to ours.
\Wasp~\cite{zou2025wasp} spreads a record's vote over its top-$Q$
candidates with weights determined by rank. This makes the vote less
concentrated, but rank is itself piecewise constant in the underlying
distances: a protected substitution can leave every rank unchanged or
change a rank discontinuously at a boundary. Its response is therefore
still step-like rather than graded by the substitution distance. In
contrast, our soft vote changes continuously with distance.
\Wasp changes how many candidates share a vote; we change how the
vote responds to a substitution, and the two could in principle
compose. \appref{app:related} gives the fuller comparison.

\Secpe~\cite{wang2026secpe} relaxes \PE's protection target rather
than its vote, replacing membership privacy with a secret-specific
$(p,r)$-protection guarantee; it has no policy graph.

\mypar{Other DP synthesizers}
Marginal-based mechanisms such as
PrivBayes~\cite{zhang2017privbayes},
Private-PGM~\cite{mckenna2019graphical},
MST~\cite{mckenna2021winning}, and AIM~\cite{mckenna2022aim} privately
measure selected low-order marginals and fit a joint distribution to
them. GAN-based mechanisms~\cite{xie2018dpgan,jordon2019pategan,fang2022dpctgan}
add noise during training rather than to a fixed statistic. Both
families target standard DP. Our goal is to isolate how a released
statistic can exploit a policy within one synthesis paradigm. A
utility comparison across paradigms under a common policy would
require policy-aware versions of the competing synthesizers, which we
leave open.

\mypar{Policy-aware mechanisms}
Haney et al.~\cite{haney2015design} answer a fixed linear workload
under Blowfish by transforming the workload and policy graph so that
standard DP mechanisms apply. Conditional on $C_t$, one round's vote
histogram is itself a linear workload, and our sensitivity identity
(Lemma~\ref{lem:policy-sensitivity}) is consistent with this view.
The difference is where the design freedom lies. Their workload is
given and the mechanism is optimized for answering it. In
evolutionary synthesis, the per-record voting map determines the
workload and the candidate population changes adaptively across
rounds. We redesign this map itself so that its response to a
protected substitution reflects the policy's distance, rather than
only recalibrating noise for a fixed hard-vote workload.

Xiao et al.~\cite{xiao2016dphmm} study the complementary problem of
when temporal correlation undermines the protection intended by a
policy graph.

\mypar{Metric relaxations}
Blowfish is one instance of relaxing DP's neighbor relation according
to an application-specific notion of closeness, as in metric
DP~\cite{chatzikokolakis2013broadening}, Lipschitz
privacy~\cite{koufogiannis2015lipschitz}, and
geo-indistinguishability~\cite{andres2013geoindistinguishability}.
Geo-indistinguishability, for example, uses a mechanism whose output
distribution varies continuously with geographic distance.
\Balldp~\cite{margaryan2026balldp} restricts substitutions to a fixed
radius, which corresponds to a threshold policy graph
(Definition~\ref{def:threshold-graph}), but applies the restriction to
convex empirical risk minimization rather than to voting or synthesis.
The fuller distinction is given in \appref{app:related}.

\mypar{Smoothed assignment rules}
Temperature-scaled softmax is a standard smooth relaxation of a hard
$\arg\max$ or $\arg\min$ rule, and related sensitivity or Lipschitz
bounds appear in the exponential
mechanism~\cite{mcsherry2007mechanism,optimal-approx-softmax} and in
bounds on the softmax map itself~\cite{nair2026softmax}. Applied
directly to our logits, a generic $\ell_2$ argument controls the
output change by the $\ell_2$ size of the logit perturbation. Although
each coordinate changes by at most $d(x,y)/\tau$, this norm can be as
large as $\sqrt{m}\,d(x,y)/\tau$ for $m$ candidates.

Our analysis exploits additional structure of a policy substitution.
Softmax is invariant to a common shift of all logits, so only their
\emph{oscillation} matters. A substitution from $x$ to $y$ has
oscillation at most $2d(x,y)/\tau$, independent of $m$. Combining
this observation with the total-variation inequality of Cohen and
Fausti~\cite{cohen2023hyperbolic} yields the tight, saturating bound
of Corollary~\ref{cor:softmax}; Proposition~\ref{prop:lower-bound}
shows that no uniformly smaller bound depending only on policy reach
and temperature is possible. \appref{app:related} gives the detailed
comparison with generic softmax bounds.
\section{Policy-Aware Evolutionary Synthesis}
\label{sec:method}

We replace the hard nearest-neighbor vote with a policy-aware
alternative while leaving the rest of the evolutionary synthesis
paradigm unchanged.

\subsection{Generalizing the Vote}
\label{sec:method-generalize}

The hard vote $v_{C_t}(x) = e_{\mathrm{nn}(x)}$ puts all of a
record's mass on its nearest candidate, a vertex of the probability
simplex $\Delta^{m-1}$ over the $m$ candidates in $C_t$. Neither noise
calibration nor resampling requires the vote to be one-hot: both only
need the per-record responses to aggregate into
\[
H_t(D;C_t) = \sum_{x\in D} v_{C_t}(x).
\]
We therefore allow any simplex-valued response,
\[
v_{C_t}(x) \in \Delta^{m-1}
=
\left\{
p \in \mathbb{R}^m_{\ge0}
:
\textstyle\sum_j p_j = 1
\right\},
\]
of which the hard vote is the special case confined to the simplex
vertices.

This abstraction separates the \emph{per-record response} from the
rest of the evolutionary loop. Once the responses are aggregated and
privately released, candidate resampling and mutation are
post-processing. We can therefore change the vote and its sensitivity
calibration without changing the remainder of the synthesis
procedure.

\subsection{Blowfish-Compatible Voting}
\label{sec:method-bfevo}
\label{sec:method-adjacency}

We define \BFEvo as the evolutionary paradigm run under a policy $P$.
It replaces standard adjacency with $\sim_P$ and calibrates each
round's noise to the policy-specific sensitivity of that round's
histogram, for any simplex-valued vote. At round $t$, conditional on
$C_t$, it releases
\[
\tilde H_t
=
H_t(D;C_t) + \mathcal{N}(0,\sigma_t^2 I),
\qquad
\sigma_t^2
=
\frac{S_2(H_t,P)^2}{2\rho_t},
\]
where $S_2(H_t,P)$ is taken over the edges of $G$ with $C_t$ fixed.
By Theorem~\ref{thm:blowfish-laplace} and adaptive composition
(Section~\ref{sec:prelim}), $T$ rounds satisfy
$(\sum_t \rho_t, P)$-zCDP Blowfish privacy, and hence
$(\epsilon,\delta,P)$-Blowfish privacy via
Eq.~\eqref{eq:zcdp-conversion}.

This gives two distinct ways to make the evolutionary mechanism
policy-aware. We can keep the original hard vote and only replace its
global sensitivity by its policy-specific sensitivity; we call this
method \emph{Hard}. Or we can also redesign the per-record response
so that protected substitutions induce smaller changes in the vote;
this gives \GraphSmooth below. \TabPE, Hard, and \GraphSmooth
therefore share the same evolutionary loop but differ in their vote
and/or its calibration.

We use bounded adjacency throughout. Participation edges, which
encode add/remove adjacency in the same graph, are treated in
Section~\ref{sec:bounded-unbounded}. Practical \PE implementations
also zero out small entries of $\tilde H_t$ with a public threshold.
This is post-processing and does not affect sensitivity
(\appref{app:pe-threshold}).

\subsection{Hard and Soft Votes}
\label{sec:method-hardvote}
\label{sec:method-graphsmooth}

\mypar{Hard vote}
A protected substitution either keeps a record's vote on the same
candidate, at no cost, or moves it entirely to another candidate. In
the second case it incurs the full $\ell_2$ distance $\sqrt2$ between
two one-hot vectors, regardless of how short the substitution is.

Figure~\ref{fig:hardvote-voronoi} shows this on $C_t$'s Voronoi
cells: solid policy edges whose endpoints lie in the same cell cost
$0$, while the dotted edge crossing a cell boundary costs $\sqrt2$.
Thus geometric closeness of a protected pair does not translate into
closeness of the released response once the pair crosses a decision
boundary. Corollary~\ref{cor:indicator} makes this exact: the hard
vote's sensitivity is $\sqrt2$ whenever any protected edge crosses a
boundary, and $0$ otherwise.

Whether a crossing exists can change from round to round as the
candidate population evolves (Section~\ref{sec:theory-hardvote});
Table~\ref{tab:hard-vote} gives the analogue for other adjacency notions.
In every round we checked in our experiments, some protected edge
crossed a boundary, so Hard's calibration equaled the raw $\sqrt2$ of
\TabPE (Section~\ref{sec:experiments}).

\begin{figure}[t]
\centering
\begin{subfigure}[b]{0.48\linewidth}
\centering
\resizebox{\linewidth}{!}{%
\begin{tikzpicture}[scale=1.05]
  \definecolor{c1fill}{RGB}{255,235,205}
  \definecolor{c2fill}{RGB}{205,232,255}
  \definecolor{c3fill}{RGB}{223,255,214}
  \definecolor{c4fill}{RGB}{255,214,235}
  \fill[c1fill] (-0.592,-0.422) -- (3.329,-0.422) -- (2.940,1.807) -- (-0.592,2.899) -- cycle;
  \fill[c2fill] (3.329,-0.422) -- (6.630,-0.422) -- (6.630,2.221) -- (4.206,2.467) -- (2.940,1.807) -- cycle;
  \fill[c3fill] (-0.592,4.392) -- (-0.592,2.899) -- (2.940,1.807) -- (4.206,2.467) -- (3.971,4.392) -- cycle;
  \fill[c4fill] (3.971,4.392) -- (4.206,2.467) -- (6.630,2.221) -- (6.630,4.392) -- cycle;
  \draw[gray!55, thick] (3.329,-0.422) -- (2.940,1.807);
  \draw[gray!55, thick] (-0.592,2.899) -- (2.940,1.807);
  \draw[gray!55, thick] (2.940,1.807) -- (4.206,2.467);
  \draw[gray!55, thick] (4.206,2.467) -- (6.630,2.221);
  \draw[gray!55, thick] (4.206,2.467) -- (3.971,4.392);
  \fill[black] (1.184,0.845) circle (2.6pt);
  \node[fill=c1fill, inner sep=1.5pt, font=\large] at (1.539,1.098) {$c_{1}$};
  \fill[black] (4.972,1.183) circle (2.6pt);
  \node[fill=c2fill, inner sep=1.5pt, font=\large] at (5.328,1.436) {$c_{2}$};
  \fill[black] (2.723,3.379) circle (2.6pt);
  \node[fill=c3fill, inner sep=1.5pt, font=\large] at (3.078,3.632) {$c_{3}$};
  \fill[black] (5.446,3.548) circle (2.6pt);
  \node[fill=c4fill, inner sep=1.5pt, font=\large] at (5.801,3.801) {$c_{4}$};
  \draw[black, very thick]                       (1.539,0.084) -- (2.368,0.338);
  \draw[red!70!black, very thick, densely dotted] (2.368,0.338) -- (3.907,1.140);
  \draw[black, very thick]                       (3.907,1.140) -- (4.617,1.647);
  \draw[red!70!black, very thick, densely dotted] (4.617,1.647) -- (5.150,2.872);
  \fill[orange!80!black] (1.539,0.084) circle (1.8pt);
  \fill[orange!80!black] (2.368,0.338) circle (1.8pt);
  \fill[blue!60!black]   (3.907,1.140) circle (1.8pt);
  \fill[blue!60!black]   (4.617,1.647) circle (1.8pt);
  \fill[purple!70!black] (5.150,2.872) circle (1.8pt);
  \node[font=\large, anchor=north] at (1.953,0.084) {sens.\ $0$};
  \node[red!70!black, font=\large, anchor=north] at (3.5,0.75) {sens.\ $\sqrt{2}$};
  \node[font=\large, anchor=south] at (3.966,1.647) {sens.\ $0$};
  \node[red!70!black, font=\large, anchor=west]  at (4.9,2.238) {sens.\ $\sqrt{2}$};
\end{tikzpicture}%
}
\caption{Hard vote: $0$ inside a cell, $\sqrt2$ across a boundary.}
\label{fig:hardvote-voronoi}
\end{subfigure}
\hfill
\begin{subfigure}[b]{0.48\linewidth}
\centering
\resizebox{\linewidth}{!}{%
\begin{tikzpicture}[scale=1.05]
  \definecolor{c1fill}{RGB}{255,235,205}
  \definecolor{c2fill}{RGB}{205,232,255}
  \definecolor{c3fill}{RGB}{223,255,214}
  \definecolor{c4fill}{RGB}{255,214,235}
  \node[anchor=south west, inner sep=0] at (-0.592,-0.422)
    {\includegraphics[width=7.58cm,height=5.1cm]{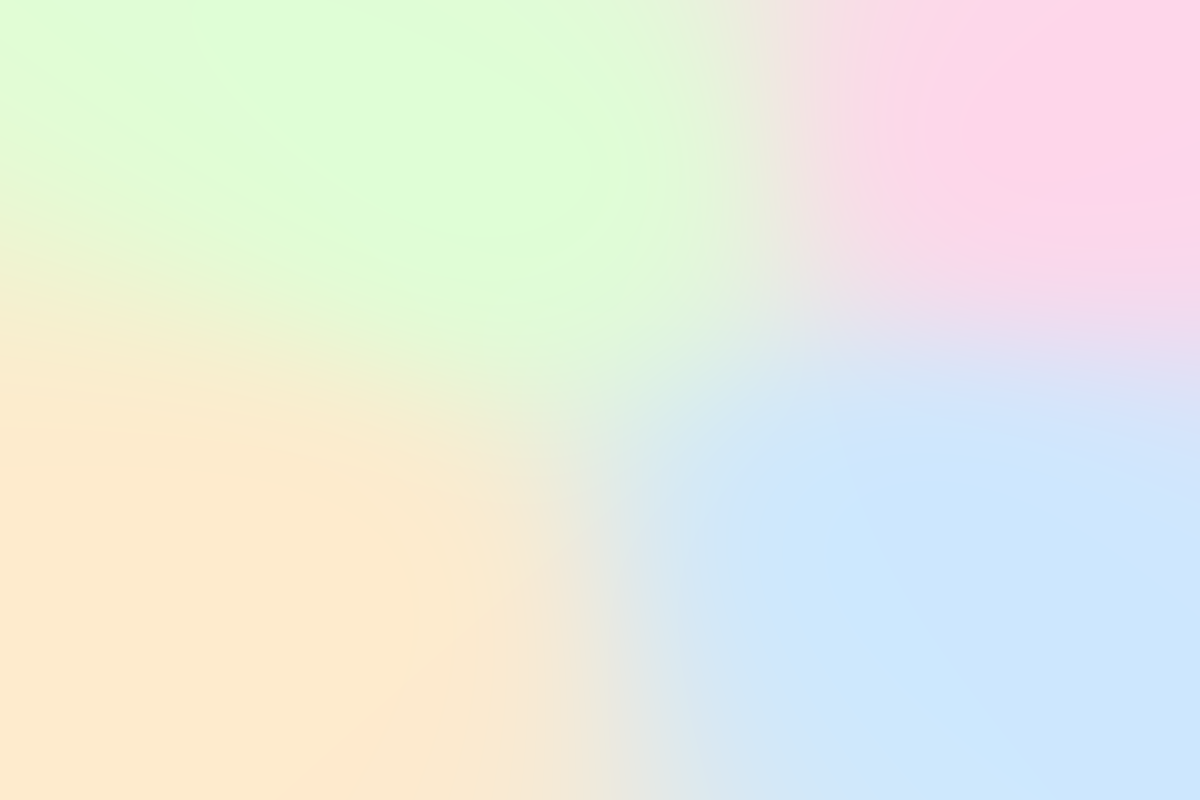}};
  \draw[gray!55, thick] (3.329,-0.422) -- (2.940,1.807);
  \draw[gray!55, thick] (-0.592,2.899) -- (2.940,1.807);
  \draw[gray!55, thick] (2.940,1.807) -- (4.206,2.467);
  \draw[gray!55, thick] (4.206,2.467) -- (6.630,2.221);
  \draw[gray!55, thick] (4.206,2.467) -- (3.971,4.392);
  \fill[black] (1.184,0.845) circle (2.6pt);
  \node[fill=c1fill, inner sep=1.5pt, font=\large] at (1.539,1.098) {$c_{1}$};
  \fill[black] (4.972,1.183) circle (2.6pt);
  \node[fill=c2fill, inner sep=1.5pt, font=\large] at (5.328,1.436) {$c_{2}$};
  \fill[black] (2.723,3.379) circle (2.6pt);
  \node[fill=c3fill, inner sep=1.5pt, font=\large] at (3.078,3.632) {$c_{3}$};
  \fill[black] (5.446,3.548) circle (2.6pt);
  \node[fill=c4fill, inner sep=1.5pt, font=\large] at (5.801,3.801) {$c_{4}$};
  \draw[black, very thick, densely dotted] (1.539,0.084) -- (2.368,0.338);
  \draw[black, very thick, densely dotted] (2.368,0.338) -- (3.907,1.140);
  \draw[black, very thick, densely dotted] (3.907,1.140) -- (4.617,1.647);
  \draw[black, very thick, densely dotted] (4.617,1.647) -- (5.150,2.872);
  \fill[orange!80!black] (1.539,0.084) circle (1.8pt);
  \fill[orange!80!black] (2.368,0.338) circle (1.8pt);
  \fill[blue!60!black]   (3.907,1.140) circle (1.8pt);
  \fill[blue!60!black]   (4.617,1.647) circle (1.8pt);
  \fill[purple!70!black] (5.150,2.872) circle (1.8pt);
  \node[black, font=\large, anchor=north] at (1.953,0.084) {$0.09$};
  \node[black, font=\large, anchor=north] at (3.5,0.75) {$1.21$};
  \node[black, font=\large, anchor=south] at (3.966,1.647) {$0.06$};
  \node[black, font=\large, anchor=west]  at (4.9,2.238) {$1.19$};
\end{tikzpicture}%
}
\caption{Soft vote ($\tau=0.5$): grades smoothly to the $\sqrt2$ ceiling.}
\label{fig:softvote-voronoi}
\end{subfigure}
\vspace{-2mm}
\caption{Sensitivity along the same threshold policy graph and
  candidate population, hard vote (left) vs.\ soft vote (right).}
\label{fig:voronoi-comparison}
\end{figure}

\mypar{Soft vote}
\GraphSmooth replaces the one-hot assignment with a
temperature-smoothed vote under the same base distance $d$:
\begin{equation}
\label{eq:softvote}
v_{C_t}(x)[j]
=
\frac{\exp(-d(x,c_j)/\tau)}
     {\sum_k \exp(-d(x,c_k)/\tau)}.
\end{equation}
For $\tau>0$, the response varies continuously with the candidate
distances. A short substitution can therefore move only a small
amount of voting mass, while a larger substitution can move more,
up to the $\sqrt2$ diameter attained by two different one-hot votes.
Figure~\ref{fig:softvote-voronoi} illustrates this graded response on
the same policy graph used for the hard vote.

Reading the two panels together: edges that cost $0$ under the hard
vote cost only $0.06$--$0.09$ under the soft vote, while the
boundary-crossing edges cost about $1.2$ instead of the full
$\sqrt2\approx1.41$. The soft vote thus pays a little everywhere to
avoid paying the maximum anywhere.

Sensitivity is still the worst response over all protected pairs, but
the response to an individual edge is no longer restricted to
$\{0,\sqrt2\}$. Section~\ref{sec:theory-softmax} shows that, for every
candidate population,
\[
S_2(H_t,P)
\le
\sqrt2\tanh\!\left(\frac{r_d(G)}{2\tau}\right).
\]
The bound decreases smoothly as the policy reach $r_d(G)$ shrinks or
the temperature $\tau$ grows, and is independent of the number of
candidates. Unlike exact sensitivity, it also does not depend on the
realized population $C_t$.

\mypar{Categorical attributes}
\label{sec:method-categorical}
The bound measures policy edges using the same base distance $d$ that
defines the vote. For numeric and ordinal attributes, a policy can
therefore restrict the permitted substitution distance directly.
Under Gower's default categorical term, however, every pair of
distinct values has the same nonzero distance. Protecting a flat
categorical or binary attribute consequently contributes its full
Gower weight to the reach, the largest contribution that attribute
can make.

The analytical bound remains strictly below $\sqrt2$ for finite
reach and $\tau>0$ (Remark~\ref{rem:strict-gap}), but at temperatures
that retain a sharp selection signal it can be numerically very close
to $\sqrt2$, leaving little noise reduction. When a categorical
attribute has genuine hierarchical structure, replacing the flat
distance by a tree metric restores several levels of policy distance.
The remedy lies in $d$, not in the vote or the policy graph: the soft
vote and the bound of Corollary~\ref{cor:softmax} assume only that
$d$ is a metric. For values organized in a rooted taxonomy (job codes,
product categories, administrative regions), the tree-path distance
$d(u,v)=\mathrm{depth}(u)+\mathrm{depth}(v)-2\,\mathrm{depth}(\mathrm{lca}(u,v))$
takes the nonzero values $2,4,\dots,2L$ when all leaves are at depth
$L$. A threshold policy can then protect only substitutions within a
subtree. After normalizing by the diameter $2L$, protecting sibling
substitutions gives reach $w/L$ instead of $w$; for \Bank's
\texttt{job} ($L=2$, four groups), this halves the reach from $0.2$ to
$0.1$. A shallow hierarchy thus recovers gradation, though not the
small reach of a narrow numeric threshold such as \texttt{age}'s
$0.017$. A flat block partition is the depth-$1$ case, and a binary
attribute admits no gradation under any metric.

\mypar{Temperature}
The temperature $\tau$ affects more than sensitivity. Increasing it
flattens the vote toward the uniform distribution and weakens the
selection pressure available to \texttt{ResampleMutate}. It therefore
trades privacy-noise reduction against the usefulness of the voting
signal. Section~\ref{sec:theory-convergence} makes both limits
explicit, and Section~\ref{sec:experiments} measures the resulting
utility trade-off.

\subsection{Instantiation and Calibration Cost}
\label{sec:method-instantiation}
\label{sec:method-algorithm}

\BFEvo and \GraphSmooth change only the per-record vote and its
calibration, so they can be substituted into any implementation of
the evolutionary paradigm. In our \TabPE-style instantiation, $d$ is
a Gower distance and the mutation operator is tuned for tabular data.
The resampling rule, mutation schedule, candidate populations, and
privacy-budget schedule are otherwise unchanged.

Algorithm~\ref{alg:unified} makes the comparison explicit. The three
methods share the aggregation, Gaussian mechanism, and
\texttt{ResampleMutate} steps. They differ only at the vote
(line~\ref{alg:vote}) and the sensitivity calibration
(line~\ref{alg:sens}).

\begin{algorithm}[t]
\caption{Evolutionary DP Synthesis}
\label{alg:unified}
\begin{algorithmic}[1]
\Input{policy graph $G=(\mathcal{T},E)$, reach $r_d(G)$, distance $d$,
temperature $\tau$, budgets $\{\rho_t\}_{t=1}^T$,
$\texttt{method} \in \{\TabPE, \Hard, \GraphSmooth\}$}
\Output{final candidate population $C_{T+1}$}
\State $C_1 \gets \texttt{InitPopulation}();$
\For{$t = 1, \dots, T$}
  \For{$x \in D$} \label{alg:vote}
    \If{$\texttt{method} = \GraphSmooth$}
      \State $v_{C_t}(x) \gets
      \texttt{softmax}_j\big(-d(x,c_j)/\tau\big);$
      \Comment{Eq.~\eqref{eq:softvote}}
    \Else
      \State $v_{C_t}(x) \gets
      e_{\arg\min_j d(x,c_j)};$
      \Comment{$\mathrm{nn}(x)$}
    \EndIf
  \EndFor
  \State $H_t(D;C_t) \gets \sum_{x \in D} v_{C_t}(x);$

  \If{$\texttt{method} = \TabPE$} \label{alg:sens}
    \State $\Delta_t \gets \sqrt2;$
    \Comment{global sensitivity}
  \ElsIf{$\texttt{method} = \Hard$}
    \State $\Delta_t \gets
    \sqrt2 \cdot
    \mathds{1}\!\left[
    \exists\,(u,v)\in E :
    \mathrm{nn}_{C_t}(u)\ne\mathrm{nn}_{C_t}(v)
    \right];$
    \Comment{Cor.~\ref{cor:indicator}}
  \ElsIf{$\texttt{method} = \GraphSmooth$}
    \State $\Delta_t \gets
    \sqrt2\tanh\!\big(r_d(G)/2\tau\big);$
    \Comment{Cor.~\ref{cor:softmax}}
  \EndIf

  \State $\sigma_t^2 \gets \Delta_t^2/(2\rho_t);$
  \State $\tilde H_t \gets
  H_t(D;C_t)+\mathcal{N}(0,\sigma_t^2I);$
  \State $C_{t+1} \gets
  \texttt{ResampleMutate}(\tilde H_t,C_t);$
\EndFor
\State \Return $C_{T+1};$
\end{algorithmic}
\end{algorithm}

Table~\ref{tab:method-comparison} summarizes the resulting
differences. \TabPE and Hard use exactly the same one-hot response;
Hard differs only by restricting the sensitivity maximum to policy
edges. \GraphSmooth changes the response itself and calibrates to a
population-independent upper bound.

\begin{table}[t]
\centering
\small
\caption{The three instantiations of Algorithm~\ref{alg:unified}.}
\label{tab:method-comparison}
\resizebox{\columnwidth}{!}{%
\begin{tabular}{llll}
\toprule
Method & Per-record vote & Calibration & Recomputed each round? \\
\midrule
\TabPE
& one-hot
& $\sqrt2$
& no \\

\Hard/\BFEvo
& one-hot
& $0$ or $\sqrt2$
& yes, for exact calibration \\

\GraphSmooth
& softmax
& $\sqrt2\tanh(r_d(G)/2\tau)$
& no \\
\bottomrule
\end{tabular}
}
\end{table}

\mypar{Round-dependent versus closed-form calibration}
For Hard, exact calibration depends on the current population $C_t$.
One protected edge whose endpoints have different nearest candidates
is enough to certify sensitivity $\sqrt2$, but certifying sensitivity
$0$ requires ruling out such a crossing for every protected edge.
Since resampling and mutation change $C_t$, the same fixed policy can
give different exact sensitivities in different rounds.

For \GraphSmooth, the exact sensitivity also depends on $C_t$, and
can be substantially smaller than the analytical bound. The
difference is that the bound depends only on the policy reach
$r_d(G)$ and the fixed temperature $\tau$. It can therefore be
computed once before synthesis and remains valid for every candidate
population encountered by the adaptive process.

\mypar{Why not calibrate to exact soft sensitivity?}
Because $C_t$ is itself obtained from previous private releases,
conditioning on the realized $C_t$ and computing exact sensitivity
would not by itself violate privacy. The obstacle is computational.
By Lemma~\ref{lem:policy-sensitivity}, exact sensitivity requires
maximizing the change in the per-record response over every edge of
the \emph{record-level} policy graph. For a product domain, a
protected pair on one attribute appears once for every combination of
the remaining attributes. Thus the record-level edge set can grow
exponentially in the number of attributes even when each
per-attribute policy is simple.

The same search is needed to certify Hard's zero-sensitivity case.
By contrast, once $r_d(G)$ has been obtained from the per-attribute
policies, evaluating the \GraphSmooth bound is constant-time and
requires no inspection of $C_t$. Both hard nearest-neighbor voting
and softmax voting cost $O(m)$ per record, so the soft vote changes
no asymptotic voting cost. \appref{app:algorithm} gives the full
adjacency and calibration-cost discussion, and
Section~\ref{sec:exp-sensitivity-summary} evaluates the cost
empirically.
\section{Sensitivity of Hard and Soft Votes}
\label{sec:theory}

We now make Section~\ref{sec:method}'s claims precise. For any linear
query, policy-specific sensitivity reduces to a maximum over the
policy graph's edges (\S\ref{sec:theory-identity}). For the hard vote
this maximum is all-or-nothing (\S\ref{sec:theory-hardvote}), while
for the soft vote it admits a tight closed-form bound in the policy
reach and temperature (\S\ref{sec:theory-softmax}). Temperature
trades this reduction against the vote's ability to discriminate
among candidates (\S\ref{sec:theory-convergence}), and some policy
edges impose floors that reach-dependent calibration cannot lower
(\S\ref{sec:bounded-unbounded}). Full proofs are in
\appref{app:proofs}.

\subsection{Linear Queries}
\label{sec:theory-setup}

Let $(\mathcal{T},d)$ be a finite metric space and
$\phi : \mathcal{T} \to \mathbb{R}^m$ a per-record embedding. A
\emph{linear query}~\cite{haney2015design} is a release
\[
f(D)=\sum_{x\in D}\phi(x)=W\,h(D),
\]
linear in the database histogram $h(D)$ even when $\phi$ is nonlinear
in $x$.

At a fixed round, conditional on $C_t$ and $\tau$, both votes of
Section~\ref{sec:method} are linear queries. The hard vote has
$\phi(x)=e_{\mathrm{nn}(x)}$, while the soft vote has
$\phi(x)=v_{C_t}(x)$ from Eq.~\eqref{eq:softvote}. Raw marginal
counts are linear queries as well. Model-fitting procedures optimized
jointly over all records, such as \PrivPGM~\cite{mckenna2019graphical},
are not of this form.

All results below are therefore per-round statements conditional on
the realized $C_t$. This is sufficient for privacy accounting because
$C_t$ is computed from earlier private releases, and the rounds
compose adaptively. The policy $P$ itself is fixed before the
mechanism runs.

\subsection{A Sensitivity Identity}
\label{sec:theory-identity}

Replacing one record's value $u$ by $v$ changes a linear query by
exactly $\phi(u)-\phi(v)$; every other record cancels. Thus, under
bounded and unconstrained Blowfish adjacency, the worst case over
databases becomes a worst case over policy edges.

\begin{lemma}[Policy sensitivity is a maximum over edges]
\label{lem:policy-sensitivity}
For a linear query $f(D)=\sum_{x\in D}\phi(x)$, any policy graph
$G=(\mathcal{T},E)$, unconstrained $\mathcal{I}_Q$, and bounded
adjacency,
\[
S_p(f,P)
=
\max_{(u,v)\in E}
\|\phi(u)-\phi(v)\|_p,
\]
with value $0$ when $E=\emptyset$.
\end{lemma}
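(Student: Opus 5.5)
The plan is to prove the identity as two inequalities, after first pinning down exactly what $N(P)$ looks like when adjacency is bounded and $\mathcal{I}_Q$ is unconstrained. By Definition~\ref{def:policy-sens}, with no edges to $\bot$ in play, a pair $(D,D')\in N(P)$ consists of two databases of the same size $n$ that differ in exactly one record: some record with value $u$ in $D$ is replaced by value $v$ in $D'$, where $(u,v)\in E$. Since $\mathcal{I}_Q$ is unconstrained, every multiset of size $n$ is admissible. So for each edge there is at least one such pair, namely any $D$ containing $u$ with that occurrence swapped for $v$. Edges are unordered, but the norm is symmetric in $u$ and $v$, so orientation does not matter.

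\emph{Upper bound.} Fix $(D,D')\in N(P)$ realized by edge $(u,v)$. Write $D=R\uplus\{u\}$ and $D'=R\uplus\{v\}$ as multisets, where $R$ is the common part. Linearity gives
\[
f(D)-f(D')=\sum_{x\in R}\phi(x)+\phi(u)-\sum_{x\in R}\phi(x)-\phi(v)=\phi(u)-\phi(v).
\]
Hence $\|f(D)-f(D')\|_p\le\max_{(u,v)\in E}\|\phi(u)-\phi(v)\|_p$.

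\emph{Lower bound.} For an edge $(u^\star,v^\star)$ attaining the maximum over $E$ (which is finite because $\mathcal{T}$ is finite), build the pair explicitly. Take $D=\{u^\star\}\uplus R$ for an arbitrary multiset $R$ of size $n-1$ drawn from $\mathcal{T}$, and let $D'$ replace $u^\star$ by $v^\star$. This pair is minimally different and corresponds to an edge of $G$. Because $\mathcal{I}_Q$ is unconstrained, both databases lie in $\mathcal{I}_Q$. The computation above then shows the maximum is attained. This is the only place the unconstrained assumption matters: under constraints $Q$, the witness pair might not exist, and policy neighbors could also differ in more than one record.

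\emph{Degenerate case.} If $E=\emptyset$, then $N(P)$ is empty and the maximum over an empty set is read as $0$. This agrees with the convention stated in the lemma.

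I do not expect a real obstacle. The only delicate step is the bookkeeping in the characterization of $N(P)$: "minimally different" must mean a single-record substitution along an edge. Under bounded adjacency with an unconstrained $\mathcal{I}_Q$, Definition~\ref{def:policy-sens} gives exactly that, with add/remove edges to $\bot$ excluded. I would state this explicitly before the two-sided argument.
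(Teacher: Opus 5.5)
Your proposal is correct and follows the paper's proof: linearity reduces any policy-neighbor pair to $\phi(u)-\phi(v)$ for an edge $(u,v)\in E$, every edge is realized by some pair because $\mathcal{I}_Q$ is unconstrained, and the empty-edge case is set to $0$ by convention. Your explicit witness construction and the remark that orientation does not matter add detail that the paper leaves implicit, but the argument is the same.
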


\begin{proof}[Proof sketch]
A policy neighbor replaces one value $u$ by $v$ with $(u,v)\in E$,
so linearity gives
$f(D)-f(D')=\phi(u)-\phi(v)$. Conversely, every policy edge can be
realized by such a neighboring pair. Taking the maximum gives the
claim.
\end{proof}

This identity is elementary and consistent with the workload view of
Haney et al.~\cite{haney2015design}; its role here is to isolate the
record-level response that the evolutionary mechanism itself is free
to redesign.

\begin{corollary}[Attribute-by-attribute decomposition]
\label{cor:attribute-decomp}
For a product domain
$\mathcal{T}_1\times\cdots\times\mathcal{T}_p$ with per-attribute
graphs $G_i=(\mathcal{T}_i,E_i)$ and the record-level product graph
$G$ of Section~\ref{sec:prelim},
\[
S_p(f,P)=\max_i S_{p,i}(f,P),
\]
where $S_{p,i}$ ranges only over edges that change attribute $i$.
Under a Gower-style distance
$d(x,y)=\sum_i w_i d_i(x_i,y_i)$,
\[
r_d(G)=\max_i w_i\,r_{d_i}(G_i).
\]
\end{corollary}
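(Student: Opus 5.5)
The plan is to derive both identities directly from Lemma~\ref{lem:policy-sensitivity} and Definition~\ref{def:threshold-graph}, using only the definition of the record-level product graph from Section~\ref{sec:prelim}. That graph has edge set
\[
E=\bigcup_{i=1}^p E^{(i)},\qquad
E^{(i)}=\{(x,y): (x_i,y_i)\in E_i,\ x_k=y_k \text{ for all } k\neq i\}.
\]
The sets $E^{(i)}$ are pairwise disjoint. An edge changes exactly one attribute, and no edge can change two attributes, because $E_i$ contains only pairs of distinct values. So the edges of $G$ split into $p$ disjoint classes, one per attribute.

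\textbf{Sensitivity.} By Lemma~\ref{lem:policy-sensitivity}, $S_p(f,P)=\max_{(x,y)\in E}\|\phi(x)-\phi(y)\|_p$. A maximum over a finite union equals the maximum of the maxima over its pieces. Define
\[
S_{p,i}(f,P)=\max_{(x,y)\in E^{(i)}}\|\phi(x)-\phi(y)\|_p ,
\]
with value $0$ when $E^{(i)}=\emptyset$. This gives $S_p(f,P)=\max_i S_{p,i}(f,P)$. The convention is consistent, since an empty class contributes nothing and all norms are nonnegative.

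\textbf{Reach.} The same union argument gives $r_d(G)=\max_i \max_{(x,y)\in E^{(i)}} d(x,y)$. For $(x,y)\in E^{(i)}$, every term $d_k(x_k,y_k)$ with $k\neq i$ vanishes, since $d_k$ is a distance and $x_k=y_k$. Hence $d(x,y)=w_i\,d_i(x_i,y_i)$.

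It remains to show that, as $(x,y)$ ranges over $E^{(i)}$, the pair $(x_i,y_i)$ ranges over all of $E_i$. This holds because the other coordinates can be chosen freely: the domain is a full product and $\mathcal{I}_Q$ is unconstrained. Therefore
\[
\max_{E^{(i)}} d = w_i \max_{(u,v)\in E_i} d_i(u,v) = w_i\,r_{d_i}(G_i),
\]
using $w_i\ge 0$. Taking the maximum over $i$ gives the claimed formula for $r_d(G)$.

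The only delicate point is this surjectivity together with the empty-edge conventions. If some $E_i$ is empty, I would set $r_{d_i}(G_i)=0$, matching the $E=\emptyset$ convention of Lemma~\ref{lem:policy-sensitivity}. I would also note that the argument needs the Gower weights to be nonnegative and each $d_k(a,a)=0$. Both hold for any Gower-style metric.
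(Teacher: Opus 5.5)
Your proposal is correct and follows essentially the same route as the paper. Both split the record-level edge set into the disjoint per-attribute classes, apply Lemma~\ref{lem:policy-sensitivity} piece by piece, and use $d(x,y)=w_i\,d_i(x_i,y_i)$ on the $i$th class. The details you make explicit are exactly what the paper's ``maximizing first over $E_i'$'' step relies on implicitly: the edges in $E^{(i)}$ project onto all of $E_i$ (this follows from the full product domain alone, not from $\mathcal{I}_Q$), $w_i\ge 0$, $d_k(a,a)=0$, and the empty-edge conventions.
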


\begin{proof}[Proof sketch]
Every record-level edge changes exactly one attribute. The edge set is
therefore a disjoint union of the per-attribute edge sets, and an edge
changing attribute $i$ has distance $w_i d_i$.
\end{proof}

A data owner can consequently reason about each attribute's policy in
isolation. Protecting additional attributes increases the analytical
reach only when one of them has a larger weighted reach than those
already protected.

For example, on \Adult with weight $w=1/5$ per attribute, protecting
\texttt{age} ($\theta=7$) gives reach $0.0192$ and \texttt{hours.per.week}
($\theta=8$) gives $0.0163$; protecting both keeps the reach at $0.0192$.
Adding the binary \texttt{income} label raises it to its full weight
$0.2$, about ten times larger (Section~\ref{sec:exp-policy}).

\subsection{Hard-Vote Sensitivity: A Step Function}
\label{sec:theory-hardvote}

\begin{corollary}[Hard-vote sensitivity is set by boundary crossings]
\label{cor:indicator}
For the hard vote $\phi(u)=e_{\mathrm{nn}(u)}$ and any policy graph
$G=(\mathcal{T},E)$,
\[
S_2(f,P)=
\begin{cases}
\sqrt2,
&
\exists (u,v)\in E:
\mathrm{nn}(u)\ne\mathrm{nn}(v),\\
0,
&
\text{otherwise}.
\end{cases}
\]
\end{corollary}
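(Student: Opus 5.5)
The plan is to apply Lemma~\ref{lem:policy-sensitivity} with $\phi(u)=e_{\mathrm{nn}(u)}$ and $p=2$. This reduces the statement to a computation over single edges: once the candidate population $C_t$ is fixed,
\[
S_2(f,P)=\max_{(u,v)\in E}\|e_{\mathrm{nn}(u)}-e_{\mathrm{nn}(v)}\|_2 .
\]
The Lemma applies because, conditional on $C_t$, the hard vote is a linear query (Section~\ref{sec:theory-setup}). The hypotheses of unconstrained $\mathcal{I}_Q$ and bounded adjacency are in force throughout Section~\ref{sec:theory}.

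Next I would compute the per-edge term. For an edge $(u,v)$ there are two cases. If $\mathrm{nn}(u)=\mathrm{nn}(v)$, the two one-hot vectors coincide and the difference is $0$. If $\mathrm{nn}(u)=j\ne k=\mathrm{nn}(v)$, then $e_j-e_k$ has exactly two nonzero coordinates, equal to $\pm1$, so its $\ell_2$ norm is $\sqrt{1+1}=\sqrt2$. Each edge therefore contributes a value in $\{0,\sqrt2\}$, and it contributes $\sqrt2$ exactly when the edge crosses a decision boundary.

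Finally I would take the maximum over the edges. If some edge has $\mathrm{nn}(u)\ne\mathrm{nn}(v)$, the maximum of a set drawn from $\{0,\sqrt2\}$ that contains $\sqrt2$ is $\sqrt2$. Otherwise every term is $0$, and so is the maximum. The case $E=\emptyset$ falls into the second branch, which matches the Lemma's convention of value $0$.

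The only real subtlety is that $\mathrm{nn}$ must be a well-defined function, which requires a fixed deterministic tie-breaking rule for the $\arg\min$. I would state this as an assumption, namely that $\mathrm{nn}$ is any fixed single-valued selection, and note that the argument above uses nothing else about it. The rest is routine.
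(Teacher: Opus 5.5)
Your proposal is correct and follows the paper's proof: reduce via Lemma~\ref{lem:policy-sensitivity} to a maximum over edges, note that each edge contributes $0$ when $\mathrm{nn}(u)=\mathrm{nn}(v)$ and $\sqrt2$ otherwise (one coordinate $+1$, one $-1$), and take the maximum of this two-valued quantity over $E$. Your remarks on fixing a tie-breaking rule for the $\arg\min$ and on the $E=\emptyset$ case are sensible additions that the paper leaves implicit.
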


\begin{proof}[Proof sketch]
If the nearest candidates agree, the two one-hot vectors are
identical. Otherwise their difference has one $+1$ and one $-1$
entry and hence $\ell_2$ norm $\sqrt2$. Lemma~\ref{lem:policy-sensitivity}
takes the maximum over policy edges.
\end{proof}

A graph whose edges all remain inside the same Voronoi cells therefore
gives sensitivity $0$. What the hard vote lacks is any graded
dependence on policy distance: its response jumps directly from
$0$ to $\sqrt2$ at the first protected boundary crossing. In
particular, once an edge crosses a boundary, shortening that edge does
not reduce sensitivity.

The crossing structure is also population-dependent. Define
\[
r_{C_t}
=
\min_{\mathrm{nn}(u)\ne\mathrm{nn}(v)}
d(u,v),
\]
the shortest substitution that crosses a current decision boundary.
As $C_t$ changes through resampling and mutation, so can $r_{C_t}$.
A fixed threshold policy can therefore yield sensitivity $0$ in one
round and $\sqrt2$ in another.

The same step behavior holds under other adjacency notions
(Table~\ref{tab:hard-vote}). An add/remove edge $(u,\bot)$ makes one
unit vote appear or disappear, a change of $1$ in both $\ell_1$ and
$\ell_2$. If public constraints force up to $k$ aligned record changes,
the triangle inequality gives at most $2k$ in $\ell_1$ and $k\sqrt2$
in $\ell_2$. In every case the response depends on whether a boundary
is crossed, not on how far the substitution moves.

\begin{table}[t]
\centering
\small
\caption{Hard-vote sensitivity by adjacency notion when a change
crosses a decision boundary.}
\label{tab:hard-vote}
\resizebox{\columnwidth}{!}{%
\begin{tabular}{lccc}
\toprule
Adjacency & effect of one change & $\ell_1$ & $\ell_2$ \\
\midrule
add/remove (edge to $\bot$)
& one vote appears or disappears
& $1$ & $1$ \\

replacement, vote moves
& vote moves from bin $j$ to $k$
& $2$ & $\sqrt2$ \\

replacement, same bin
& vote does not move
& $0$ & $0$ \\

$k$-row constrained neighbor
& up to $k$ aligned moves
& $\le 2k$ & $\le k\sqrt2$ \\
\bottomrule
\end{tabular}
}
\end{table}

\subsection{A Tight Soft-Vote Sensitivity Bound}
\label{sec:theory-softmax}

The soft vote replaces this discontinuous response by a smooth one,
but a generic softmax Lipschitz argument is not enough for our
purpose. With logits
\[
z(x)_j=-d(x,c_j)/\tau,
\]
the triangle inequality gives, for every candidate $j$,
\[
|z(x)_j-z(y)_j|
\le
\frac{d(x,y)}{\tau}.
\]
A direct $\ell_2$ bound on the logit perturbation can therefore scale
as $\sqrt m\,d(x,y)/\tau$, introducing an undesirable dependence on
the candidate count.

The key observation is that softmax is invariant to adding the same
constant to every logit. Hence only the \emph{relative} displacement
of the logits matters. Writing
$a=z(y)-z(x)$,
\[
\mathrm{osc}(a)
:=
\max_j a_j-\min_j a_j
\le
\frac{2d(x,y)}{\tau},
\]
independent of $m$. The next lemma converts this oscillation bound
into a sharp change in the resulting probability vector.

\begin{lemma}[Total variation under bounded log-tilting;
cf.~\cite{cohen2023hyperbolic}]
\label{lem:tv-tilt}
Let $p$ be a distribution on $\{1,\dots,m\}$ and
\[
q_j
=
\frac{p_j e^{a_j}}
{\sum_k p_k e^{a_k}}
\]
with $\mathrm{osc}(a)\le s$. Then
\[
\|p-q\|_1
\le
2\tanh(s/4),
\]
and the bound is attained by a suitable two-point distribution.
\end{lemma}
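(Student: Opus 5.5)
The plan is to reduce to a two-valued tilt and then run a one-variable optimization. Multiplying every $e^{a_j}$ by a common factor leaves $q$ unchanged. Subtracting $\min_j a_j$ from every coordinate, I may therefore assume without loss of generality that $a_j\in[0,s]$ for all $j$. Write $\|p-q\|_1 = 2\max_{A\subseteq\{1,\dots,m\}}\bigl(q(A)-p(A)\bigr)$, where $p(A)=\sum_{j\in A}p_j$ and similarly for $q$. This is the standard variational form of total variation, valid because both $p$ and $q$ are probability vectors. It then suffices to show that $q(A)-p(A)\le\tanh(s/4)$ for every fixed $A$.

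Fix $A$ and set $P=p(A)$. Then
\[
q(A)=\frac{\sum_{j\in A}p_je^{a_j}}{\sum_{j\in A}p_je^{a_j}+\sum_{j\notin A}p_je^{a_j}}.
\]
This expression has the form $X/(X+Y)$ with $X,Y\ge0$, so it is nondecreasing in $X$ and nonincreasing in $Y$. Raising every $a_j$ with $j\in A$ to $s$ and lowering every $a_j$ with $j\notin A$ to $0$ therefore only increases $q(A)$. The constraint $\mathrm{osc}(a)\le s$ permits exactly this extreme choice. Hence
\[
q(A)-p(A)\le g(P):=\frac{Pe^s}{1+P(e^s-1)}-P ,\qquad P\in[0,1].
\]
This is the key reduction: the worst case depends on $p$ only through the single number $P$.

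Next I would maximize $g$ over $[0,1]$. The function is concave, with $g(0)=g(1)=0$. Its stationarity condition is $e^s=(1+P(e^s-1))^2$, which gives $1+P(e^s-1)=e^{s/2}$ and hence $P^\ast=1/(e^{s/2}+1)$. Substituting back yields
\[
g(P^\ast)=P^\ast\bigl(e^{s/2}-1\bigr)=\frac{e^{s/2}-1}{e^{s/2}+1}=\tanh(s/4),
\]
so $\|p-q\|_1\le 2\tanh(s/4)$.

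For attainment, I would take the two-point distribution $p=(P^\ast,1-P^\ast)$ with $a=(s,0)$. In that case $A=\{1\}$ achieves every inequality above with equality. Since $q$ is also supported on two points, $\|p-q\|_1=2|q_1-p_1|=2\tanh(s/4)$.

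The only delicate step is the monotonicity reduction. The maximizing set $A$ itself depends on $a$, so the extremization has to be done with $A$ fixed first and only then maximized over $A$. Ordering the argument that way makes it rigorous. The remaining calculus is routine.
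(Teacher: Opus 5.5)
Your proof is correct, but it takes a different route from the paper's.

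The paper centers the tilt in $[-s/2,s/2]$ and sets $W=e^{a}$, $Z=\mathbb{E}_p[W]$, with endpoints $L=e^{-s/2}$ and $U=e^{s/2}$. It writes $\|p-q\|_1=\frac{2}{Z}\mathbb{E}_p[(W-Z)^+]$. Convexity of $w\mapsto(w-Z)^+$ then gives a chord bound, which spreads $W$ to $\{L,U\}$ at fixed mean, so the bound depends only on $Z$. The paper maximizes over $Z$ by AM--GM, which is why it needs the symmetric centering $LU=1$.

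You instead use the event form $\|p-q\|_1=2\max_A(q(A)-p(A))$ and the monotonicity of $X/(X+Y)$. This pushes the tilt to the two-level step $s\mathbf{1}_A$, so the bound depends only on $P=p(A)$, and you finish with a concave one-variable maximization.

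The two are dual parametrizations of the same extremal family. After symmetric centering, your maximizer $P^\ast=1/(e^{s/2}+1)$ is exactly the two-point configuration with $Z=1$, i.e.\ the paper's equality case.

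Your argument is more elementary and makes the worst-case tilt explicit. It needs no convexity or AM--GM, and any one-sided shift works. Bounding $q(A)-p(A)$ uniformly in $a$ for each fixed $A$ also cleanly handles the fact that the optimal event depends on the tilt.

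The paper's argument produces the intermediate bound $\frac{2}{U-L}\bigl[(U+L)-(Z+1/Z)\bigr]$ in terms of the normalizer. This is informative on its own when $Z$ is known to be far from $1$. Its equality conditions (support on $\{L,U\}$ and $Z=1$) also fall out directly from the two inequalities used.
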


\begin{proof}[Proof idea]
Because adding a constant to every $a_j$ leaves $q$ unchanged, center
the perturbation so that $a_j\in[-s/2,s/2]$. Let
$W_j=e^{a_j}\in[L,U]$ with $L=e^{-s/2}$ and $U=e^{s/2}$, and let
$Z=\sum_j p_jW_j$. Then
\[
\|p-q\|_1
=
\frac{2}{Z}
\mathbb{E}_p[(W-Z)^+].
\]
Convexity of $w\mapsto(w-Z)^+$ bounds the expectation by the chord
through $L$ and $U$:
\[
\mathbb{E}_p[(W-Z)^+]
\le
\frac{(Z-L)(U-Z)}{U-L}.
\]
Substitution gives
\[
\|p-q\|_1
\le
\frac{2}{U-L}
\left[
(U+L)-\left(Z+\frac1Z\right)
\right].
\]
Since $LU=1$, AM--GM shows that the right-hand side is maximized at
$Z=1$, where it equals $2\tanh(s/4)$. The complete equality
conditions and derivation are in \appref{app:proofs}.
\end{proof}

We now apply this result to the vote.

\begin{corollary}[Sensitivity bound for \GraphSmooth]
\label{cor:softmax}
For the soft-vote histogram $H_t(D;C_t)$ and any policy graph $G$,
\[
S_2(H_t,P)
\le
\sqrt2\,
\tanh\!\left(
\frac{r_d(G)}{2\tau}
\right),
\]
independent of $m=|C_t|$. For a threshold graph $G_\theta$, this is
\[
S_2(H_t,P_\theta)
\le
\sqrt2\tanh(\theta/2\tau).
\]
\end{corollary}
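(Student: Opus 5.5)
By Lemma~\ref{lem:policy-sensitivity}, $S_2(H_t,P)=\max_{(u,v)\in E}\|v_{C_t}(u)-v_{C_t}(v)\|_2$ with $C_t$ fixed. So it suffices to show that every single protected edge $(u,v)\in E$ satisfies
\[
\|v_{C_t}(u)-v_{C_t}(v)\|_2\le\sqrt2\tanh\!\big(d(u,v)/2\tau\big).
\]
The claimed bound then follows because $\tanh$ is increasing and $d(u,v)\le r_d(G)$ for every edge, by Definition~\ref{def:threshold-graph}. The threshold case is immediate, since $r_d(G_\theta)\le\theta$.

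\textbf{Step 1: write the soft vote as an exponential tilt.} Fix an edge $(u,v)$ and set $p=v_{C_t}(u)$, $q=v_{C_t}(v)$. With logits $z(x)_j=-d(x,c_j)/\tau$ and $a=z(v)-z(u)$, we have $q_j=p_je^{a_j}/\sum_k p_ke^{a_k}$. This is exactly the form required by Lemma~\ref{lem:tv-tilt}.

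\textbf{Step 2: bound the oscillation of the tilt.} By the triangle inequality, each coordinate satisfies $|a_j|\le d(u,v)/\tau$. Hence $\mathrm{osc}(a)\le s:=2d(u,v)/\tau$, and this holds regardless of $m$.

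\textbf{Step 3: convert the $\ell_1$ bound into the $\ell_2$ bound.} Lemma~\ref{lem:tv-tilt} gives $\|p-q\|_1\le 2\tanh(s/4)=2\tanh(d(u,v)/2\tau)$. This is the step that needs care: the naive inequality $\|\cdot\|_2\le\|\cdot\|_1$ would only give $2\tanh(\cdot)$, not the claimed $\sqrt2\tanh(\cdot)$. The fix uses the structure of $w=p-q$. Since both $p$ and $q$ sum to one, $\sum_j w_j=0$. Split $w$ into its positive part $w^+$ and negative part $w^-$. Each part has $\ell_1$ mass $t=\|w\|_1/2$. Because $\|y\|_2\le\|y\|_1$ for any vector $y$, we get $\|w^\pm\|_2\le t$. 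The two parts have disjoint supports, so
\[
\|w\|_2^2=\|w^+\|_2^2+\|w^-\|_2^2\le 2t^2 .
\]
Therefore $\|w\|_2\le\sqrt2\,\|w\|_1/2\le\sqrt2\tanh(d(u,v)/2\tau)$, which is the per-edge bound.

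\textbf{Step 4: take the maximum.} Maximizing over edges and applying monotonicity of $\tanh$ finishes the proof. The only nonroutine ingredient is the zero-sum $\ell_1$-to-$\ell_2$ conversion in Step 3; everything else is a direct substitution into Lemma~\ref{lem:policy-sensitivity} and Lemma~\ref{lem:tv-tilt}.
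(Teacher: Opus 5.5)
Your proposal is correct and matches the paper's proof step for step. You reduce to a single edge via Lemma~\ref{lem:policy-sensitivity}, write $v_{C_t}(v)$ as an exponential tilt of $v_{C_t}(u)$ with oscillation at most $2d(u,v)/\tau$, apply Lemma~\ref{lem:tv-tilt}, and then use the zero-sum positive/negative-part split to get $\|w\|_2\le\|w\|_1/\sqrt2$ before maximizing over edges with $d(u,v)\le r_d(G)$.
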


\begin{proof}
By Lemma~\ref{lem:policy-sensitivity}, it is enough to bound
$\|v_{C_t}(x)-v_{C_t}(y)\|_2$ for one policy edge $(x,y)$. The vote
at $y$ is an exponential tilt of the vote at $x$, and the corresponding
logit perturbation has oscillation at most $2d(x,y)/\tau$. Hence
Lemma~\ref{lem:tv-tilt} gives
\[
\|v_{C_t}(x)-v_{C_t}(y)\|_1
\le
2\tanh\!\left(\frac{d(x,y)}{2\tau}\right).
\]
The difference of two probability vectors sums to zero, and any
zero-sum vector $z$ satisfies
$\|z\|_2\le\|z\|_1/\sqrt2$. Therefore
\[
\|v_{C_t}(x)-v_{C_t}(y)\|_2
\le
\sqrt2
\tanh\!\left(\frac{d(x,y)}{2\tau}\right).
\]
Maximizing over policy edges and using
$d(x,y)\le r_d(G)$ proves the claim.
\end{proof}

The candidate count disappears because the proof controls the
\emph{oscillation} of the logit perturbation rather than its
$\ell_2$ norm. This is also what lets the bound saturate at the
simplex diameter instead of growing without limit.

\begin{proposition}[The bound is tight]
\label{prop:lower-bound}
For every $\tau>0$, $\theta>0$, and $m\ge2$, there exist a finite
domain $\mathcal{T}\subset\mathbb{R}$, $m$ candidates, and a
threshold-$\theta$ policy such that
\[
S_2(H_t,P_\theta)
=
\sqrt2\tanh(\theta/2\tau)
\]
for $m=2$, and for any $\varepsilon>0$,
\[
S_2(H_t,P_\theta)
\ge
\sqrt2\tanh(\theta/2\tau)-\varepsilon
\]
for $m>2$.
\end{proposition}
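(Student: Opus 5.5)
By Lemma~\ref{lem:policy-sensitivity}, it suffices to exhibit one policy edge $(x,y)$ with $d(x,y)=\theta$ whose soft votes differ by the claimed amount. The matching upper bound is already given by Corollary~\ref{cor:softmax}, so only a lower-bound construction is needed. I will work on the real line with $d(u,v)=|u-v|$ and place two candidates $c_1<c_2$ far apart relative to $\theta$. For a point $x$ in the segment $[c_1,c_2]$, the logit difference is $z(x)_1-z(x)_2=(c_2+c_1-2x)/\tau$. Moving $x$ by $\theta$ toward $c_2$ therefore shifts the logit difference by exactly $2\theta/\tau$, so the oscillation bound $\mathrm{osc}(a)\le 2d(x,y)/\tau$ used in Corollary~\ref{cor:softmax} is attained. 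This is why the candidates must straddle the edge.

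\textbf{Case $m=2$.} Take $\mathcal{T}=\{x,y\}$ with $x=-\theta/2$ and $y=\theta/2$, candidates $c_1=-R$ and $c_2=R$ with $R\ge\theta/2$, and the threshold-$\theta$ graph, whose only edge is $(x,y)$. The point $x$ puts mass $\sigma(\theta/\tau)$ on $c_1$, where $\sigma$ is the logistic function, and $y$ puts mass $\sigma(-\theta/\tau)$ on $c_1$ by symmetry. The two votes differ by $\pm\delta$ in each of the two coordinates, where
\[
\delta=\sigma(\theta/\tau)-\sigma(-\theta/\tau)=\tanh(\theta/2\tau).
\]
Hence $\|v(x)-v(y)\|_2=\sqrt2\tanh(\theta/2\tau)$, and the inequality of Corollary~\ref{cor:softmax} holds with equality. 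This is precisely the two-point, $Z=1$ equality case of Lemma~\ref{lem:tv-tilt}, centred by symmetry.

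\textbf{Case $m>2$.} Keep $x$, $y$, $c_1$, $c_2$ as above and add $m-2$ further candidates at distance $M$ from the origin, say at $R+M$ and beyond, or $-R-M$. Each extra candidate receives mass at most a constant times $e^{-M/\tau}$ for both $x$ and $y$. The soft votes therefore converge, as $M\to\infty$, to the $m=2$ votes padded with zeros, and the $\ell_2$ difference converges to $\sqrt2\tanh(\theta/2\tau)$. Choosing $M$ large enough gives the bound within $\varepsilon$.

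The domain $\mathcal{T}$ need only contain $x$ and $y$, so that $(x,y)$ is the sole edge and $r_d(G_\theta)=\theta$. The candidates may lie outside $\mathcal{T}$, or we can add them to $\mathcal{T}$ while keeping them more than $\theta$ apart from every other point. In the latter case, any extra edges only increase the maximum in Lemma~\ref{lem:policy-sensitivity} and never exceed the Corollary~\ref{cor:softmax} bound.

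The main obstacle is the limit argument for $m>2$: equality is not attainable there, since any extra candidate with positive mass strictly loosens the two-point extremizer. I will make it explicit with a continuity estimate: softmax is $1$-Lipschitz from $\ell_\infty$ logits to the $\ell_1$ output. Alternatively, I will bound the extra mass directly by $(m-2)e^{-(M-R)/\tau}$ and renormalize.
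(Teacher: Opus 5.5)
Your proposal is correct and follows the paper's proof: the paper takes $c_1=0$ and $c_2=L$ with the protected pair $(L\mp\theta)/2$ straddling the midpoint, which is your configuration up to translation with $L=2R$. It then evaluates the logistic difference $\sigma(\theta/\tau)-\sigma(-\theta/\tau)=\tanh(\theta/2\tau)$, takes the matching upper bound from Corollary~\ref{cor:softmax}, and for $m>2$ adds candidates $c_k=L+R+k$ with $R\to\infty$. To make the limit explicit, rely on your direct extra-mass bound with renormalization rather than the $\ell_\infty$-to-$\ell_1$ Lipschitz property of softmax, because that property cannot compare the $m$-candidate vote with the zero-padded two-candidate vote, whose extra logits would be $-\infty$.
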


\begin{proof}[Proof sketch]
Place two candidates at $0$ and $L$ and a protected pair
\[
u=(L-\theta)/2,
\qquad
v=(L+\theta)/2
\]
symmetrically around their midpoint. Their distance is $\theta$.
With two candidates, the soft vote reduces to a logistic function;
the two probability vectors differ in $\ell_2$ norm by exactly
$\sqrt2\tanh(\theta/2\tau)$. For $m>2$, place the remaining
candidates arbitrarily far away so that their softmax mass tends to
zero. The two-candidate construction is then approached arbitrarily
closely. The full calculation is in \appref{app:proofs}.
\end{proof}

No uniformly smaller bound depending only on reach and temperature
is therefore possible. For small $r_d(G)/\tau$,
\[
\sqrt2\tanh\!\left(\frac{r_d(G)}{2\tau}\right)
\approx
\frac{r_d(G)}{\sqrt2\,\tau},
\]
while for large $r_d(G)/\tau$ it saturates at $\sqrt2$. The bound is
worst-case over candidate populations; the exact sensitivity for a
realized $C_t$ can be substantially smaller. Section~\ref{sec:method-algorithm}
explains why we nevertheless calibrate to the closed form.

\begin{remark}[The soft bound stays below the hard vote's worst case]
\label{rem:strict-gap}
For $\tau>0$ and finite reach, the bound is strictly below $\sqrt2$.
This compares it with the hard vote's \emph{worst case}, not with its
exact policy-specific sensitivity: if no protected edge crosses a
decision boundary, Hard has sensitivity $0$. On an attribute with
only one nonzero distance, such as a binary or flat categorical
attribute, changing the graph cannot create finer distance levels;
only $\tau$ moves the bound
(Section~\ref{sec:method-categorical}).
\end{remark}

\subsection{Temperature and the Evolutionary Signal}
\label{sec:theory-convergence}

Corollary~\ref{cor:softmax} describes what a larger temperature buys
in sensitivity, but not what it costs. \texttt{ResampleMutate} needs
the released histogram to distinguish well-supported candidates from
poorly supported ones. A nearly uniform vote carries little such
signal even if it requires little noise.

\begin{proposition}[Temperature trades discrimination for sensitivity]
\label{prop:temperature-tradeoff}
Fix $C_t=\{c_1,\dots,c_m\}$.
\begin{enumerate}[leftmargin=1.4em,label=(\alph*)]
\item For $x$ with margin
\[
\gamma(x)
=
\min_{j\ne\mathrm{nn}(x)}
d(x,c_j)
-
d(x,c_{\mathrm{nn}(x)})
>0,
\]
\[
1-v_{C_t}(x)[\mathrm{nn}(x)]
\le
(m-1)e^{-\gamma(x)/\tau}.
\]
Thus the soft vote recovers the hard vote exponentially fast as
$\tau\to0^+$.

\item With
$\mathrm{diam}(C_t)=\max_{j,k}d(c_j,c_k)$, for every $x$,
\[
\left\|
v_{C_t}(x)-\frac1m\mathds{1}
\right\|_1
\le
2\tanh\!\left(
\frac{\mathrm{diam}(C_t)}{4\tau}
\right)
\le
\frac{\mathrm{diam}(C_t)}{2\tau}.
\]
Thus the vote approaches uniform at rate $O(1/\tau)$, governed by the
candidate population rather than the policy.
\end{enumerate}
\end{proposition}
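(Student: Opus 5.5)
Both parts treat the soft vote as a softmax of the logits $z_j=-d(x,c_j)/\tau$, and part (b) reuses Lemma~\ref{lem:tv-tilt} exactly as in Corollary~\ref{cor:softmax}.

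\emph{Part (a).} Write $j^\star=\mathrm{nn}(x)$ and divide numerator and denominator of Eq.~\eqref{eq:softvote} by $e^{-d(x,c_{j^\star})/\tau}$. This gives
\[
v_{C_t}(x)[j^\star]=\Bigl(1+\textstyle\sum_{j\ne j^\star}e^{-(d(x,c_j)-d(x,c_{j^\star}))/\tau}\Bigr)^{-1}.
\]
Each exponent difference is at least $\gamma(x)$, so the sum $S$ satisfies $S\le (m-1)e^{-\gamma(x)/\tau}$. Then
\[
1-v_{C_t}(x)[j^\star]=\frac{S}{1+S}\le S,
\]
which is the claim. The exponential recovery of the hard vote as $\tau\to0^+$ follows immediately, since $\gamma(x)>0$ is fixed.

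\emph{Part (b).} View $v_{C_t}(x)$ as an exponential tilt of the uniform distribution $p=\frac1m\mathds{1}$ with tilt vector $a_j=-d(x,c_j)/\tau$. Lemma~\ref{lem:tv-tilt} then gives $\|v_{C_t}(x)-p\|_1\le 2\tanh(s/4)$, where $s=\mathrm{osc}(a)$. So the whole argument reduces to bounding the oscillation.

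For any $j,k$, the triangle inequality gives
\[
|d(x,c_j)-d(x,c_k)|\le d(c_j,c_k)\le \mathrm{diam}(C_t),
\]
hence $s\le \mathrm{diam}(C_t)/\tau$. Substituting, and using that $\tanh$ is monotone, yields
\[
\|v_{C_t}(x)-p\|_1\le 2\tanh\bigl(\mathrm{diam}(C_t)/(4\tau)\bigr).
\]
The final inequality follows from $\tanh u\le u$ for $u\ge 0$, which gives
\[
2\tanh\bigl(\mathrm{diam}(C_t)/(4\tau)\bigr)\le \mathrm{diam}(C_t)/(2\tau),
\]
so the distance to uniform decays at rate $O(1/\tau)$.

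\emph{Main obstacle.} The argument is short, and the key step is to use the triangle inequality across candidates rather than across records. This gives oscillation $\mathrm{diam}(C_t)/\tau$ instead of the cruder bound $2\max_j d(x,c_j)/\tau$. Once the oscillation is bounded this way, the bound is governed by the population and not by the policy, as the statement claims, and it holds uniformly in $x$.
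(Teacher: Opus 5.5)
Your proof is correct and matches the paper's argument essentially step for step. In (a) you divide through by the nearest candidate's term and bound the remaining sum by $(m-1)e^{-\gamma(x)/\tau}$; in (b) you treat the vote as a tilt of the uniform distribution, bound the logit oscillation by $\mathrm{diam}(C_t)/\tau$ using the triangle inequality between candidates, and finish with Lemma~\ref{lem:tv-tilt} and $\tanh u\le u$.
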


\begin{proof}[Proof sketch]
For (a), divide the softmax numerator and denominator by the nearest
candidate's term and bound every remaining exponential by
$e^{-\gamma(x)/\tau}$. For (b), apply
Lemma~\ref{lem:tv-tilt} to the uniform distribution with tilt
$a_j=-d(x,c_j)/\tau$. The triangle inequality bounds the oscillation
of these logits by $\mathrm{diam}(C_t)/\tau$.
\end{proof}

The two limits expose the trade-off. Small $\tau$ preserves a sharp
nearest-neighbor signal but approaches the hard vote's sensitivity.
Large $\tau$ reduces sensitivity but drives the vote toward uniform.
At any fixed positive $\tau$, the noiseless soft histogram also
differs from the hard histogram, so the two methods can select toward
different populations even when privacy noise becomes negligible.
This is consistent with the empirical reversal at weaker privacy
budgets in Section~\ref{sec:experiments}.

\begin{remark}[No optimal temperature from the bounds]
\label{rem:no-optimal-tau}
The sensitivity term in Corollary~\ref{cor:softmax} depends only on
the public policy, distance, and temperature. The utility cost of
smoothing depends on the private data's margins, the candidate
trajectory, and the privacy budget. The theoretical bounds therefore
do not identify an optimal $\tau$. Selecting it directly on private
data would itself require privacy accounting, so we instead use the
public-data pilot of Section~\ref{sec:experiments}.
\end{remark}

Proposition~\ref{prop:temperature-tradeoff} is deliberately not a
convergence theorem. It characterizes the information retained by a
single vote at a fixed candidate population. Whether the full
population sequence converges under \texttt{ResampleMutate} depends
also on mutation, resampling, and composition. The convergence
analysis of the original hard-vote \PE has itself required subsequent
revision~\cite{gonzalez2025peconverges}. We therefore leave
convergence under soft voting to future work; the sensitivity results
above hold independently at every fixed round.

\subsection{Floors on Sensitivity}
\label{sec:bounded-unbounded}

Lemma~\ref{lem:policy-sensitivity} takes a maximum over all protected
edges. An edge whose response does not shrink with the policy reach
therefore imposes a floor on the achievable sensitivity reduction.

\begin{remark}[Sensitivity floors]
\label{rem:participation}
Two kinds of policy edges impose floors that no reach-based
calibration can lower.
\begin{enumerate}[leftmargin=1.4em,label=(\alph*)]
\item \emph{Participation edges.} An edge $(u,\bot)$ is not
controlled by the substitution distance $d$. With $\phi(\bot)=0$,
it contributes $\|\phi(u)\|$, so if participation edges are included
for all values, then $S(f,P)\ge \max_u\|\phi(u)\|$. This equals $1$
for the hard vote and is at most $1$ for the soft vote, regardless
of the substitution graph.

\item \emph{Maximal-distance attributes.} Under a Gower-style
distance, if attribute $i$ contains a protected pair at its maximum
distance $\mathrm{diam}(\mathcal{T}_i)$, then
Corollary~\ref{cor:attribute-decomp} gives
$r_d(G)\ge w_i\,\mathrm{diam}(\mathcal{T}_i)$, and the analytical
soft-vote bound is at least
\[
\sqrt2\tanh\!\left(\frac{w_i\,\mathrm{diam}(\mathcal{T}_i)}{2\tau}\right).
\]
Complete graphs have this property. Under a flat categorical Gower
term, so does any block with two or more distinct values, and a
binary attribute admits no policy between empty and complete.
\end{enumerate}
\end{remark}

These floors do not imply that one policy is intrinsically worse than
another for a fixed response function. They identify policy edges
that the particular reach-dependent reduction of
Corollary~\ref{cor:softmax} cannot make smaller.

\paragraph{Scope of the analysis.}
The results in this section assume bounded replacement adjacency and
unconstrained $\mathcal{I}_Q$. The main reduction is strongest when
the protected substitutions have small reach under the same metric
used by the vote, as with narrow numeric, ordinal, or hierarchical
policies. Participation edges and maximal-distance categorical edges
can limit that reduction. The results are per-round sensitivity
statements rather than convergence guarantees, but because they hold
conditional on every realized candidate population, they compose
through the adaptive evolutionary process described in
Section~\ref{sec:method}.
\section{Experiments}
\label{sec:experiments}

We ask four questions: when and why does soft voting outperform hard
voting (Sections~\ref{sec:exp-budget}--\ref{sec:exp-mechanism})?
Can $\tau$ be chosen without private data
(Section~\ref{sec:exp-tau-selection})? Which policies preserve the
gain (Section~\ref{sec:exp-policy})? Does the sensitivity bound hold,
and what does calibration cost
(Section~\ref{sec:exp-sensitivity-summary})?
Protocols and supporting results are in \appref{app:experiments}.

\subsection{Setup}
\label{sec:exp-setup}

\mypar{Datasets.}
\Adult~\cite{becker1996adult} and \Bank~\cite{moro2014databn,moro2014bankdata} each
use five modeled columns including the label
(Table~\ref{tab:reference-sens}), with 26{,}048 and 27{,}126 private
training records. \XORDS draws $k$ independent ordinal inputs
$X_i\sim\mathrm{Unif}\{0,\dots,7\}$ and sets
$Y=B_1\oplus\cdots\oplus B_k$, where
$B_i=\mathbf{1}\{X_i\ge4\}$, following \TabPE's high-order
evaluation~\cite{tran2026tabpe}. We use $k=3$ (\XORDS-3) and
$n=12{,}000$.

\begin{table}[t]
\centering
\small
\caption{Default per-attribute policy graphs.} \vspace{-3mm}
\label{tab:reference-sens}
\begin{tabular}{lll}
\toprule
Dataset & Attribute & Graph \\
\midrule
\multirow{5}{*}{\Adult}
 & \texttt{age}            & threshold (7 years) \\
 & \texttt{hours.per.week} & threshold (8 hours) \\
 & \texttt{education.num}  & threshold (2 levels) \\
 & \texttt{workclass}      & block (4 blocks) \\
 & \texttt{income}         & complete \\
\midrule
\multirow{5}{*}{\Bank}
 & \texttt{age}       & threshold (7 years) \\
 & \texttt{balance}   & threshold (500) \\
 & \texttt{education} & threshold (1 level) \\
 & \texttt{job}       & hierarchy (4 groups) \\
 & \texttt{y}         & complete \\
\bottomrule
\end{tabular}
\end{table}

\mypar{Policies and reach.}
We use equal-weight Gower distance, $w=1/p$, over the modeled columns
($p=5$ on \Adult and \Bank, $p=4$ on \XORDS-3). Numeric attributes are
normalized by fixed domains: $[17,90]$ for \Adult age, $[18,100]$
for \Bank age, and $\{0,\dots,7\}$ for $X_i$. A threshold $\theta$
then contributes reach $w\theta/\text{width}$, and multi-attribute
policies take the maximum (Corollary~\ref{cor:attribute-decomp}).
Except in Section~\ref{sec:exp-policy}, we protect one attribute:
age with $\theta=7$ on \Adult and \Bank (reach $0.0192$ and $0.0171$)
and $X_1$ with one ordinal step on \XORDS-3 (reach $1/28$). Other modeled
attributes lie outside the policy. Section~\ref{sec:exp-policy}
protects combinations up to all five columns; \appref{app:exp-policy}
gives the rationale.

\mypar{Methods.}
All methods use the same implementation of Algorithm~\ref{alg:unified},
initial population, selection/mutation schedule, budget split, and
paired randomness. \Hard is the one-hot \BFEvo vote. In every checked
round of the fixed-temperature runs, a protected edge crossed a
nearest-candidate boundary, so its exact sensitivity was $\sqrt2$,
matching \TabPE's calibration; broader-policy runs use $\sqrt2$
directly. \GraphSmooth uses Corollary~\ref{cor:softmax}. We report
its noise ratio%
\[
q=\sigma_S/\sigma_H=\tanh(r_d(G)/2\tau).
\]
A standard-DP soft vote has full reach and, for the temperatures used
in Section~\ref{sec:exp-mechanism}, effectively $q=1$.

The noise ratio has a direct budget interpretation. Because
$\sigma_t^2=\Delta_t^2/(2\rho_t)$, \GraphSmooth at budget $\rho$ adds
the same Gaussian noise that \Hard would add at budget $\rho/q^2$. For
the default \Adult policy (\texttt{age}, $\theta=7$, reach $0.0192$)
at $\tau=0.03$, $q=0.31$, so \GraphSmooth's noise matches that of
\Hard with roughly $10\times$ the privacy budget. This equivalence
concerns noise only: the soft vote also changes the released
histogram, and Section~\ref{sec:exp-mechanism} separates the two
effects.

\mypar{Measures and protocol.}
The primary error is mean $\ell_1$ error over all two-way marginals
(\PairwiseErr) on \Adult and \Bank, and $\ell_1$ error over the
16-cell joint distribution of $(B_1,B_2,B_3,Y)$ (\HighOrderErr) on
\XORDS-3. Lower is better. Each release has 512 records over $T=8$ rounds
with equal budget allocation. We also report Blowfish $\epsilon$ at
$\delta=10^{-5}$ via Eq.~\eqref{eq:zcdp-conversion}; for example,
$\rho=10^{-4},6\times10^{-4},3\times10^{-3}$ give
$\epsilon\approx0.07,0.17,0.37$. Intervals are pointwise $95\%$
Student-$t$ intervals over paired runs. Best-of-sweep settings are
descriptive and uncorrected for selection
(\appref{app:exp-protocol}).

\subsection{Results}
\label{sec:exp-results}

\begin{figure*}[t]
 \centering
 \begin{subfigure}[b]{0.32\textwidth}
  \centering
  \includegraphics[width=\textwidth]{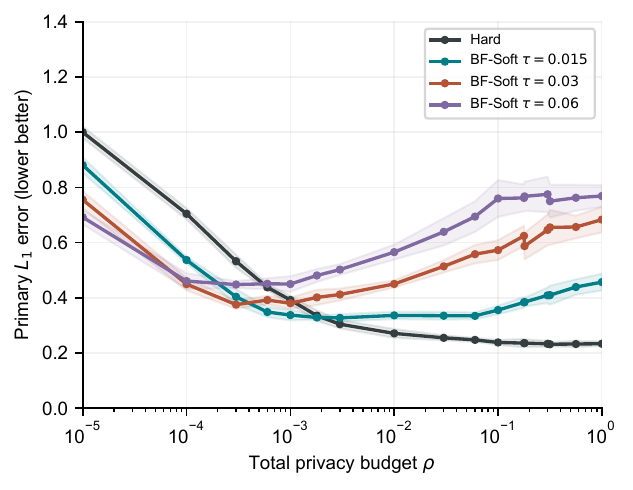}
  \caption{\Adult}
  \label{fig:crossover-adult}
 \end{subfigure}\hfill
 \begin{subfigure}[b]{0.32\textwidth}
  \centering
  \includegraphics[width=\textwidth]{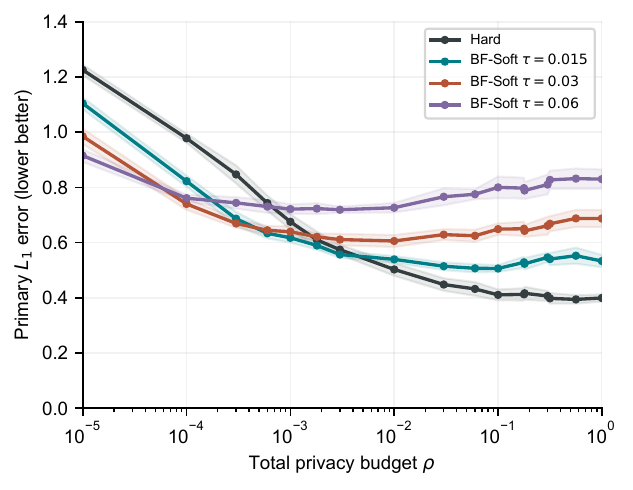}
  \caption{\Bank}
  \label{fig:crossover-bank}
 \end{subfigure}\hfill
 \begin{subfigure}[b]{0.32\textwidth}
  \centering
  \includegraphics[width=\textwidth]{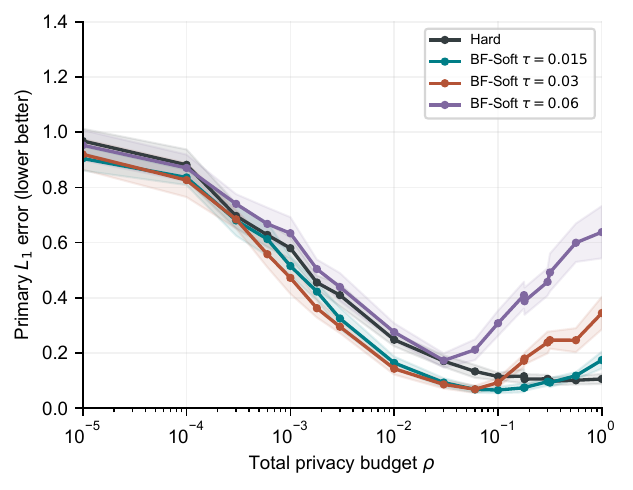}
  \caption{\XORDS-3}
  \label{fig:crossover-xor3}
 \end{subfigure}
 \vspace{-1mm}
 \caption{Error across privacy budgets at three fixed temperatures and
 default reach.}
 \label{fig:exp-crossover}
\end{figure*}

\subsubsection{Soft voting helps at strong privacy and reverses at weak privacy}
\label{sec:exp-budget}

\begin{table}[t]
\centering
\small
\caption{Noise ratio $q$, interpolated crossing budget $\rho_\times$
(linear in $\log\rho$ between neighboring measured budgets), and
corresponding $\epsilon$ at $\delta=10^{-5}$.}
\label{tab:crossings}
\setlength{\tabcolsep}{4pt}
\begin{tabular}{llccc}
\toprule
Dataset & $\tau$ & $q$ & $\rho_\times$ & $\epsilon$ \\
\midrule
\multirow{3}{*}{\Adult}
 & 0.015 & 0.56 & $2.0\times10^{-3}$ & 0.31 \\
 & 0.03  & 0.31 & $1.1\times10^{-3}$ & 0.23 \\
 & 0.06  & 0.16 & $5.5\times10^{-4}$ & 0.16 \\
\midrule
\multirow{3}{*}{\Bank}
 & 0.015 & 0.51 & $4.4\times10^{-3}$ & 0.46 \\
 & 0.03  & 0.28 & $1.6\times10^{-3}$ & 0.27 \\
 & 0.06  & 0.14 & $6.7\times10^{-4}$ & 0.18 \\
\midrule
\multirow{3}{*}{\XORDS-3}
 & 0.015 & 0.83 & $0.36$              & 4.5 \\
 & 0.03  & 0.53 & $0.12$              & 2.5 \\
 & 0.06  & 0.29 & $1.3\times10^{-4}$ & 0.08 \\
\bottomrule
\end{tabular}
\end{table}

Figure~\ref{fig:exp-crossover} compares \Hard and \GraphSmooth at
$\tau\in\{0.015,0.03,0.06\}$ over 17 budgets from $10^{-5}$ to $1$,
with twelve paired runs. Larger $\tau$ saves more noise:
$q=0.56,0.31,0.16$ on \Adult; $0.51,0.28,0.14$ on \Bank; and
$0.83,0.53,0.29$ on \XORDS-3. In all nine comparisons, \GraphSmooth has
lower mean error at $\rho=10^{-5}$ and higher error at $\rho=1$, with
one sign change between them.

On \Adult and \Bank, interpolated crossings occur at
$\rho\approx5.5\times10^{-4}$--$4.4\times10^{-3}$
($\epsilon\approx0.16$--$0.46$) and shift downward as $\tau$ grows
(Table~\ref{tab:crossings}). Thus more smoothing helps when privacy
noise dominates but hurts once smoothing cost dominates, consistent
with Section~\ref{sec:theory-convergence}. \XORDS-3 crossings range more
widely, from $\rho\approx1.3\times10^{-4}$ to $0.36$.

Training size has a similar effect. On \XORDS-3 at $\rho=0.003$ and
$\tau=0.03$, mean error reduction is $22.8\%$ at $n=12{,}000$,
$37.4\%$ at $48{,}000$, and $-89.2\%$ at $100{,}000$
(\appref{app:exp-trainsize}).

\begin{figure*}[t]
 \centering
 \begin{subfigure}[b]{0.32\textwidth}
  \centering
  \includegraphics[width=\textwidth]{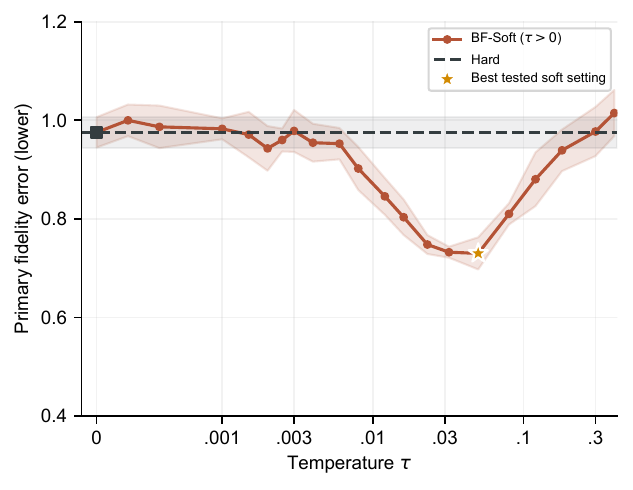}
  \caption{$\rho=10^{-4}$}
  \label{fig:interior-bank-a}
 \end{subfigure}\hfill
 \begin{subfigure}[b]{0.32\textwidth}
  \centering
  \includegraphics[width=\textwidth]{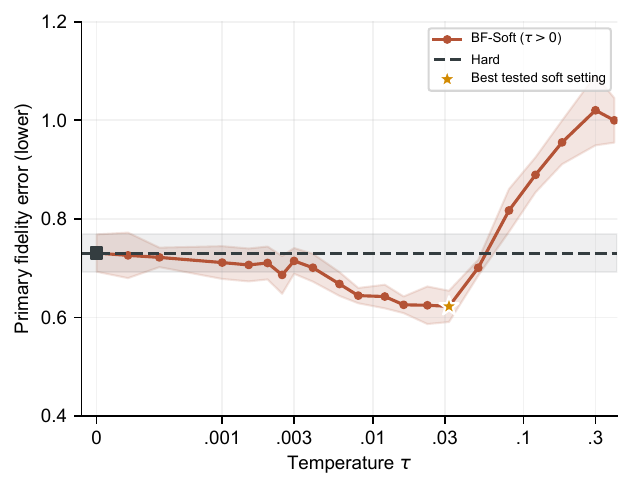}
  \caption{$\rho=6\times10^{-4}$}
  \label{fig:interior-bank-b}
 \end{subfigure}\hfill
 \begin{subfigure}[b]{0.32\textwidth}
  \centering
  \includegraphics[width=\textwidth]{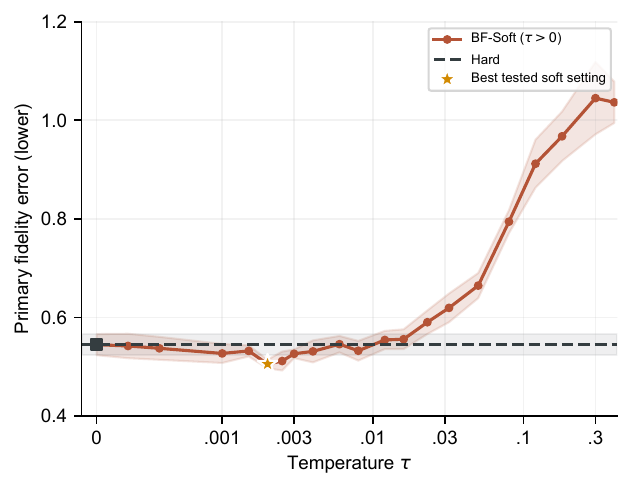}
  \caption{$\rho=3\times10^{-3}$}
  \label{fig:interior-bank-c}
 \end{subfigure} \vspace{-1mm}
 \caption{Temperature sweeps on \Bank.}
 \label{fig:exp-interior}
\end{figure*}

\subsubsection{Temperature sweeps and the source of the gain}
\label{sec:exp-interior}

Figure~\ref{fig:exp-interior} sweeps 20 temperatures from $0.00025$
to $0.4$ on \Bank; \appref{app:exp-interior} gives \Adult and \XORDS-3.
In all nine dataset--budget panels, the lowest sampled
\GraphSmooth mean occurs at an interior $\tau\in[0.002,0.05]$ and is
below \Hard.

At $\rho=10^{-4}$ ($\epsilon\approx0.07$), the best tested settings
reduce mean error from $0.696$ to $0.427$ on \Adult and $0.976$ to
$0.730$ on \Bank ($38.6\%$ and $25.2\%$). The gain is not confined to
one setting: \Adult at $10^{-4}$ and \Bank at $6\times10^{-4}$ remain
below \Hard for every tested $\tau\in[0.004,0.05]$. As $\rho$ grows,
the best $\tau$ falls: at $\rho=3\times10^{-3}$
($\epsilon\approx0.37$), it is $0.006$ on \Adult and $0.002$ on
\Bank, with $9.9\%$ and $7.2\%$ lower mean error. These
best-of-sweep results are optimistic; Section~\ref{sec:exp-tau-selection}
selects $\tau$ prospectively.

The corresponding noise ratios explain the source of the gain. At
$\rho=10^{-4}$, $q\approx0.29$ on \Adult and $0.17$ on \Bank, so
most of \Hard's noise is removed. At $3\times10^{-3}$,
$q\approx0.92$ and $1.00$: little or no noise is saved, and the
remaining gain comes from the soft vote itself. A standard-DP soft
vote would have essentially the same noise there. On \XORDS-3 the gain is
smaller and less certain: $18.4\%$ at $3\times10^{-3}$, while paired
intervals include zero at $10^{-4}$ and $6\times10^{-4}$.

At weaker budgets the gain on \XORDS-3 comes from the vote itself rather
than from saved noise. Jointly varying $\tau$ and the protected width
$\theta\in\{1,2,3,4\}$ gives largest mean relative gains of $39.8\%$,
$40.4\%$, and $19.9\%$ at $\rho=0.03$, $0.18$, and $1$
($\epsilon\approx1.2$, $3.1$, and $7.8$), at settings where
$r_d(G)/2\tau>5$ and hence $q\approx1$ (\appref{app:exp-xor-interaction}).
A standard-DP soft vote would receive essentially the same noise there,
so on this high-order workload smoothing can improve selection even
when it saves no privacy noise.

\begin{figure*}[t]
 \centering
 \begin{subfigure}[b]{0.32\textwidth}
  \centering
  \includegraphics[width=\textwidth]{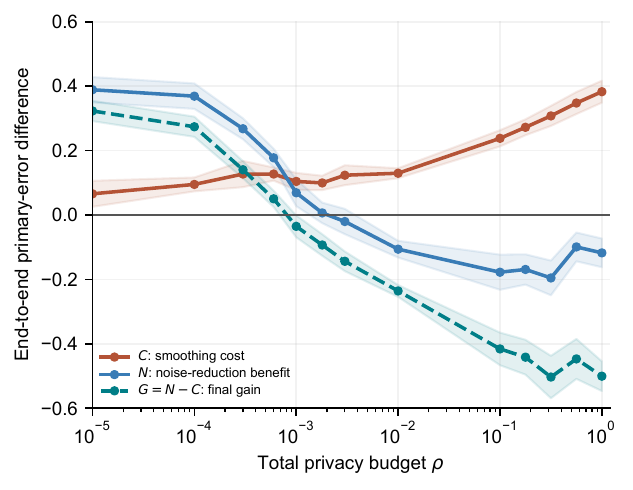}
  \caption{\Adult, $\tau=0.045$}
  \label{fig:mech-adult}
 \end{subfigure}\hfill
 \begin{subfigure}[b]{0.32\textwidth}
  \centering
  \includegraphics[width=\textwidth]{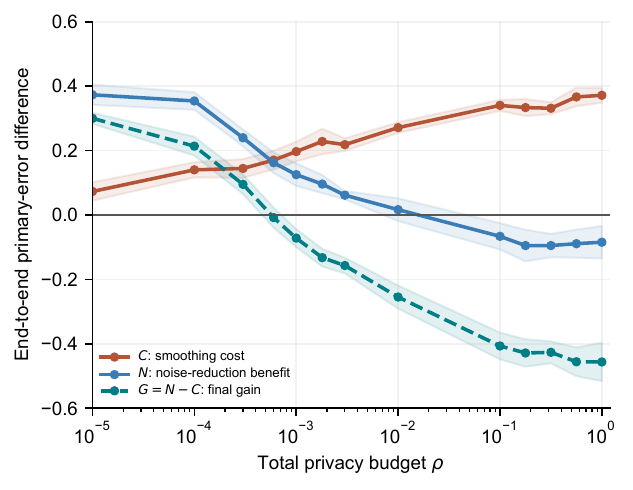}
  \caption{\Bank, $\tau=0.065$}
  \label{fig:mech-bank}
 \end{subfigure}\hfill
 \begin{subfigure}[b]{0.32\textwidth}
  \centering
  \includegraphics[width=\textwidth]{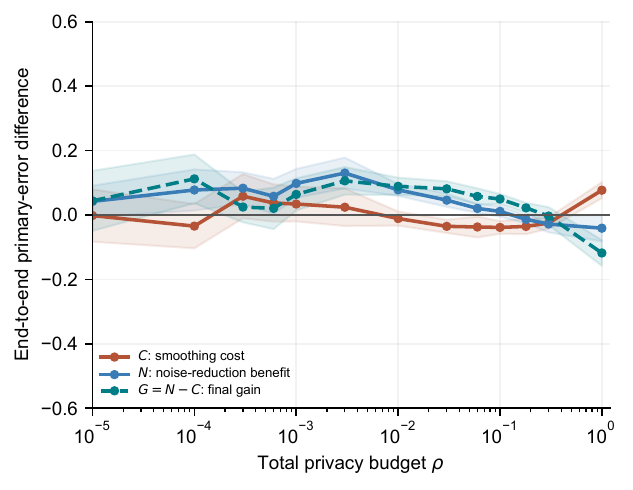}
  \caption{\XORDS-3, $\tau=0.02$}
  \label{fig:mech-xor3}
 \end{subfigure} \vspace{-1mm}
 \caption{Smoothing cost $C$, noise-reduction benefit $N$, and final
 gain $G=N-C$ (Eq.~\eqref{eq:exp-decomposition}).}
 \label{fig:exp-mechanism}
\end{figure*}

\begin{figure*}[t]
 \centering
 \begin{subfigure}[b]{0.32\textwidth}
  \centering
  \includegraphics[width=\textwidth]{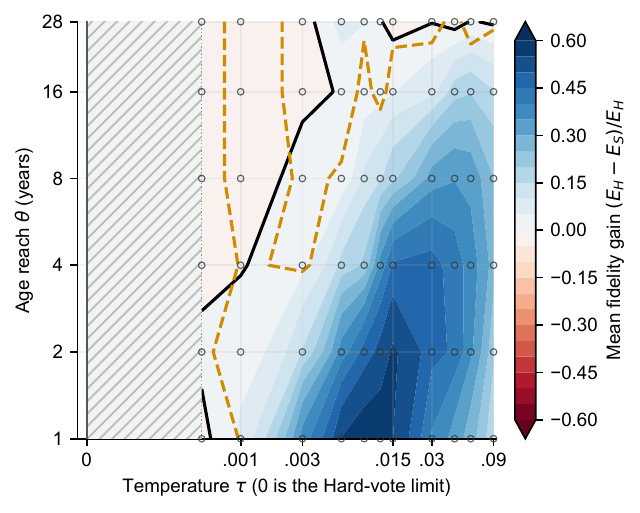}
  \caption{$\rho=10^{-4}$}
  \label{fig:forecast-adult-a}
 \end{subfigure}\hfill
 \begin{subfigure}[b]{0.32\textwidth}
  \centering
  \includegraphics[width=\textwidth]{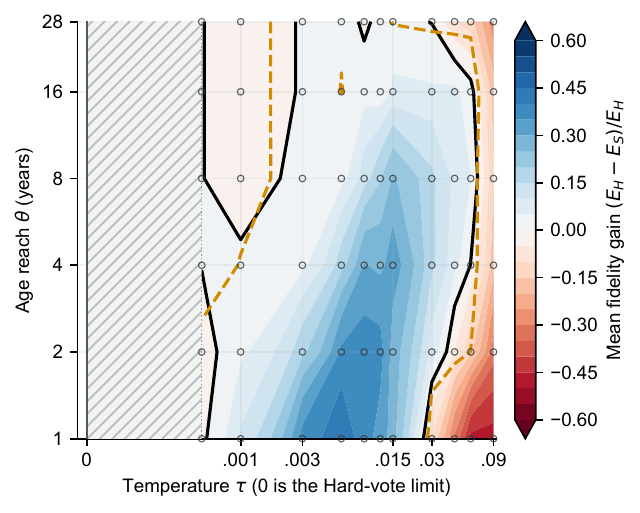}
  \caption{$\rho=6\times10^{-4}$}
  \label{fig:forecast-adult-b}
 \end{subfigure}\hfill
 \begin{subfigure}[b]{0.32\textwidth}
  \centering
  \includegraphics[width=\textwidth]{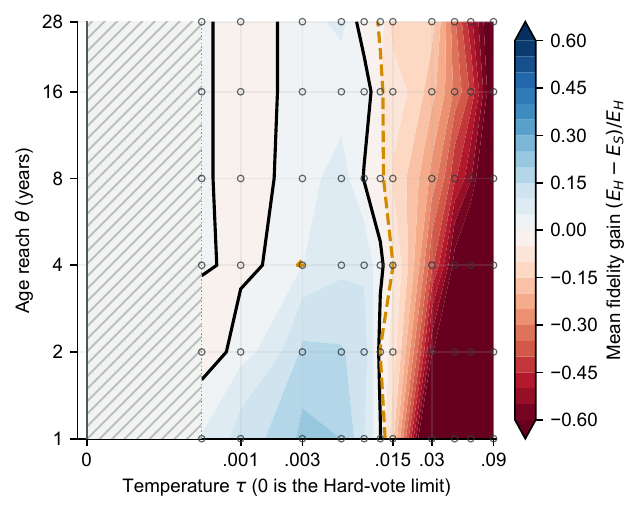}
  \caption{$\rho=3\times10^{-3}$}
  \label{fig:forecast-adult-c}
 \end{subfigure} 
 \caption{Relative gain on \Adult over temperature and \texttt{age}
 reach. Black: observed zero-gain contour. Gold: public-pilot forecast,
 frozen before the target runs.}
 \label{fig:exp-tau-selection}
\end{figure*}

\subsubsection{Noise savings versus smoothing cost}
\label{sec:exp-mechanism}

Let $E_H$ and $E_S$ be the errors of \Hard and \GraphSmooth, and
$E_M$ the error of the soft vote with \Hard's noise. Define
\begin{equation}
 C=E_M-E_H,\qquad
 N=E_M-E_S,\qquad
 G=E_H-E_S=N-C.
 \label{eq:exp-decomposition}
\end{equation}
Here $C$ is the smoothing cost at matched noise, $N$ the benefit of
policy-specific noise reduction, and \GraphSmooth wins iff $N>C$.
At these temperatures, $E_M$ is also the error of the soft vote under
standard DP.

Figure~\ref{fig:exp-mechanism} reports these terms at one temperature
per dataset over twelve paired runs. On \Adult and \Bank, $C>0$ at
every budget, so standard-DP soft voting is worse than \Hard; the
strong-privacy gain comes entirely from policy-induced noise
reduction. As $\rho$ grows, $N$ decreases while $C$ grows, producing
the crossover. On \XORDS-3 both effects are small; $N$ first rises and
then falls, consistent with severe noise and clipping limiting useful
selection at the smallest budgets, although the decomposition does
not isolate clipping.

The noise ratio alone does not determine fidelity. Holding $q$ fixed
within a dataset while varying reach and temperature together
($q=0.18$ on \Adult and \Bank with $\theta\in\{2,4,8\}$; $q=0.95$ on
\XORDS-3 with $\theta\in\{1,2,4\}$) still yields different error
curves across 25 budgets (\appref{app:exp-equal-noise}). The temperature
changes the vote itself, and with it the evolutionary trajectory, so
equal noise does not imply equal utility. This is the smoothing cost
$C$ seen from another angle: two settings with the same noise can
differ in how much selection signal their votes retain.

\subsubsection{Selecting $\tau$ without private data}
\label{sec:exp-tau-selection}

The beneficial region depends jointly on $\tau$, reach, and budget
and is not determined by the bound alone
(Remark~\ref{rem:no-optimal-tau}). We therefore test a public pilot.
Each pilot uses 3{,}000 records disjoint from private training data;
for \Adult and \Bank these are held-out validation records, while
\XORDS-3 uses separately generated data. We rescale the pilot budget to
match Gaussian noise in histogram proportions:
\begin{equation}
 \rho_p=\rho_t\left(\frac{n_t}{n_p}\right)^2,
 \qquad
 \frac{\sigma_p}{n_p}=\frac{\sigma_t}{n_t},
 \label{eq:exp-pilot-scaling}
\end{equation}
for pilot and target sizes $n_p,n_t$. This matches normalized noise
when sensitivity, output size, and round schedule agree, although
finite-sample vote distributions and adaptive trajectories may differ.

For \Adult, the pilot has $n_p=3{,}000$ records and the private table
$n_t=26{,}048$, so Eq.~\eqref{eq:exp-pilot-scaling} runs the pilot at
about $75\times$ the target budget. The pilot's noise is then the same
fraction of its histogram mass as the target's, even though the pilot
sees far fewer records.

On the original 210-setting \Adult grid, the scaled pilot predicts the
sign of mean gain in 190 cases ($90.5\%$), versus 134 ($63.8\%$)
without scaling. Sign accuracy overstates performance near the
zero-gain boundary and does not measure selected-setting utility.
Figure~\ref{fig:exp-tau-selection} shows that the pilot contour tracks
the target well at the two smaller budgets and less closely at
$3\times10^{-3}$ (\appref{app:exp-forecast}).
Section~\ref{sec:exp-policy} evaluates pilot-selected settings on
\Bank.

These pilots come from the same benchmark distribution as the target;
performance under independently collected or shifted public data is
untested. A public-only synthetic table is also a natural baseline
when such data are available, but we did not evaluate it.

\begin{figure*}[t]
 \centering
 \begin{subfigure}[b]{0.48\textwidth}
  \centering
  \includegraphics[width=\textwidth]{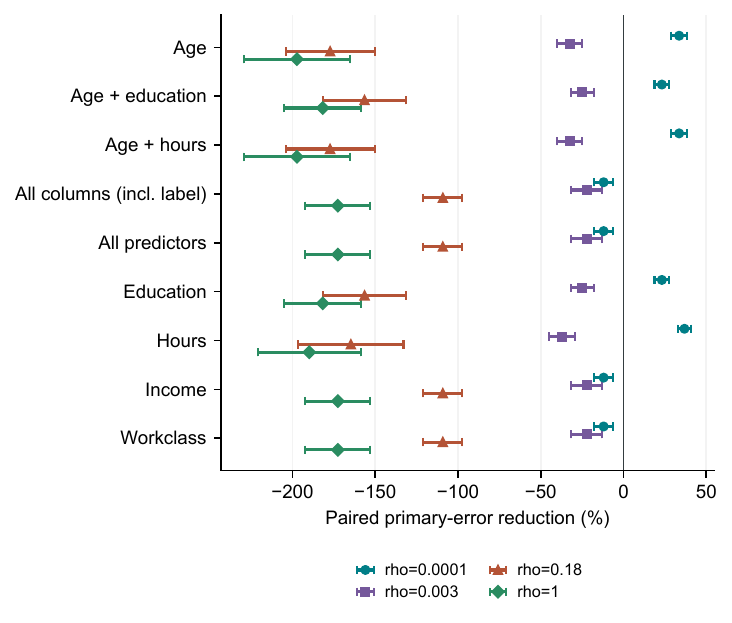}
  \caption{\Adult, fixed $\tau=0.03$}
  \label{fig:policy-adult}
 \end{subfigure}\hfill
 \begin{subfigure}[b]{0.48\textwidth}
  \centering
  \includegraphics[width=\textwidth]{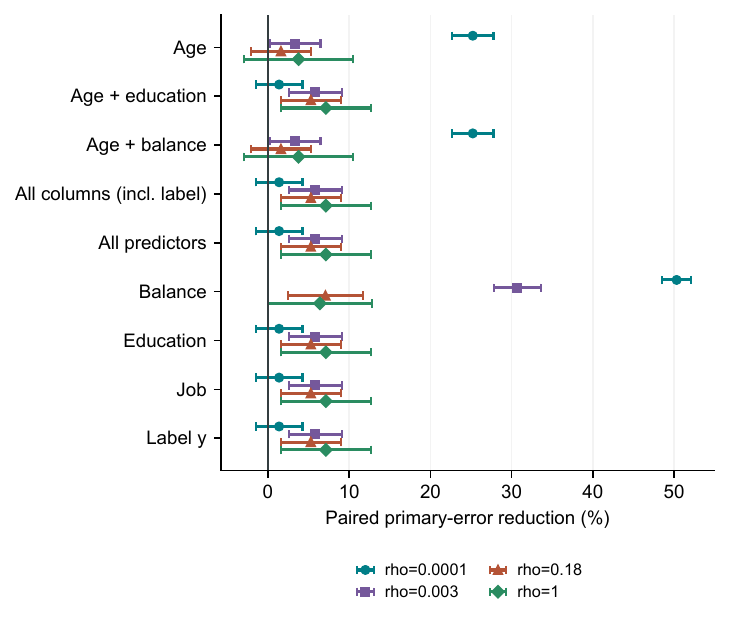}
  \caption{\Bank, method and $\tau$ selected on a public pilot}
  \label{fig:policy-bank}
 \end{subfigure} \vspace{-3mm}
 \caption{Paired error reduction relative to \Hard (positive favors
 \GraphSmooth) under single- and multi-attribute policies.}
 \label{fig:exp-policy-scope}
\end{figure*}

\begin{figure*}[t]
 \centering
 \begin{subfigure}[b]{0.32\textwidth}
  \centering
  \includegraphics[width=\textwidth]{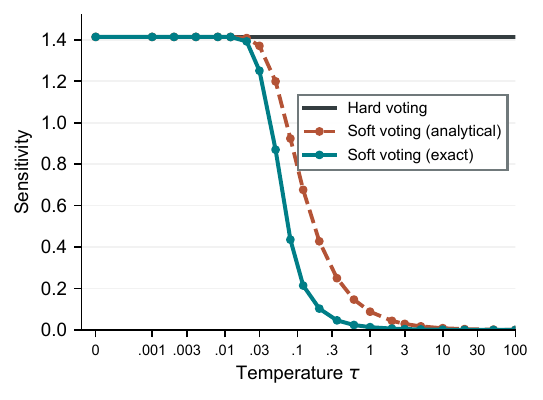}
  \caption{Sensitivity vs.\ temperature}
  \label{fig:sens-temp}
 \end{subfigure}\hfill
 \begin{subfigure}[b]{0.32\textwidth}
  \centering
  \includegraphics[width=\textwidth]{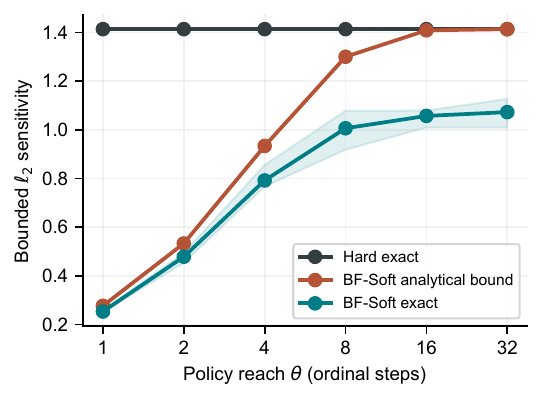}
  \caption{Sensitivity vs.\ reach}
  \label{fig:sens-reach}
 \end{subfigure}\hfill
 \begin{subfigure}[b]{0.32\textwidth}
  \centering
  \includegraphics[width=\textwidth]{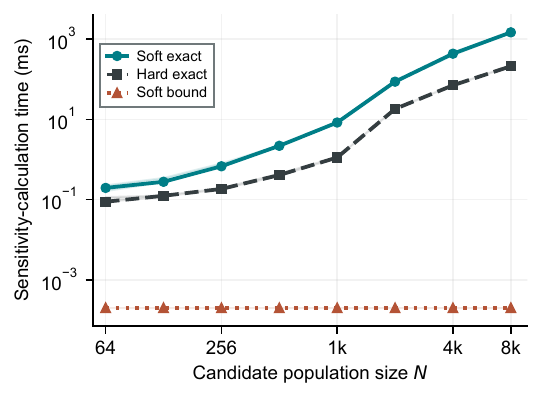}
  \caption{Calibration runtime}
  \label{fig:sens-runtime}
 \end{subfigure} \vspace{-3mm}
 \caption{Exact sensitivity against the bound of
 Corollary~\ref{cor:softmax}, and calibration cost. In (c), $N$ denotes
 the candidate population size $m$.}
 \label{fig:exp-sensitivity-summary}
\end{figure*}

\subsubsection{Which policies preserve the gain}
\label{sec:exp-policy}

Corollary~\ref{cor:attribute-decomp} and
Remark~\ref{rem:participation}(b) predict which policies can benefit.
Because reach is the maximum weighted per-attribute reach, adding
narrow numeric attributes need not change it. A flat categorical or
binary attribute raises it to that attribute's full Gower weight,
here $0.2$, making $q\approx1$ at useful temperatures.
Figure~\ref{fig:exp-policy-scope} tests this prediction on both real
datasets, with twelve paired runs per setting and Holm adjustment~\cite{holm1979simple} over 680 prespecified contrasts (\appref{app:exp-broader}).

On \Adult at $\tau=0.03$, protecting age alone reduces mean paired
error by $33.6\%$ ($[28.7,38.4]\%$) at $\rho=10^{-4}$, with
$q=0.31$. Adding \texttt{hours.per.week} leaves reach at $0.0192$,
so calibration, trajectories, and errors are unchanged. Numeric-only
policies retain positive reductions; policies including
\texttt{workclass} or \texttt{income} (reach $0.2$) do not.
Protecting all five columns gives $q=0.9975$ and $-12.0\%$
($[-17.6,-6.5]\%$), not significant after Holm adjustment
($p_{\mathrm{Holm}}=0.174$). At larger budgets, this fixed-temperature
\GraphSmooth is worse than \Hard under every policy.

On \Bank, a public pilot selects between \Hard and nine temperatures
before twelve target runs. Numeric-threshold policies retain a clear
advantage at $\rho=10^{-4}$: about $25\%$ for age and age+balance,
and $50\%$ for balance alone, whose reach is much smaller. Adding
\texttt{education}, \texttt{job}, or \texttt{y} reduces the gain below
about $10\%$. With all five columns protected, the pilot selects
$\tau=0.003$ at $\rho=0.003$, giving $5.8\%$ reduction
($[2.6,9.1]\%$), but $q=1$ shows that the gain comes from vote
smoothing rather than lower noise and it does not survive Holm
adjustment ($p_{\mathrm{Holm}}=0.613$). At $\rho=10^{-4}$,
$\tau=0.001$ gives $1.4\%$ ($[-1.5,4.3]\%$).

Thus $q$, computable from the policy before accessing private data,
largely predicts whether noise reduction remains available. Narrow
numeric policies with $q\ll1$ preserve the strong-privacy gain;
categorical or binary protection leaves little noise to save.
Graded categorical distances may recover part of this benefit
(Section~\ref{sec:method-categorical}).

\subsubsection{The calibration bound holds and is cheap}
\label{sec:exp-sensitivity-summary}

Figure~\ref{fig:exp-sensitivity-summary}\subref{fig:sens-temp}
compares exact sensitivity with the analytical bound over 1{,}320
configurations of populations, geometries, seeds, and temperatures
(\appref{app:exp-protocol}). Every measured \GraphSmooth sensitivity
lies below the bound. Both the bound and median exact sensitivity fall
with $\tau$; at $\tau\ge20$ both are below $1\%$ of $\sqrt2$. The
bound is conservative on these populations.

Increasing reach has the opposite effect
(Figure~\ref{fig:exp-sensitivity-summary}\subref{fig:sens-reach}):
both analytical and exact sensitivity increase. Every plotted policy
crosses a hard-vote boundary, so \Hard remains at $\sqrt2$, whereas
\GraphSmooth provides intermediate calibration levels.

Calibration is also cheap
(Figure~\ref{fig:exp-sensitivity-summary}\subref{fig:sens-runtime}).
With $m=8{,}192$ candidates and domain points, exact \Hard and
\GraphSmooth sensitivity take $211$\,ms and $1.48$\,s, respectively,
versus about $2\times10^{-4}$\,ms for the bound. On product domains
the gap is larger because exact calibration searches combinations of
unprotected attribute values, while the bound's cost is unchanged
(Section~\ref{sec:method-algorithm}).

\mypar{Practical guidance}
The results suggest a simple procedure for a data owner. First,
compute the noise ratio $q=\tanh(r_d(G)/2\tau)$ from the policy alone,
before touching private data. If the protected attributes are
numeric or ordinal with narrow thresholds, $q\ll1$ is achievable and
\GraphSmooth is most useful at strong privacy budgets, below the
crossings of Table~\ref{tab:crossings}
(Sections~\ref{sec:exp-budget} and~\ref{sec:exp-policy}). If the
policy protects a flat categorical or binary attribute, $q\approx1$
at useful temperatures, so little noise can be saved; there, \Hard
is the safer default unless a graded categorical distance is
available (Section~\ref{sec:method-categorical}). Second, choose
$\tau$, or fall back to \Hard, with a public-data pilot
(Section~\ref{sec:exp-tau-selection}); this spends no private budget.
Third, calibrate to the closed-form bound of
Corollary~\ref{cor:softmax}, which is computed once and remains valid
for every candidate population (Section~\ref{sec:exp-sensitivity-summary}).
\section{Conclusion and Future Work}
\label{sec:conclusion}

We asked whether grafting a Blowfish policy onto an evolutionary DP
synthesizer improves its utility, and what determines when it does.
The answer depends on the vote, not on the policy graph alone. For any
linear query, policy-specific sensitivity is a maximum over the
policy's edges (Lemma~\ref{lem:policy-sensitivity}). The hard vote's
sensitivity jumps to its worst case at the first protected boundary
crossing, so a policy helps only when no protected edge crosses a
boundary (Corollary~\ref{cor:indicator}). \GraphSmooth{} replaces it
with a soft vote whose sensitivity has a closed-form bound in the
policy reach and temperature, attained with two candidates and
approached with more (Corollary~\ref{cor:softmax},
Proposition~\ref{prop:lower-bound}). Participation edges and
maximal-distance protected attributes impose floors that this
reach-dependent reduction cannot remove
(Remark~\ref{rem:participation}).

A policy graph therefore reduces noise only when the released statistic
responds to the distances the policy encodes. The hard vote does not;
a soft vote does.

Temperature creates the corresponding utility trade-off: increasing
it lowers sensitivity but weakens the vote's ability to discriminate
among candidates, and our bounds do not identify an optimal value
(Remark~\ref{rem:no-optimal-tau}). On real and synthetic data, we
observe a beneficial region whose location shifts with policy reach
and privacy budget. A public-data pilot predicts the sign of the gain
on the original \Adult grid with $90.5\%$ accuracy without spending
private budget (Section~\ref{sec:exp-tau-selection}).

Our analysis covers linear queries under bounded, unconstrained
adjacency. The experiments consider both single- and multi-attribute
policies, with the strongest policy-induced noise reductions under
narrow numeric policies. We give no convergence analysis for
\GraphSmooth{} under \texttt{ResampleMutate}, and the public pilot is
validated only when pilot and target share a distribution.

A convergence analysis for \GraphSmooth{} and a regret guarantee for
public-pilot temperature selection would address these limitations.
The two relaxations of \PE's vote---\Wasp's rank-based sharing and our
distance-based smoothing---could also be combined. More broadly, the
sensitivity identity applies to any linear query under a graph- or
metric-based relaxation of DP~\cite{kifer2014pufferfish,chatzikokolakis2013broadening,he2014blowfish}.
Whether other synthesis paradigms, such as marginal-based mechanisms
or deep generative models trained under DP-SGD, admit a per-record
response that tracks a policy's notion of closeness remains open.

\newpage

\section*{Artifacts}

All artifacts needed to reproduce the results in this paper are publicly
available without registration or access monitoring at~\cite{blowfish-evo-synth-repo}.
The repository contains:
(i)~the implementation of Algorithm~\ref{alg:unified} with the three voting and calibration
variants (\texttt{Tab-PE}, \texttt{Hard}, and \texttt{BF-Soft});
(ii)~the policy-graph specifications of Table~\ref{tab:reference-sens} and the reach computation of
Corollary~\ref{cor:attribute-decomp};
(iii)~the generator for the synthetic \texttt{XOR} datasets;
(iv)~scripts that run each experiment in Section~\ref{sec:experiments} and regenerate
Figures~\ref{fig:exp-crossover}--\ref{fig:exp-sensitivity-summary}, including the public-pilot temperature selection and the
exact-sensitivity checks; and
(v)~a \texttt{README} describing the software environment, dependencies,
and the command for each experiment.
The \texttt{Adult}~\cite{becker1996adult} and
\texttt{Bank}~\cite{moro2014bankdata} datasets are publicly available from
the UCI Machine Learning Repository; the repository includes the
preprocessing scripts and the training/validation splits used for the
private data and the public pilots.
The extended version of the paper, with all appendices and full
proofs, is available at~\cite{extendedversion}.

\balance
\bibliographystyle{ACM-Reference-Format}
\bibliography{references}

\ifextended
\appendix
\appendix

\section{Extended Related-Work Differentiation}
\label{app:related}

Section~\ref{sec:related} gives the distinctions needed to position
our contribution. This appendix expands three comparisons that are
particularly close to the construction or analysis in this paper.

\paragraph{\Wasp{}'s rank-based weighting remains piecewise constant.}
\Wasp~\cite{zou2025wasp} replaces \PE's top-1 assignment with a
multi-candidate vote whose weights depend on candidate rank. This
changes how a record's voting mass is distributed among candidates,
but not how that mass responds continuously to the distance moved by
a protected substitution.

For a fixed candidate population, each candidate's rank is
piecewise constant as a function of the record location. A
substitution $x \rightarrow y$ that stays within the same ranking
region leaves the rank-based weights unchanged; crossing a rank
boundary changes one or more ranks discontinuously. This is the
same structural phenomenon that makes the top-1 nearest-neighbor map
piecewise constant over Voronoi cells
(Section~\ref{sec:theory-hardvote}), although a top-$Q$ rule can have
more than the two response levels of the one-hot vote.

Thus rank weighting can smooth how mass is shared \emph{across
candidates} without making the response graded in the
\emph{distance of the substitution}. Our construction acts on this
second axis: the softmax weights depend directly on candidate
distances, so moving a record by a short distance induces a
correspondingly bounded change in its vote. The two ideas are
compatible; a mechanism could combine multi-candidate selection with
a distance-sensitive weighting rule, which we leave to future work.

\paragraph{\Balldp{}'s radius restriction targets a different object.}
\Balldp~\cite{margaryan2026balldp} restricts standard DP's neighbor
relation to substitutions within a fixed radius of an underlying
distance. At the level of adjacency, this corresponds to a threshold
policy graph as in Definition~\ref{def:threshold-graph}. Its
application, however, is convex empirical risk minimization rather
than evolutionary synthesis.

Accordingly, the quantity that must be controlled is different.
\Balldp bounds how far the output of an optimization procedure can
move after a radius-bounded perturbation of one training record. That
change depends on the structure of the empirical-risk objective and
its optimization. Our setting instead releases, at each round, an
explicit aggregate of per-record votes. This makes the relevant
quantity the norm of the change in a fixed record-level response
$v_{C_t}(x)$, conditional on the current candidate population.
Neither the voting construction, the candidate evolution, nor the
closed-form soft-vote sensitivity bound studied here appears in
\Balldp.

The common point is the question posed by a radius-restricted neighbor
relation: whether restricting the allowed substitution distance
actually reduces the sensitivity of the object being released. The
answer depends on that object's response to the substitution. Our
hard-vote result provides an example where the restriction can fail to
give a graded reduction, while the soft vote is designed to exploit
it.

\paragraph{Generic softmax bounds versus the policy-substitution bound.}
Temperature-scaled softmax is a well-studied smooth approximation to
a hard selection rule. Generic Lipschitz bounds control
\[
\|\operatorname{softmax}(z)-\operatorname{softmax}(z+a)\|
\]
in terms of a norm of the arbitrary logit perturbation $a$; related
bounds arise both in work on the exponential
mechanism~\cite{mcsherry2007mechanism,optimal-approx-softmax} and in
direct analyses of the softmax map~\cite{nair2026softmax}.

For our logits
\[
z(x)_j=-d(x,c_j)/\tau,
\]
the triangle inequality gives
\[
|z(x)_j-z(y)_j| \le d(x,y)/\tau
\]
for every candidate $j$. A direct $\ell_2$ treatment therefore gives
\[
\|z(x)-z(y)\|_2
    \le \sqrt{m}\,d(x,y)/\tau,
\]
which introduces a dependence on the candidate count $m$ even though
the perturbation has additional structure.

The important structure is that softmax is invariant to adding the
same constant to all logits. Consequently, the absolute size of the
logit displacement is not the relevant quantity; only its range,
\[
\operatorname{osc}(a)=\max_j a_j-\min_j a_j,
\]
can change the normalized probabilities. For a policy substitution,
Section~\ref{sec:theory-softmax} shows
\[
\operatorname{osc}(z(x)-z(y))
    \le 2d(x,y)/\tau,
\]
with no dependence on $m$.

Lemma~\ref{lem:tv-tilt} then specializes the total-variation bound for
exponential tilts of Cohen and Fausti~\cite{cohen2023hyperbolic} to
obtain a nonlinear bound that saturates as the perturbation grows.
After conversion from $\ell_1$ to $\ell_2$, this gives
Corollary~\ref{cor:softmax},
\[
S_2(H_t,P)
\le
\sqrt{2}\tanh\!\left(\frac{r_d(G)}{2\tau}\right).
\]
Unlike a linear Lipschitz upper bound, this expression never exceeds
the $\sqrt{2}$ diameter relevant to two simplex vertices.
Proposition~\ref{prop:lower-bound} further shows that the bound is
attained for two candidates and can be approached arbitrarily closely
for any larger candidate population. Thus the contribution is not the
use of softmax itself, but the policy-specific oscillation argument
that yields a candidate-count-independent and tight calibration bound.

\paragraph{Constraint-aware synthesis and syntactic privacy.}
A policy graph is one way for a data owner to state application
knowledge that a private release should respect. PrivCI~\cite{ghahramanpour2026privci}
encodes a different kind of knowledge, conditional-independence (CI)
constraints between sensitive attributes and outcomes, inside a
marginal-based synthesizer under standard DP. It enforces these
constraints in the measurement step by adding feasibility checks to a
maximum-spanning-tree construction driven by the exponential mechanism,
so that the synthetic data does not reproduce spurious correlations.
PrivCI changes which statistics are measured while keeping the
protection target fixed. BF-Soft does the reverse: it keeps the
evolutionary measurement fixed and changes how the released statistic
responds to the substitutions the policy protects. The two directions
are orthogonal and could in principle be combined. Earlier work studies
privacy outside the DP setting. Diversity-constrained
anonymization~\cite{milani2021preserving,milani2023anonymization}
extends $k$-anonymization with owner-specified constraints on how often
selected attribute values must appear in the anonymized relation, and
PACAS~\cite{huang2018pacas,huang2020privacy} supports privacy-aware data
cleaning in which a service provider discloses generalized values from
its master data. Like Blowfish policies, these constraints let the data
owner state requirements explicitly. Unlike Blowfish, the protection is
syntactic and carries no noise-calibrated guarantee.

\section{PE's Post-Hoc Threshold \texorpdfstring{$H$}{H}}
\label{app:pe-threshold}

\PE's \texttt{DP\_NN\_HISTOGRAM}~\cite{lin2024privateevolution}
takes a threshold $H$ alongside the noise multiplier and distance
function. After Gaussian noise is added to the vote histogram,
entries of $\tilde H_t$ below $H$ are zeroed before resampling. The
purpose is practical: a hard-vote histogram over many candidates is
sparse, and without thresholding, candidates whose counts are
inflated primarily by noise may survive into
\texttt{ResampleMutate}. $H$ is tuned empirically in \PE and is not
adapted across rounds in~\cite{lin2024privateevolution}.

\paragraph{$H$ is post-processing.}
Every sensitivity result in Section~\ref{sec:theory} concerns the
pre-noise histogram $H_t(D;C_t)$. Its policy sensitivity
$S_2(H_t,P)$ determines the Gaussian noise scale required for a given
privacy budget. Thresholding is applied only after the noisy
histogram $\tilde H_t$ has been produced. It is therefore
post-processing: it consumes no additional privacy budget and changes
neither global nor policy-specific sensitivity. The same public
threshold can thus be applied after \TabPE, \Hard, or \GraphSmooth
without changing any privacy claim in Sections~\ref{sec:method}
and~\ref{sec:theory}.

\paragraph{Retuning $H$ under soft voting.}
Although thresholding does not affect privacy, its utility effect may
depend on the vote. A hard-vote histogram is typically sparse, whereas
a soft vote distributes each record's unit mass across several
candidates before noise is added. A threshold tuned for a sparse
hard-vote histogram may therefore remove legitimate soft-vote mass,
particularly at large $\tau$. At small $\tau$, the soft vote approaches
the hard vote and the original thresholding behavior is recovered.

Whether $H$ should be retuned specifically for soft voting is therefore
an implementation and utility question rather than a privacy question.
We do not analyze this choice here; all results in
Section~\ref{sec:theory} concern $H_t(D;C_t)$ before either noise or
post-processing.

\section{Algorithmic Comparison}
\label{app:algorithm}

Section~\ref{sec:method-instantiation} gives the main algorithmic
comparison: \TabPE, \Hard/\BFEvo, and \GraphSmooth differ only in
their per-record vote and sensitivity calibration.
This appendix records two details that are useful but not needed for
the main construction: the hard vote's behavior under other adjacency
notions and the cost of exact calibration on record-level policy
graphs.


\paragraph{Other adjacency notions.}
Corollary~\ref{cor:indicator} assumes bounded, unconstrained
replacement adjacency, under which one record's one-hot vote either
stays in the same bin or moves from one bin to another. The
corresponding $\ell_2$ changes are $0$ and $\sqrt2$.

With an add/remove edge $(u,\bot)$, one unit vote appears or
disappears, giving both $\ell_1$ and $\ell_2$ change $1$. More
generally, public constraints can make two minimally different
feasible databases differ in several rows. If at most $k$ aligned
record changes are required, triangle inequality gives the coarse
bounds $2k$ in $\ell_1$ and $k\sqrt2$ in $\ell_2$.
Table~\ref{tab:hard-vote} summarizes these cases. Our main analysis
and experiments use bounded, unconstrained replacement adjacency.

\paragraph{Exact calibration for Hard.}
For a fixed candidate population $C_t$, Hard's exact policy
sensitivity requires determining whether any protected edge
$(u,v)\in E(G)$ satisfies
\[
\mathrm{nn}_{C_t}(u)\ne \mathrm{nn}_{C_t}(v).
\]
A single crossing certifies the value $\sqrt2$, so the search can
terminate once one is found. To certify sensitivity $0$, however,
every protected edge must be checked. Because the population changes
across rounds, this test must in principle be repeated for each
$C_t$.

If nearest candidates are found by brute force, checking one edge
costs $O(m)$ candidate-distance evaluations per endpoint. A complete
scan therefore costs
\[
O\!\left(|E(G)|\cdot \mathrm{cost}(\mathrm{nn})\right)
\]
per round, where $\mathrm{cost}(\mathrm{nn})=O(m)$ for a brute-force
lookup and can be smaller when the metric space admits an appropriate
index. The universal value $\sqrt2$ is always a valid constant-time
fallback when exact policy calibration is not computed.

\paragraph{Record-level graph size.}
The relevant $E(G)$ above is the edge set of the
\emph{record-level} product graph, not merely the union of the
per-attribute value graphs. Suppose an edge protects the pair
$(a,a')$ on attribute $i$. At record level, that pair induces an edge
for every possible combination of the remaining attributes:
\[
(x_1,\ldots,a,\ldots,x_p)
\longleftrightarrow
(x_1,\ldots,a',\ldots,x_p).
\]
Consequently, even simple per-attribute policies can induce a
record-level graph whose number of edges grows exponentially with the
number of attributes. Exact \GraphSmooth sensitivity requires the
same maximum over this graph, because
Lemma~\ref{lem:policy-sensitivity} ranges over every protected
record-level edge.

This distinction is why the analytical \GraphSmooth calibration is
useful computationally. Once the weighted per-attribute reaches have
been obtained, Corollary~\ref{cor:attribute-decomp} gives the
record-level reach by taking their maximum, and
Corollary~\ref{cor:softmax} evaluates the resulting sensitivity bound
in constant time. No record-level graph needs to be enumerated.

\paragraph{Voting versus calibration cost.}
The voting step itself has the same asymptotic cost for all three
methods. For each private record, both nearest-neighbor assignment and
softmax voting require distances to the $m$ current candidates, hence
$O(m)$ work with a direct implementation. \GraphSmooth therefore
does not add asymptotic voting cost relative to \TabPE or \Hard.

The difference lies in calibration. \TabPE uses the fixed value
$\sqrt2$. \GraphSmooth uses a bound computed once from $r_d(G)$ and
$\tau$. Hard's exact policy calibration can instead require the
round-dependent graph search described above. The experiment in
Section~\ref{sec:exp-sensitivity-summary} compares these calibration
costs on a one-attribute domain; it does not measure end-to-end
synthesis time.

\section{Distances for Categorical Attributes}
\label{app:categorical}

Gower's default categorical term treats every pair of distinct values
alike: same value or not, with no nearer or farther mismatch. Under
this term, a protected categorical attribute contributes its full
Gower weight to the reach (Section~\ref{sec:method-categorical},
Remark~\ref{rem:participation}). The remedy lies in $d$, not in the
vote or the policy graph. If the attribute's values carry a distance
with more than one nonzero level, the soft vote and the bound of
Corollary~\ref{cor:softmax} apply unchanged, since neither assumes
$d$ is numeric, only that it is a metric.

When the values already fall into a taxonomy, such as job codes,
product categories, or administrative regions, a natural choice is a
\emph{hierarchy}. This is a rooted tree over the values, with
tree-path distance
\[
d(u,v)=\mathrm{depth}(u)+\mathrm{depth}(v)-2\,\mathrm{depth}(\mathrm{lca}(u,v)),
\]
the number of edges between $u$ and $v$ through their lowest common
ancestor. This is a metric. With all leaves at depth $L$, it takes the
nonzero values $2,4,\dots,2L$.

A threshold policy on this distance can therefore choose how deep a
protected substitution may reach, from siblings under one parent up
to the whole tree. After normalizing by the diameter $2L$, protecting
only sibling substitutions gives the attribute reach $w/L$ rather
than $w$. For \Bank's \texttt{job} ($L=2$), this halves the reach
from $0.2$ to $0.1$. That is still large compared with a narrow
numeric threshold such as \texttt{age}'s $0.017$, so a hierarchy
recovers gradation but not, at shallow depth, a small reach.

The flat block partition is the depth-$1$ case, with a single nonzero
distance. A binary attribute admits no gradation under any distance,
because a two-point domain has only one nonzero distance.

We use this construction only for \Bank's \texttt{job}, whose
grouping is given in Appendix~\ref{app:exp-policy}. Other categorical
attributes and the binary labels keep Gower's default term, since
they have no principled hierarchy to draw on.

\section{Full Proofs}
\label{app:proofs}

This appendix gives the complete proofs of the results stated in
Section~\ref{sec:theory}. The main text includes the central
derivations needed to understand the sensitivity bound; here we give
the remaining algebra and equality conditions in full.

\subsection{Proofs for Section~\ref{sec:theory-identity}}

\noindent\textbf{Lemma~\ref{lem:policy-sensitivity} (restated).}
For a linear query $f(D)=\sum_{x\in D}\phi(x)$, any policy graph
$G=(\mathcal{T},E)$, unconstrained $\mathcal{I}_Q$, and bounded
adjacency,
\[
S_p(f,P)
=
\max_{(u,v)\in E}
\|\phi(u)-\phi(v)\|_p,
\]
with value $0$ when $E=\emptyset$.

\begin{proof}
Under bounded, unconstrained adjacency, a policy neighbor $D'$ of
$D$ replaces exactly one record's value $u$ by some $v$ such that
$(u,v)\in E$, leaving every other record unchanged. Hence
\[
f(D)-f(D')
=
\sum_{x\in D}\phi(x)
-
\sum_{x\in D'}\phi(x)
=
\phi(u)-\phi(v).
\]
Conversely, every edge $(u,v)\in E$ can be realized by a pair of
policy-neighboring databases containing the same remaining records.
Taking the maximum over Definition~\ref{def:policy-sens} therefore
gives
\[
S_p(f,P)
=
\max_{(u,v)\in E}
\|\phi(u)-\phi(v)\|_p.
\]
If $E=\emptyset$, there are no protected substitution edges and the
maximum is defined as $0$.
\end{proof}

\noindent\textbf{Corollary~\ref{cor:attribute-decomp} (restated).}
For a product domain
$\mathcal{T}=\mathcal{T}_1\times\cdots\times\mathcal{T}_p$ with
per-attribute graphs $G_i=(\mathcal{T}_i,E_i)$ and record-level
product graph $G$,
\[
S_p(f,P)=\max_i S_{p,i}(f,P).
\]
Under a Gower-style distance
$d(x,y)=\sum_i w_i d_i(x_i,y_i)$,
\[
r_d(G)=\max_i w_i r_{d_i}(G_i).
\]

\begin{proof}
By construction, every edge of $G$ changes exactly one attribute.
Write
\[
E=\bigcup_{i=1}^p E_i',
\]
where $E_i'$ contains the record-level edges that agree on every
attribute except $i$ and differ on attribute $i$ by an edge of $E_i$.
These sets are disjoint. By Lemma~\ref{lem:policy-sensitivity},
\[
S_p(f,P)
=
\max_{(u,v)\in E}
\|\phi(u)-\phi(v)\|_p
=
\max_i
\max_{(u,v)\in E_i'}
\|\phi(u)-\phi(v)\|_p,
\]
which is the first claim.

For an edge $(u,v)\in E_i'$, all coordinates except $i$ agree, so
\[
d(u,v)
=
w_i d_i(u_i,v_i).
\]
Maximizing first over $E_i'$ and then over attributes gives
\[
r_d(G)
=
\max_i
w_i r_{d_i}(G_i).
\]
\end{proof}

Consequently, the analytical \GraphSmooth calibration depends only
on the largest weighted per-attribute reach. Exact soft-vote
sensitivity can still depend on the candidate population, and exact
Hard sensitivity depends on whether current assignments cross a
policy edge.

\subsection{Proof for Section~\ref{sec:theory-hardvote}}

\noindent\textbf{Corollary~\ref{cor:indicator} (restated).}
For the hard vote
$\phi(u)=e_{\mathrm{nn}(u)}$ and any policy graph
$G=(\mathcal{T},E)$,
\[
S_2(f,P)=\sqrt2
\]
if some $(u,v)\in E$ has
$\mathrm{nn}(u)\ne\mathrm{nn}(v)$, and
$S_2(f,P)=0$ otherwise.

\begin{proof}
If $\mathrm{nn}(u)=\mathrm{nn}(v)$, then
\[
\phi(u)=\phi(v)
\]
and the difference has norm $0$.

Otherwise,
\[
e_{\mathrm{nn}(u)}
-
e_{\mathrm{nn}(v)}
\]
has exactly one coordinate equal to $+1$, one equal to $-1$, and all
remaining coordinates equal to $0$. Its $\ell_2$ norm is therefore
$\sqrt2$. Lemma~\ref{lem:policy-sensitivity} takes the maximum of
this two-valued quantity over $E$.
\end{proof}

\subsection{Proofs for Section~\ref{sec:theory-softmax}}

\noindent\textbf{Lemma~\ref{lem:tv-tilt} (restated).}
Let $p$ be a probability distribution on
$\{1,\dots,m\}$ and
\[
q_j
=
\frac{p_j e^{a_j}}
{\sum_k p_k e^{a_k}}
\]
with $\mathrm{osc}(a)\le s$. Then
\[
\|p-q\|_1
\le
2\tanh(s/4),
\]
and the bound is attained by a suitable two-point distribution.

This is a specialization of the Hilbert-projective-metric bound of
Cohen and Fausti~\cite{cohen2023hyperbolic}. For completeness we give
the elementary argument.

\begin{proof}
Adding the same constant to all $a_j$ leaves $q$ unchanged. We may
therefore center them so that
\[
a_j\in[-s/2,s/2].
\]
Let
\[
W_j=e^{a_j},
\qquad
L=e^{-s/2},
\qquad
U=e^{s/2},
\]
so $W_j\in[L,U]$ and $LU=1$. Define
\[
Z=\sum_j p_jW_j
=
\mathbb{E}_p[W].
\]
Then
\[
q_j-p_j
=
p_j\frac{W_j-Z}{Z}.
\]
Because
\[
\sum_j p_j(W_j-Z)=0,
\]
the positive and negative parts have equal total mass, hence
\[
\|p-q\|_1
=
\frac{2}{Z}
\mathbb{E}_p[(W-Z)^+].
\]

For any $w\in[L,U]$, write
\[
w
=
\alpha(w)U+(1-\alpha(w))L,
\qquad
\alpha(w)
=
\frac{w-L}{U-L}.
\]
The function
\[
g_Z(w)=(w-Z)^+
\]
is convex. Since $L\le Z\le U$, its values at the endpoints are
$0$ and $U-Z$. Convexity therefore gives the chord bound
\[
(w-Z)^+
\le
\alpha(w)(U-Z).
\]
Taking expectation,
\[
\mathbb{E}_p[(W-Z)^+]
\le
\frac{Z-L}{U-L}(U-Z).
\]
Thus
\[
\|p-q\|_1
\le
\frac{2}{Z}
\frac{(Z-L)(U-Z)}{U-L}.
\]
Using $LU=1$, this becomes
\[
\|p-q\|_1
\le
\frac{2}{U-L}
\left[
(U+L)
-
\left(
Z+\frac1Z
\right)
\right].
\]
By AM--GM,
\[
Z+\frac1Z\ge2,
\]
with equality at $Z=1$. Since $1\in[L,U]$, this value is feasible and
maximizes the upper bound. Therefore
\[
\|p-q\|_1
\le
\frac{2(U+L-2)}{U-L}.
\]
Substituting
$U=e^{s/2}$ and $L=e^{-s/2}$,
\[
\frac{2(U+L-2)}{U-L}
=
\frac{2(\cosh(s/2)-1)}
{\sinh(s/2)}
=
2\tanh(s/4).
\]

For equality, the chord bound requires $W$ to be supported only on
$\{L,U\}$, and the second step requires $Z=1$. Choosing probabilities
so that
\[
\Pr(W=L)
=
\frac{U-1}{U-L},
\qquad
\Pr(W=U)
=
\frac{1-L}{U-L}
\]
satisfies both conditions. Hence a two-point distribution attains the
bound.
\end{proof}

\noindent\textbf{Corollary~\ref{cor:softmax} (restated).}
For the soft-vote histogram $H_t(D;C_t)$ and any policy graph
$G$,
\[
S_2(H_t,P)
\le
\sqrt2
\tanh\!\left(
\frac{r_d(G)}{2\tau}
\right),
\]
independent of $m=|C_t|$. For a threshold policy $G_\theta$, this
becomes
\[
S_2(H_t,P_\theta)
\le
\sqrt2\tanh(\theta/2\tau).
\]

\begin{proof}
By Lemma~\ref{lem:policy-sensitivity}, it suffices to consider one
policy edge $(x,y)\in E$. Let
\[
z(x)_j=-d(x,c_j)/\tau,
\qquad
z(y)_j=-d(y,c_j)/\tau.
\]
For each candidate,
\[
|z(x)_j-z(y)_j|
=
\frac{
|d(x,c_j)-d(y,c_j)|
}{\tau}
\le
\frac{d(x,y)}{\tau}
\]
by the reverse triangle inequality. Hence
\[
\mathrm{osc}(z(y)-z(x))
\le
\frac{2d(x,y)}{\tau}.
\]

Moreover,
\[
v_{C_t}(y)_j
=
\frac{
v_{C_t}(x)_j
\exp(z(y)_j-z(x)_j)
}{
\sum_k
v_{C_t}(x)_k
\exp(z(y)_k-z(x)_k)
},
\]
so $v_{C_t}(y)$ is precisely an exponential tilt of
$v_{C_t}(x)$. Applying Lemma~\ref{lem:tv-tilt} with
\[
s=\frac{2d(x,y)}{\tau}
\]
gives
\[
\|v_{C_t}(x)-v_{C_t}(y)\|_1
\le
2\tanh\!\left(
\frac{d(x,y)}{2\tau}
\right).
\]

Let
\[
z=v_{C_t}(x)-v_{C_t}(y).
\]
Because both votes sum to $1$,
$\sum_jz_j=0$. Writing $z=z^+-z^-$, the positive and negative
parts have equal $\ell_1$ mass, say $a$, so
\[
\|z\|_1=2a.
\]
Also,
\[
\|z\|_2^2
=
\|z^+\|_2^2+\|z^-\|_2^2
\le
\|z^+\|_1^2+\|z^-\|_1^2
=
2a^2
=
\frac{\|z\|_1^2}{2}.
\]
Therefore
\[
\|z\|_2
\le
\frac{\|z\|_1}{\sqrt2}.
\]
Combining the two bounds,
\[
\|v_{C_t}(x)-v_{C_t}(y)\|_2
\le
\sqrt2
\tanh\!\left(
\frac{d(x,y)}{2\tau}
\right).
\]
Finally,
$d(x,y)\le r_d(G)$ for every policy edge, so maximizing over
$E$ proves the claim.
\end{proof}

\noindent\textbf{Proposition~\ref{prop:lower-bound} (restated).}
For every $\tau>0$, $\theta>0$, and $m\ge2$, there exist a
finite domain $\mathcal{T}\subset\mathbb{R}$, $m$ candidates, and a
threshold-$\theta$ policy such that
\[
S_2(H_t,P_\theta)
=
\sqrt2\tanh(\theta/2\tau)
\]
for $m=2$, and for every $\varepsilon>0$,
\[
S_2(H_t,P_\theta)
\ge
\sqrt2\tanh(\theta/2\tau)-\varepsilon
\]
for $m>2$.

\begin{proof}
Fix $L\ge\theta$. Let
\[
c_1=0,
\qquad
c_2=L,
\]
and define
\[
u=\frac{L-\theta}{2},
\qquad
v=\frac{L+\theta}{2}.
\]
Then
\[
d(u,v)=\theta,
\]
so $(u,v)$ is a protected edge under a threshold-$\theta$ policy.
Also,
\[
d(u,c_2)-d(u,c_1)=\theta,
\qquad
d(v,c_2)-d(v,c_1)=-\theta.
\]

For $m=2$, the first coordinate of the soft vote is
\[
v_{C_t}(x)_1
=
\frac{
e^{-d(x,c_1)/\tau}
}{
e^{-d(x,c_1)/\tau}
+
e^{-d(x,c_2)/\tau}
}
=
\sigma\!\left(
\frac{
d(x,c_2)-d(x,c_1)
}{\tau}
\right),
\]
where $\sigma$ is the logistic function. Hence
\[
v_{C_t}(u)_1
=
\sigma(\theta/\tau),
\qquad
v_{C_t}(v)_1
=
\sigma(-\theta/\tau).
\]
The second coordinates are complementary, so
\[
\|v_{C_t}(u)-v_{C_t}(v)\|_2
=
\sqrt2
\left[
\sigma(\theta/\tau)
-
\sigma(-\theta/\tau)
\right].
\]
Using
\[
\sigma(z)-\sigma(-z)=\tanh(z/2),
\]
we obtain
\[
\|v_{C_t}(u)-v_{C_t}(v)\|_2
=
\sqrt2\tanh(\theta/2\tau).
\]
Corollary~\ref{cor:softmax} provides the matching upper bound, so
equality holds.

For $m>2$, place the remaining candidates sufficiently far from
$u$ and $v$. For example, take
\[
c_k=L+R+k,
\qquad
k=3,\dots,m.
\]
As $R\to\infty$,
\[
e^{-d(x,c_k)/\tau}\to0
\]
uniformly for $x\in\{u,v\}$ and $k\ge3$. The first two coordinates
of both votes therefore converge to the two-candidate softmax, while
the remaining coordinates converge to $0$. Consequently,
\[
\|v_{C_t}(u)-v_{C_t}(v)\|_2
\to
\sqrt2\tanh(\theta/2\tau).
\]
For any prescribed $\varepsilon>0$, some finite $R$ achieves a value
within $\varepsilon$, proving the claim.
\end{proof}

\subsection{Proof of Proposition~\ref{prop:temperature-tradeoff}}
\label{app:proof-tradeoff}

\noindent\textbf{Proposition~\ref{prop:temperature-tradeoff}
(restated).}
Fix $C_t=\{c_1,\dots,c_m\}$.
\begin{enumerate}[leftmargin=1.4em,label=(\alph*)]
\item For $x$ with margin
\[
\gamma(x)
=
\min_{j\ne\mathrm{nn}(x)}
d(x,c_j)
-
d(x,c_{\mathrm{nn}(x)})
>0,
\]
\[
1-v_{C_t}(x)[\mathrm{nn}(x)]
\le
(m-1)e^{-\gamma(x)/\tau}.
\]

\item With
\[
\mathrm{diam}(C_t)=\max_{j,k}d(c_j,c_k),
\]
for every $x$,
\[
\left\|
v_{C_t}(x)-\frac1m\mathds{1}
\right\|_1
\le
2\tanh\!\left(
\frac{\mathrm{diam}(C_t)}{4\tau}
\right)
\le
\frac{\mathrm{diam}(C_t)}{2\tau}.
\]
\end{enumerate}

\begin{proof}
For part (a), let
$c^*=c_{\mathrm{nn}(x)}$. Then
\[
v_{C_t}(x)[\mathrm{nn}(x)]
=
\frac{
1
}{
1+
\sum_{j\ne\mathrm{nn}(x)}
\exp\!\left(
-
\frac{
d(x,c_j)-d(x,c^*)
}{\tau}
\right)
}.
\]
By definition of $\gamma(x)$,
\[
d(x,c_j)-d(x,c^*)
\ge
\gamma(x)
\]
for every $j\ne\mathrm{nn}(x)$. Therefore
\[
v_{C_t}(x)[\mathrm{nn}(x)]
\ge
\frac{
1
}{
1+(m-1)e^{-\gamma(x)/\tau}
}.
\]
Hence
\[
1-v_{C_t}(x)[\mathrm{nn}(x)]
\le
\frac{
(m-1)e^{-\gamma(x)/\tau}
}{
1+(m-1)e^{-\gamma(x)/\tau}
}
\le
(m-1)e^{-\gamma(x)/\tau}.
\]

For part (b), let
\[
p=\frac1m\mathds{1},
\qquad
a_j=-d(x,c_j)/\tau.
\]
Tilting $p$ by $a$ gives exactly $q=v_{C_t}(x)$. Let $c_a$ and
$c_b$ be respectively the farthest and nearest candidates from $x$.
Then
\[
d(x,c_a)-d(x,c_b)
\le
d(c_a,c_b)
\le
\mathrm{diam}(C_t)
\]
by the triangle inequality. Thus
\[
\mathrm{osc}(a)
\le
\frac{\mathrm{diam}(C_t)}{\tau}.
\]
Lemma~\ref{lem:tv-tilt} gives
\[
\left\|
v_{C_t}(x)-\frac1m\mathds{1}
\right\|_1
\le
2\tanh\!\left(
\frac{\mathrm{diam}(C_t)}{4\tau}
\right).
\]
Finally, $\tanh z\le z$ for $z\ge0$, yielding
\[
\left\|
v_{C_t}(x)-\frac1m\mathds{1}
\right\|_1
\le
\frac{\mathrm{diam}(C_t)}{2\tau}.
\]
\end{proof}

\paragraph{Scope of Proposition~\ref{prop:temperature-tradeoff}.}
The proposition concerns one vote at one fixed candidate population.
It does not characterize the evolution of
$C_1,C_2,\dots$ under mutation and resampling, nor does it imply
convergence of the overall synthesizer. Those questions depend on the
full \texttt{ResampleMutate} dynamics. The sensitivity results of
Section~\ref{sec:theory}, in contrast, hold conditional on each fixed
population regardless of whether the population sequence converges.
The convergence analysis for the original hard-vote \PE has itself
required subsequent refinement~\cite{gonzalez2025peconverges}; we
therefore leave a corresponding analysis for soft voting to future
work.

\section{Additional Experimental Details and Results}
\label{app:experiments}

This appendix gives the policy rationale, experimental protocol, and
supporting results for Section~\ref{sec:experiments}.

\subsection{Policy Rationale}
\label{app:exp-policy}

We follow the style of Blowfish's motivating
examples~\cite{he2014blowfish}: protect a narrow range of
substitutions while allowing coarser distinctions to remain exposed.

For the numeric and ordinal attributes:
\begin{itemize}
\item The \texttt{age} and \texttt{hours.per.week} thresholds blur a
small number of years or hours without collapsing the full scale.
\item \texttt{education.num} on \Adult and \texttt{education} on
\Bank protect nearby schooling levels.
\item \texttt{balance}'s threshold of 500 protects balances that are
relatively close while leaving coarse differences, such as overdrawn
versus comfortably positive, distinguishable.
\end{itemize}

The binary labels \texttt{income} and \texttt{y} admit no nontrivial
intermediate graph: once their two distinct values are protected, the
attribute graph is complete. Protecting either therefore contributes
the attribute's full Gower weight to the record-level reach
(Remark~\ref{rem:participation}).

\paragraph{\texttt{workclass}: flat blocks.}
\texttt{workclass} on \Adult has nine values. We use a flat block
partition into private, government, self-employed, and other,
following the type of categorical grouping used in
Blowfish~\cite{he2014blowfish}. We do not add another hierarchy level.
For example, assigning a different policy distance to
\texttt{Federal-gov} and \texttt{Local-gov} would require an
additional privacy rationale not supplied by the attribute itself.
Thus \texttt{workclass} illustrates a categorical policy with coarse
blocks but no graded within-block distance.

\paragraph{\texttt{job}: hierarchical distance.}
\texttt{job} on \Bank has twelve values and supports a more structured
grouping. We use a two-level hierarchy with four broad labor-market
groups below the root:
\begin{itemize}
\item professional/technical:
\emph{management, technician, admin.};
\item manual/service:
\emph{blue-collar, services, housemaid};
\item independent:
\emph{self-employed, entrepreneur};
\item not in the labor force:
\emph{retired, student, unemployed, unknown}.
\end{itemize}
The twelve original categories form the leaves.

The \texttt{job} field is not coded using an external occupational
taxonomy, such as SOC, so this grouping is our construction rather
than an externally defined hierarchy. This is consistent with the
Blowfish model, in which the data owner specifies which distinctions
are protected.

We use the tree-path distance of \appref{app:categorical}. Two leaves
in the same parent group have distance $2$
(leaf--parent--leaf), while leaves in different groups have distance
$4$ (leaf--parent--root--parent--leaf). We normalize by the tree
diameter $4$ before using this distance in Gower distance.

A threshold policy on this tree has two nontrivial levels:
$\theta=2$ protects substitutions within the same broad group, while
$\theta=4$ protects every pair of job categories. This gives
\texttt{job} a genuine graded distance that the flat
\texttt{workclass} representation lacks.

\paragraph{\XORDS-3.}
The \XORDS-3 experiments protect $X_1$ only. Each input takes eight
ordinal values, $0,\ldots,7$. The default policy protects one-step
substitutions, while the temperature--reach experiments vary the
protected width from one to four steps. One step is $1/7$ of the
input's span, approximately $14.3\%$. With four equally weighted
modeled columns, it contributes record-level reach $1/28$.
The \XORDS-3 settings are therefore not defined by a common percentage
threshold and are not matched to the normalized reach or graph density
of \Adult or \Bank.

\subsection{Protocol and Reporting}
\label{app:exp-protocol}

\mypar{Implementation.}
All methods use the same implementation of
Algorithm~\ref{alg:unified}, equal-weight normalized Gower distance,
and hybrid selection. The standard experimental configuration uses
512 candidates in rounds 1--3 and 1{,}024 candidates in rounds 4--8.
Each run releases 512 records over eight rounds, with the privacy
budget divided equally across rounds. Random streams are paired across
methods.

\mypar{Repetitions.}
\begin{itemize}
\item Fixed-temperature budget sweeps use twelve paired runs at 17
budgets from $10^{-5}$ to $1$.
\item Temperature sweeps use six seeds at 20 temperatures from
$0.00025$ to $0.4$.
\item The smoothing-cost/noise-benefit decomposition uses twelve runs.
\item Matched-noise experiments use six seeds at 25 budgets.
\item \XORDS-3 temperature--reach experiments use six paired runs per
setting, cycling through three generated populations.
\item Public-pilot experiments use three pilot seeds and six paired
target seeds.
\item Broader-policy experiments use twelve paired runs.
\item The \XORDS-3 training-size experiment uses twelve independent
data/algorithm seed pairs, nested training prefixes, and a common
10{,}000-record test reference.
\end{itemize}

\mypar{Intervals and tests.}
Intervals are pointwise $95\%$ Student-$t$ intervals, not simultaneous
intervals. On \Adult and \Bank they quantify algorithmic randomness
for a fixed data split. For paired run $s$, relative error reduction is
\[
R_s
=
100\,
\frac{E_{H,s}-E_{S,s}}{E_{H,s}},
\]
and we report its mean $\bar R$. Thus we average the run-level
relative reductions rather than taking the ratio of the grand means.
Positive values favor \GraphSmooth.

Tests of zero mean paired reduction in
Section~\ref{sec:exp-policy} are two-sided. We apply Holm correction
over 680 prespecified primary contrasts. Results chosen retrospectively
as the best point in a parameter sweep are descriptive and are not
corrected for parameter selection.

\mypar{Sensitivity measurements.}
\begin{itemize}
\item
Figure~\ref{fig:exp-sensitivity-summary}\subref{fig:sens-temp}
uses a domain of eight binary attributes with single-bit substitution
edges. It covers four population sizes (32--256), three candidate
geometries, five seeds, and 22 temperatures including $\tau=0$, for
1{,}320 measurements. Each plotted curve is the median over the 60
configurations at a fixed temperature.

\item
Figure~\ref{fig:exp-sensitivity-summary}\subref{fig:sens-reach}
uses a 64-point ordinal domain, 20 candidates, six policy reaches, and
twelve candidate seeds. Bands show interquartile ranges.

\item
Figure~\ref{fig:exp-sensitivity-summary}\subref{fig:sens-runtime}
varies domain size and candidate count together over
$N\in\{64,128,\dots,8192\}$ with five paired seeds. Exact
computations evaluate the same $N\times N$ distance matrix and scan
the policy edges. These timings measure calibration only, not voting,
selection, mutation, or the full evolutionary loop.
\end{itemize}

\subsection{Fixed-Temperature Crossings}
\label{app:exp-crossings}

Table~\ref{tab:crossings} reports where the sampled mean errors of
\GraphSmooth and \Hard change order in
Figure~\ref{fig:exp-crossover}. Each value interpolates the mean
difference linearly in $\log\rho$ between two neighboring measured
budgets. No value extrapolates outside the experimental grid.
Consequently, these are descriptive between-sample crossing estimates,
not directly observed crossing budgets.


\subsection{Matched Noise Ratios}
\label{app:exp-equal-noise}

Figure~\ref{fig:exp-equal-noise} asks whether the noise ratio $q$
alone determines fidelity. Within each dataset, we hold $q$ fixed
while varying policy reach and temperature together.

For \Adult we use
\[
(\tau,\theta)
\in
\{(0.015,2),(0.03,4),(0.06,8)\}.
\]
For \Bank we set
\[
\tau
=
\frac{r_d(G)}
     {2\,\mathrm{atanh}(q)}
\]
at $\theta\in\{2,4,8\}$, giving
\[
(\tau,\theta)
\in
\{(0.01335,2),(0.02671,4),(0.05341,8)\}.
\]
Both datasets have $q=0.1806$.

For \XORDS-3 we use
\[
(\tau,\theta)
\in
\{(0.01,1),(0.02,2),(0.04,4)\},
\]
with reaches $1/28$, $2/28$, and $4/28$, respectively. All three have
$q=0.9453$. Thus $q$ is matched within the \XORDS-3 panel but not across
datasets. Each panel uses 25 budgets from $10^{-5}$ to $1$ and six
paired seeds.

The soft-vote curves differ despite equal noise ratios. Thus $q$
alone does not determine fidelity: temperature also changes the vote
and the subsequent evolutionary trajectory. Because the matched
settings protect different substitution ranges, this experiment
isolates equal-noise behavior rather than comparing mechanisms under
identical privacy policies.

\begin{figure*}[t]
 \centering
 \begin{subfigure}[b]{0.32\textwidth}
  \centering
  \includegraphics[width=\textwidth]{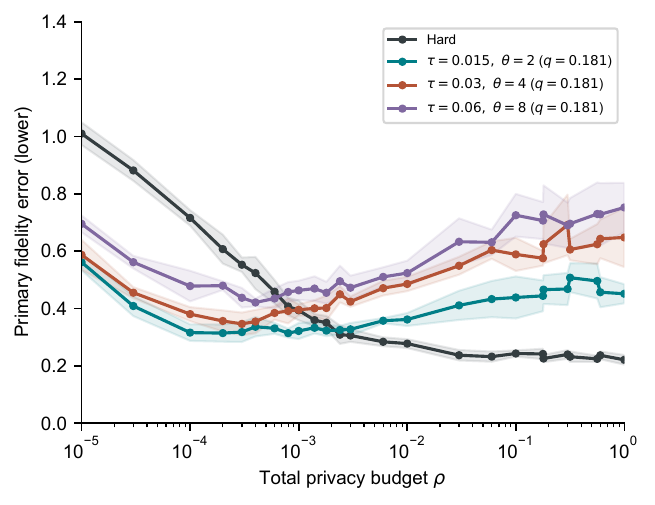}
  \caption{\Adult}
  \label{fig:eqnoise-adult}
 \end{subfigure}\hfill
 \begin{subfigure}[b]{0.32\textwidth}
  \centering
  \includegraphics[width=\textwidth]{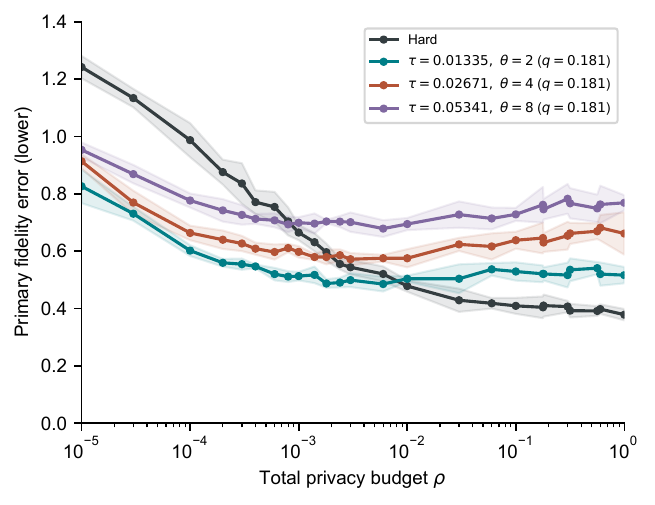}
  \caption{\Bank}
  \label{fig:eqnoise-bank}
 \end{subfigure}\hfill
 \begin{subfigure}[b]{0.32\textwidth}
  \centering
  \includegraphics[width=\textwidth]{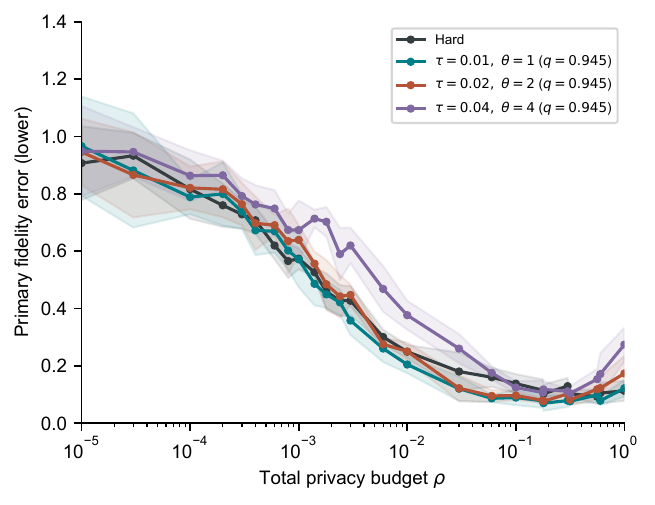}
  \caption{\XORDS-3}
  \label{fig:eqnoise-xor3}
 \end{subfigure}
 \caption{Error across privacy budgets at matched noise ratios.}
 \label{fig:exp-equal-noise}
\end{figure*}

\subsection{Temperature Sweeps on \Adult and \XORDS-3}
\label{app:exp-interior}

Figure~\ref{fig:exp-interior-full} gives the \Adult and \XORDS-3
temperature sweeps corresponding to the \Bank panels in
Figure~\ref{fig:exp-interior}. The numerical summaries in
Section~\ref{sec:exp-interior} use the lowest sampled
\GraphSmooth mean in each panel.

For \XORDS-3 at $\rho=3\times10^{-3}$, the best tested temperature is
$\tau=0.023$, with $q\approx0.65$. At $\rho=10^{-4}$, the absolute
improvement at the best tested setting is $0.0022$, illustrating that
a positive relative gain can still correspond to a small absolute
difference on this workload.

\begin{figure*}[t]
 \centering
 \begin{subfigure}[b]{0.32\textwidth}
  \centering
  \includegraphics[width=\textwidth]{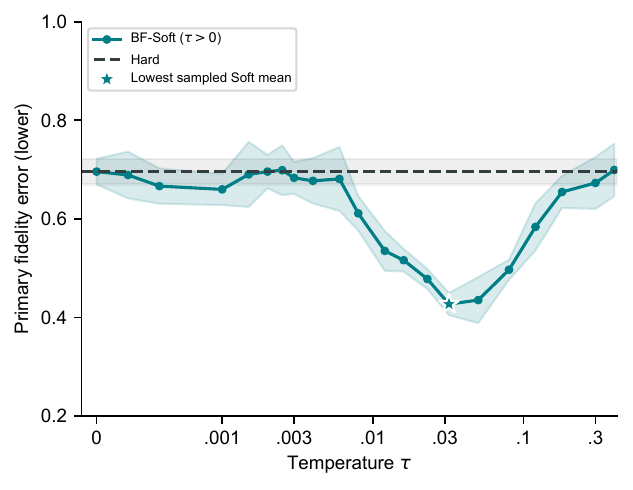}
  \caption{\Adult, $\rho=10^{-4}$}
  \label{fig:interior-adult-a}
 \end{subfigure}\hfill
 \begin{subfigure}[b]{0.32\textwidth}
  \centering
  \includegraphics[width=\textwidth]{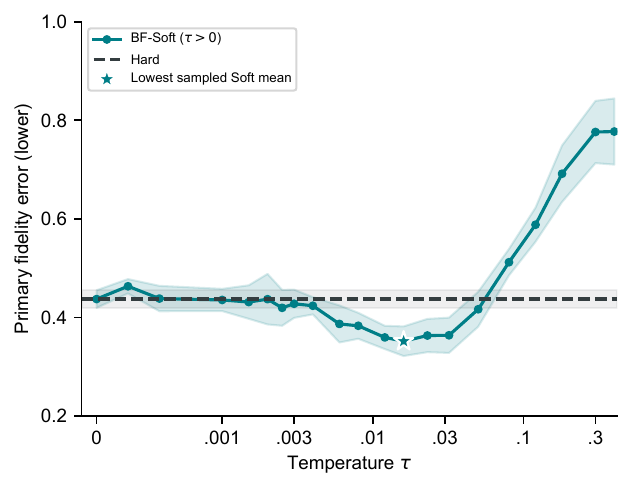}
  \caption{\Adult, $\rho=6\times10^{-4}$}
  \label{fig:interior-adult-b}
 \end{subfigure}\hfill
 \begin{subfigure}[b]{0.32\textwidth}
  \centering
  \includegraphics[width=\textwidth]{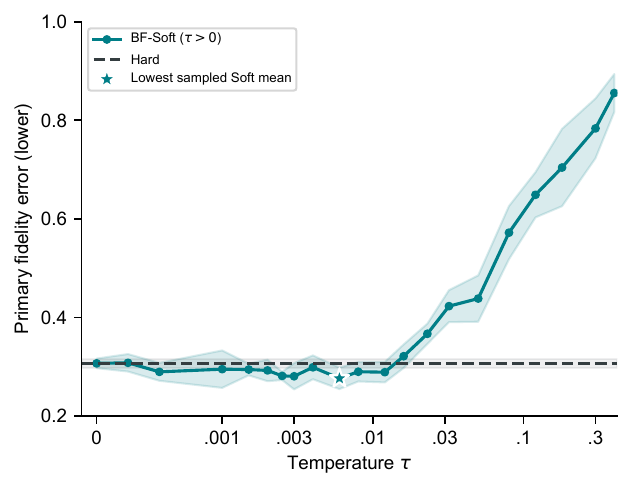}
  \caption{\Adult, $\rho=3\times10^{-3}$}
  \label{fig:interior-adult-c}
 \end{subfigure}

 \vspace{1em}

 \begin{subfigure}[b]{0.32\textwidth}
  \centering
  \includegraphics[width=\textwidth]{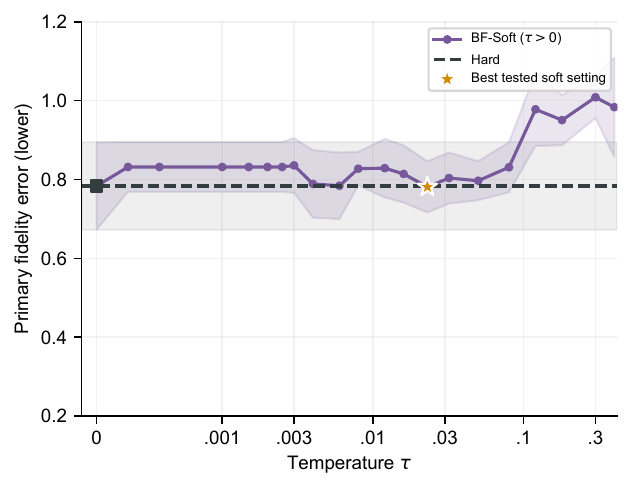}
  \caption{\XORDS-3, $\rho=10^{-4}$}
  \label{fig:interior-xor3-a}
 \end{subfigure}\hfill
 \begin{subfigure}[b]{0.32\textwidth}
  \centering
  \includegraphics[width=\textwidth]{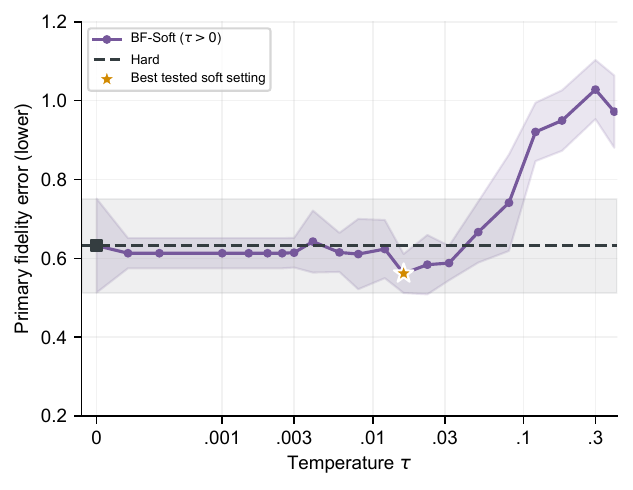}
  \caption{\XORDS-3, $\rho=6\times10^{-4}$}
  \label{fig:interior-xor3-b}
 \end{subfigure}\hfill
 \begin{subfigure}[b]{0.32\textwidth}
  \centering
  \includegraphics[width=\textwidth]{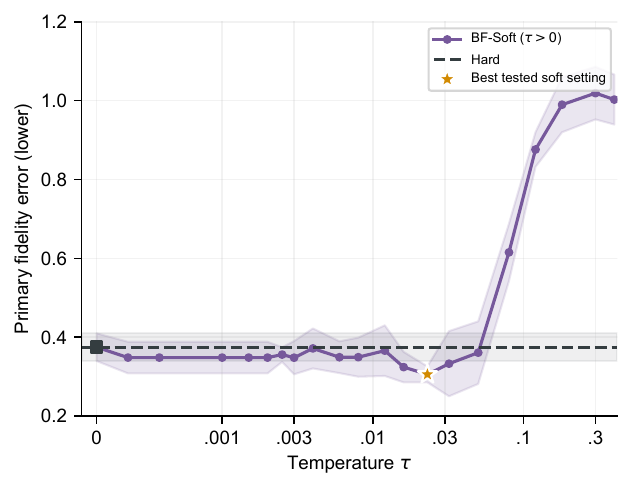}
  \caption{\XORDS-3, $\rho=3\times10^{-3}$}
  \label{fig:interior-xor3-c}
 \end{subfigure}

 \caption{Temperature sweeps on \Adult and \XORDS-3.}
 \label{fig:exp-interior-full}
\end{figure*}

\subsection{Temperature--Reach Interaction on \XORDS-3}
\label{app:exp-xor-interaction}

Figure~\ref{fig:exp-xor-interaction-full} jointly varies $\tau$ and
the protected width
$\theta\in\{1,2,3,4\}$ at three privacy budgets, with 48 evaluated
settings per panel.

The largest mean relative gains are $39.8\%$, $40.4\%$, and $19.9\%$
at $\rho=0.03$, $0.18$, and $1$, respectively
($\epsilon\approx1.2$, $3.1$, and $7.8$):
\begin{itemize}
\item at $\rho=0.03$, the largest measured mean gain occurs at
$\tau=0.003$ for the tied reaches
$\theta\in\{2,3,4\}$;
\item at $\rho=0.18$ and $\rho=1$, it occurs at
$(\tau,\theta)=(0.01,3)$.
\end{itemize}

At each of these settings,
\[
\frac{r_d(G)}{2\tau}>5,
\]
so $q\approx1$. The policy therefore saves essentially no noise.
The observed improvement instead comes from the effect of soft voting
on selection for this high-order workload. A soft vote calibrated to
standard DP would receive essentially the same noise at these
settings.

Only integer reaches from one to four steps were evaluated, and we did
not separately sample temperatures between the hard-vote limit
$\tau=0$ and the smallest positive tested temperature.

\begin{figure*}[t]
 \centering
 \begin{subfigure}[b]{0.32\textwidth}
  \centering
  \includegraphics[width=\textwidth]{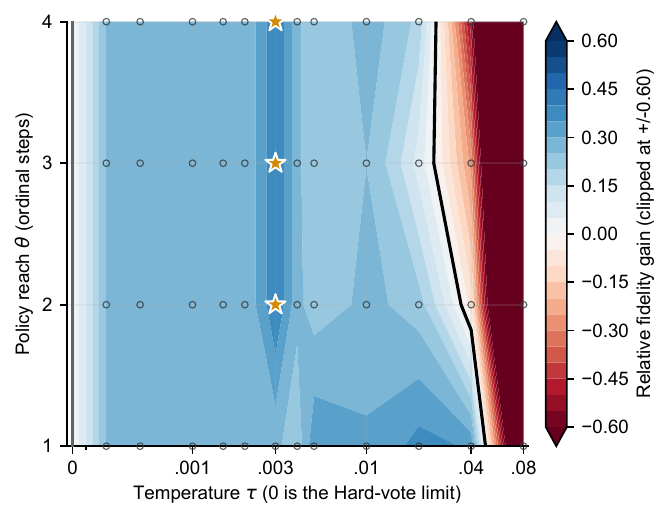}
  \caption{$\rho=0.03$}
  \label{fig:xorfull-a}
 \end{subfigure}\hfill
 \begin{subfigure}[b]{0.32\textwidth}
  \centering
  \includegraphics[width=\textwidth]{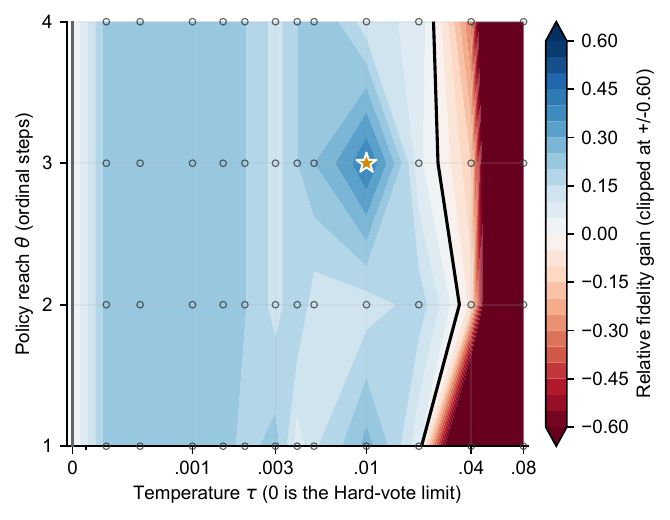}
  \caption{$\rho=0.18$}
  \label{fig:xorfull-b}
 \end{subfigure}\hfill
 \begin{subfigure}[b]{0.32\textwidth}
  \centering
  \includegraphics[width=\textwidth]{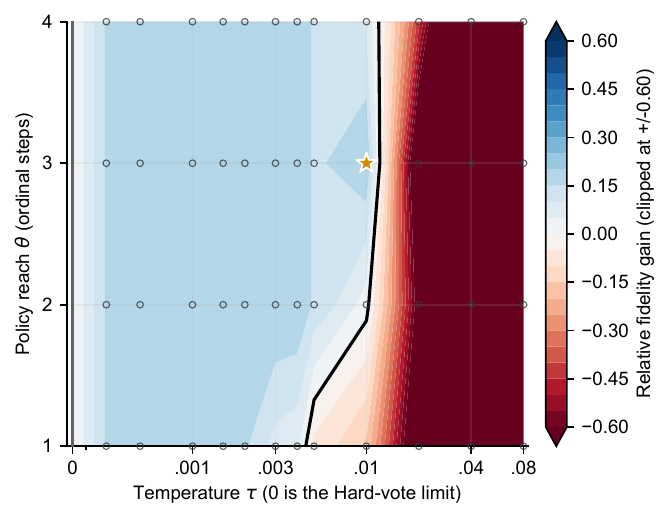}
  \caption{$\rho=1$}
  \label{fig:xorfull-c}
 \end{subfigure}
 \caption{\XORDS-3 relative gain over temperature and policy reach.
 Black contours interpolate zero mean gain.}
 \label{fig:exp-xor-interaction-full}
\end{figure*}

\subsection{Public-Pilot Forecasts}
\label{app:exp-forecast}

The public-data pilot in Section~\ref{sec:exp-tau-selection} uses
3{,}000 records disjoint from the private training sample. For
\Adult and \Bank these are held-out records from the same benchmark
distribution; \XORDS-3 uses independently generated public data.

The sign-accuracy result reported in
Section~\ref{sec:exp-tau-selection} uses the original 210-setting
\Adult grid. A later extension added lower temperatures to the plotted
grid in Figure~\ref{fig:exp-tau-selection}; those additional cells are
not included in the reported 190/210 sign-accuracy calculation.

The forecast agrees most closely with the observed target boundary at
the two smaller privacy budgets. A contour is absent when the mean gain
does not change sign on the evaluated grid. Where the forecast and
target contours disagree, the pilot predicts the wrong direction of
the mean difference.

The pilots come from the same benchmark distribution as the target.
These results therefore test whether a held-out public sample can
guide temperature selection under matched data generation, not whether
the procedure is robust to independently collected or shifted public
data.

\subsection{Broader Policies}
\label{app:exp-broader}
\label{app:extraExperiment}

The broader-policy experiments vary which of the selected attributes
are protected while keeping the modeled schema, Gower weights, and
evaluation workload fixed. A record-level policy edge still changes
only one coordinate. Protecting several attributes therefore means
taking the union of their single-attribute substitution edges; it does
not permit arbitrary simultaneous changes to multiple attributes.

These experiments concern the five modeled columns in
Table~\ref{tab:reference-sens}, not every column in the original UCI
datasets.

\paragraph{\Adult.}
At $\tau=0.03$ and $\rho=10^{-4}$, the \texttt{age} and
\texttt{hours.per.week} policies use thresholds of seven years and
eight hours. Their weighted reaches are $0.0192$ and $0.0163$,
respectively. Protecting both therefore leaves the record-level reach
at $0.0192$, the same as protecting \texttt{age} alone. Consequently,
the analytical calibration and paired trajectories are unchanged.

Adding attributes with larger reaches eventually changes this
behavior. Protecting all five modeled columns raises the reach to
$0.2$, increasing $q$ from $0.3092$ to $0.9975$ and leaving almost no
policy-specific noise reduction.

\paragraph{\Bank.}
For the broader-policy \Bank experiment, temperature is selected using
a public pilot before target evaluation. The pilot contains 3{,}000
validation records and uses a disjoint reference drawn from the
remaining validation records. Its budget is scaled according to
Eq.~\eqref{eq:exp-pilot-scaling}.

For each policy, the selector chooses among:
\begin{itemize}
\item \Hard;
\item
$\tau\in
\{0.001,0.003,0.015,0.03,0.06,0.12,0.3,0.6\}$;
\item the temperature giving $q=0.2$ for that policy.
\end{itemize}

Selection minimizes mean public error. Ties favor \Hard and then the
smaller temperature. No target errors are used during selection.

As discussed in Section~\ref{sec:exp-policy}, numeric-threshold
policies preserve substantial strong-privacy gains, while adding
attributes with larger categorical or binary reach reduces the
available noise saving. Pilot-selected gains under the broadest
policies do not survive the Holm correction and should be interpreted
as exploratory.

\subsection{Private Training Size}
\label{app:exp-trainsize}

Figure~\ref{fig:exp-trainsize} varies the \XORDS-3 private training size
over
\[
n\in\{3{,}000,12{,}000,48{,}000,100{,}000\}.
\]
The experiment keeps three inputs, 512 released records,
$\tau=0.03$, and one-step protection of $X_1$, giving
\[
r_d(G)=1/28,
\qquad
q=0.5337.
\]

At $\rho=0.003$, the mean paired error reduction is:
\begin{itemize}
\item $22.83\%$ ($[15.45,30.22]\%$) at $n=12{,}000$;
\item $37.37\%$ ($[25.19,49.55]\%$) at $n=48{,}000$;
\item $-89.18\%$ ($[-122.06,-56.30]\%$) at $n=100{,}000$.
\end{itemize}

Between $n=12{,}000$ and $n=100{,}000$, \Hard's mean error falls from
$0.389$ to $0.118$, while \GraphSmooth's falls from $0.297$ to
$0.214$. Both methods therefore improve with more private data, but
\Hard improves faster.

The privacy budget is fixed across training sizes. Because the
histogram noise scale does not grow proportionally with $n$, its size
relative to the empirical histogram shrinks as $n$ increases. Larger
$n$ therefore acts similarly to moving toward a weaker-noise regime,
where the cost of smoothing can dominate. This experiment concerns
one-step protection of $X_1$ at fixed parity order and does not imply
the same trend for every policy or workload.

\begin{figure}[t]
 \centering
 \includegraphics[width=\columnwidth]
 {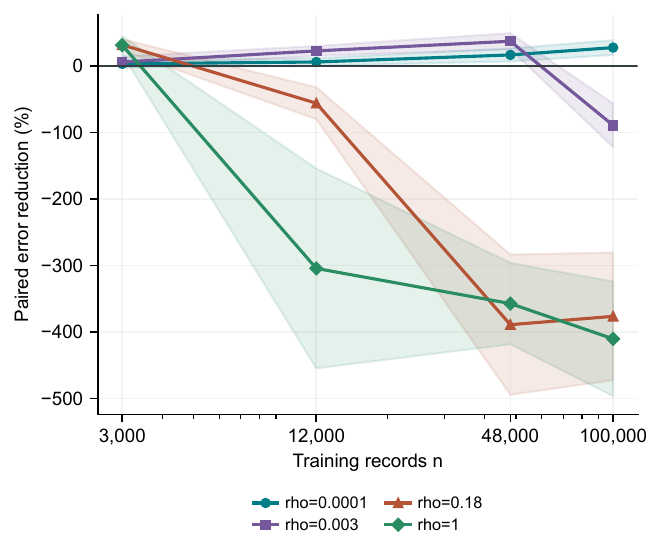}
 \caption{\XORDS-3 paired error reduction relative to \Hard across private
 training sizes.}
 \label{fig:exp-trainsize}
\end{figure}
\fi

\end{document}
\endinput